\newcommand{\vx}{\mathbf{x}}
\newcommand{\vy}{\mathbf{y}}
\newcommand{\abs}[1]{\left|#1\right|}
\newcommand{\floor}[1]{\left\lfloor#1\right\rfloor}
\newcommand{\upcl}[1]{\mathop{\uparrow}{#1}}
\newcommand{\dist}{{\cal D}}
\newcommand{\act}{{\sf Act}}
\newcommand{\mat}{{\sf Mat}}
\newcommand{\cpath}{{\sf CPath}}
\newcommand{\stoc}{{\sf Stoc}}
\newcommand{\Pb}{\mathbb{P}}
\DeclareMathOperator*{\argmin}{argmin}
\DeclareMathOperator*{\argmax}{argmax}
\newcommand{\reach}{\mathsf{Reach}}
\newcommand{\ket}[1]{|#1\rangle}
\newcommand{\bigket}[1]{\big|#1\big\rangle}
\newcommand{\biggket}[1]{\Big|#1\Big\rangle}
\newcommand{\lrket}[1]{\left|#1\right\rangle}
\newcommand{\Dsched}{\dist_\mathsf{sched}}
\newcommand{\Dconf}{\dist_\mathsf{conf}}
\newcommand{\Dtrans}{\dist_\mathsf{trans}}
\newcommand{\Dmass}{\dist_\mathsf{mass}}
\newcommand{\Dchan}{\dist_\mathsf{chan}}
\newline\textbf{BEGIN: AUX-PROOF}\dotfill\newline}
\newline\textbf{END: AUX-PROOF}\dotfill\newline}
\newcommand{\myparagraph}[1]{\smallskip\noindent \textbf{#1}\;}
\newcommand{\csct}{\ensuremath{(\mathbf{CS,CT})}}
\newcommand{\msct}{\ensuremath{(\mathbf{MS,CT})}}
\newcommand{\csmt}{\ensuremath{(\mathbf{CS,MT})}}
\newcommand{\msmt}{\ensuremath{(\mathbf{MS,MT})}}
\newtheorem{problem}[theorem]{Problem}{\bfseries}{\rmfamily}
\newtheorem{notation}[theorem]{Notation}
\crefname{problem}{Problem}{Problems}
\crefname{theorem}{Thm.}{Thms}
\crefname{definition}{Def.}{Defs}
\crefname{proposition}{Prop.}{Props}
\crefname{remark}{Rem.}{Remarks}
\crefname{lemma}{Lem.}{Lemmas}
\crefname{notation}{Notation}{Notations}
\crefname{example}{Example}{Examples}
\crefname{proof}{Proof.}{Proofs}
\crefname{example}{Ex.}{Exs}
\crefname{appendix}{Appendix}{Appendixes}
\crefname{algorithm}{Alg.}{Algs}
\crefname{figure}{Fig.}{Figs}
\Crefname{equation}{}{}
\crefname{table}{Table}{Tables}
\crefname{line}{Line}{Lines}
\title{Chance and Mass Interpretations of Probabilities in Markov Decision Processes (Extended Version)}
\titlerunning{Chance and Mass Interpretations in MDPs} 
\author{Yun Chen Tsai}{National Institute of Informatics, Tokyo, Japan \and SOKENDAI (The Graduate University for Advanced Studies), Kanagawa, Japan}{yctsai@nii.ac.jp}{https://orcid.org/0009-0003-7705-9609}{}
\author{Kittiphon Phalakarn}{National Institute of Informatics, Tokyo, Japan}{kphalakarn@nii.ac.jp}{https://orcid.org/0009-0006-5406-7480}{}
\author{S.\ Akshay}{Department of CSE, Indian Institute of Technology Bombay, Mumbai, India}{akshayss@cse.iitb.ac.in}{https://orcid.org/0000-0002-2471-5997}{}
\author{Ichiro Hasuo}{National Institute of Informatics, Tokyo, Japan \and SOKENDAI (The Graduate University for Advanced Studies), Kanagawa, Japan \and Imiron Co., Ltd., Tokyo, Japan}{hasuo@nii.ac.jp}{https://orcid.org/0000-0002-8300-4650}{}
\authorrunning{Y.C.\ Tsai et. al.} 
\keywords{MDP, distribution transformer, antichain, template-based synthesis} 
\begin{document}

\maketitle

\begin{abstract}
   

Markov decision processes (MDPs) are a popular model for decision-making in the presence of uncertainty. The conventional view of MDPs in verification treats them as state transformers with probabilities defined over sequences of states and with schedulers making random choices. An alternative view, especially well-suited for modeling dynamical systems, defines MDPs as distribution transformers with schedulers distributing probability masses.
Our main contribution is a unified semantical framework that accommodates these two views and two new ones. These four semantics of MDPs arise naturally through identifying different sources of randomness in an MDP (namely \emph{schedulers}, \emph{configurations}, and \emph{transitions}) and providing different ways of interpreting these probabilities (called the \emph{chance} and \emph{mass} interpretations). These semantics are systematically unified through a mathematical construct called \emph{chance-mass (CM) classifier}. As another main contribution, we study a reachability problem in each of the two new semantics, demonstrating their hardness and providing two algorithms for solving them.

\end{abstract}

\section{Introduction}\label{sec:intro}

\myparagraph{MDPs and Conventional Semantics}
Markov decision processes (MDPs) are a classical model combining nondeterminism and randomness, with a rich theory and practical uses in a wide variety of applications~\cite{DBLP:books/wi/Puterman94}. MDPs are often viewed as state transformers where a move from one state to another is defined as follows: from a given state, an action is chosen by a scheduler out of multiple possible actions and then a probabilistic transition is made leading to a resulting state. In other words, a scheduler resolves the nondeterminism in the MDP to give rise to a Markov chain (MC), a purely probabilistic model. In general, a scheduler can also be randomized (also called a mixed scheduler), and thus the actions are chosen according to the probabilities it defines. See~\cite{Baier2008} for more discussion.


\myparagraph{MDPs as Distribution Transformers}
A different semantics of MDPs (and Markov chains), inspired by dynamical systems, considers them as \emph{distribution transformers}. In this view, we start with a probability distribution or mass over the set of all states of the MDP, and according to the action chosen by a scheduler, this mass gets re-distributed over the states.
Therefore, a one-step transition results in a new probability distribution over the set of states, in which sense  the MDP is a linear transformer of distributions. This semantics is studied in multiple recent works~\cite{DBLP:journals/tse/KwonA11,DBLP:conf/qest/KorthikantiVAK10,DBLP:conf/qest/ChadhaKVAK11,DBLP:conf/birthday/AghamovBKNOPV25,LMS14,DBLP:journals/jacm/AgrawalAGT15,DBLP:conf/lics/AkshayGV18,Akshay2023,Akshay2024} with applications ranging from AI, where swarms of robots are modeled, to biological systems, where concentrations of biochemical reagents or populations of cells are modeled. This semantics is also common, e.g., in statistics and the theory of dynamical systems, where mathematicians are often interested in \emph{stationary distributions}, which are essentially fixed points of the distribution transformer. 

\myparagraph{Reachability Problems in MDPs} The \emph{reachability problem} is fundamental in model checking. In the context of MDPs, it asks whether there exists a scheduler such that the MDP, following this scheduler from a given state, reaches a given \emph{target state} with a certain probability. It is known that the reachability problem for MDPs can be solved in polynomial time, and well-established probabilistic model checkers such as PRISM~\cite{DBLP:conf/qest/KwiatkowskaNP12} and Storm~\cite{DBLP:journals/sttt/HenselJKQV22} can handle large instances efficiently. 

However, it has been observed that when the reachability problem on MDPs is ``lifted'' and is asked under the distribution transformer semantics, 
it becomes significantly harder. The ``lifted'' problem asks for the reachability probability to a \emph{target distribution}, instead of to a target state. The problem is significantly harder even for Markov chains; in particular, it is as hard as the Skolem problem, a long-standing open problem on linear recurrences~\cite{DBLP:journals/siglog/OuaknineW15,DBLP:conf/rp/OuaknineW12,Akshay2015}. 
%




Motivating from these observations, we aim to develop a systematic framework that describes the two semantics and the ``lifted'' reachability problem. This leads to our main contribution: a unified framework for MDPs semantics.


\begin{table}[tbp]\centering
  \caption{Examples of the four semantics of MDPs}\vspace{-0.5em}
  \label{table:fourVarExampleIntro}
  \scalebox{0.8}{
    \centering
    \begin{tabular}{ccc}\toprule
    &\csct{} semantics&\csmt{} semantics\\\cmidrule{2-3}
    \multirow[c]{3}{*}[10pt]{\begin{tikzpicture}[scale=0.15]
    \begin{scope}[local bounding box=mdp]
        \tikzstyle{every node}+=[inner sep=0pt]
    \draw [black] (28.6,-25.9) circle (3);
    \draw (28.6,-25.9) node {$q_{0}$};
    \draw [black] (36.7,-18.7) circle (3);
    \draw (36.7,-18.7) node {$q_{1}$};
    \draw [black] (36.7,-33.8) circle (3);
    \draw (36.7,-33.8) node {$q_{2}$};
    \draw [black] (35.706,-21.515) arc (-29.4034:-67.32952:8.635);
    \fill [black] (35.71,-21.51) -- (34.88,-21.97) -- (35.75,-22.46);
    \draw (36.63,-24.22) node [below] {$a,\mbox{ }0.5$};
    \draw [black] (28.942,-22.945) arc (-199.54446:-257.18846:6.638);
    \fill [black] (33.73,-18.69) -- (32.83,-18.38) -- (33.06,-19.36);
    \draw (28.02,-19.72) node [above] {$b,\mbox{ }0.1$};
    \draw [black] (31.444,-26.819) arc (63.35145:28.08087:9.836);
    \fill [black] (35.71,-30.98) -- (35.77,-30.04) -- (34.89,-30.51);
    \draw (36.61,-28.09) node [above] {$a,\mbox{ }0.5$};
    \draw [black] (33.735,-33.51) arc (-107.29404:-161.27364:7.342);
    \fill [black] (33.73,-33.51) -- (33.12,-32.79) -- (32.82,-33.75);
    \draw (28.02,-32.24) node [below] {$b,\mbox{ }0.9$};
    \end{scope}
    \draw (mdp.south)+(0,-2) node[below] {With Scheduler $0.4a + 0.6b$};
\end{tikzpicture}

    &\includegraphics[width=.3\textwidth,valign=c]{./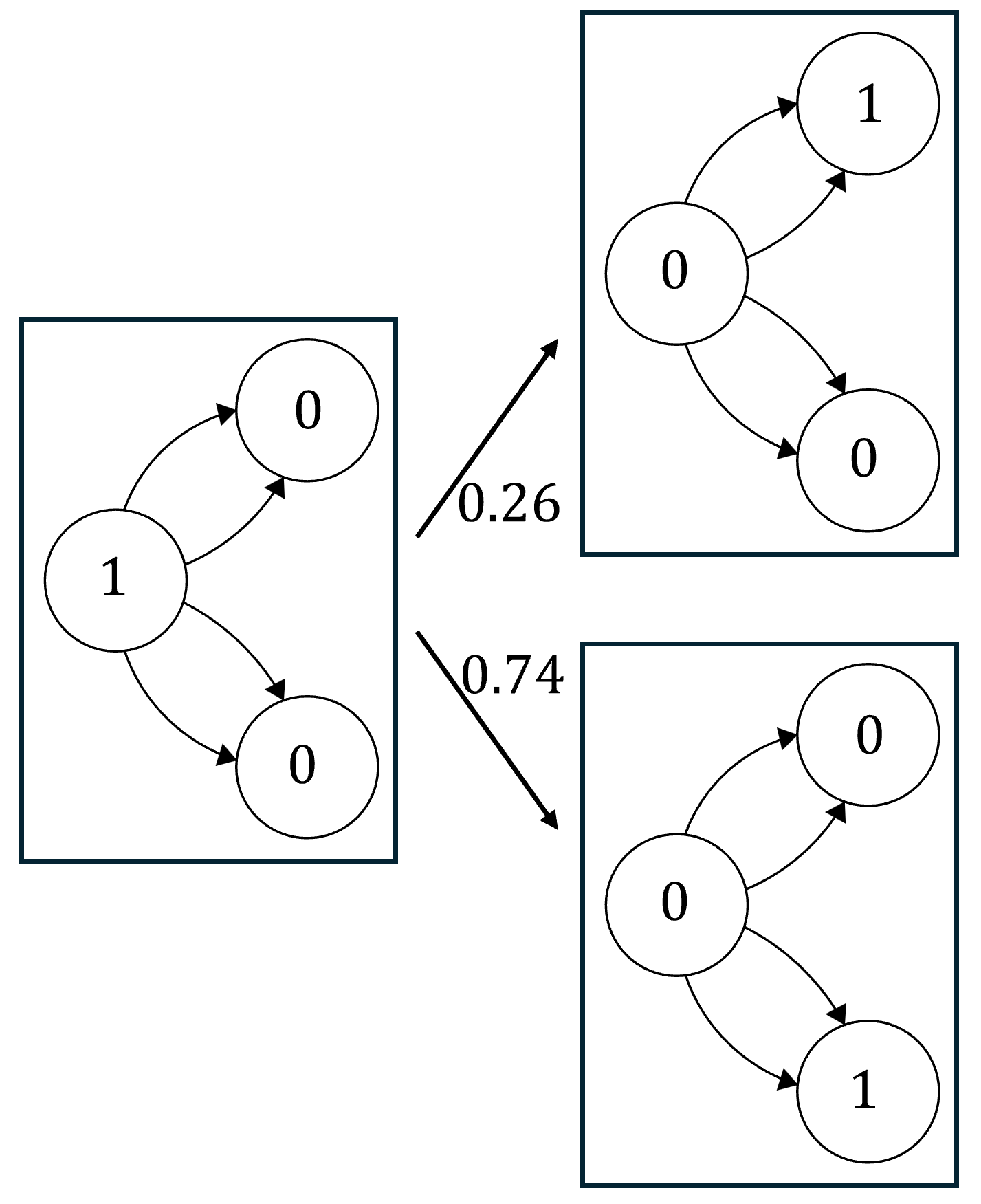}%
    &\includegraphics[width=.3\textwidth,valign=c]{./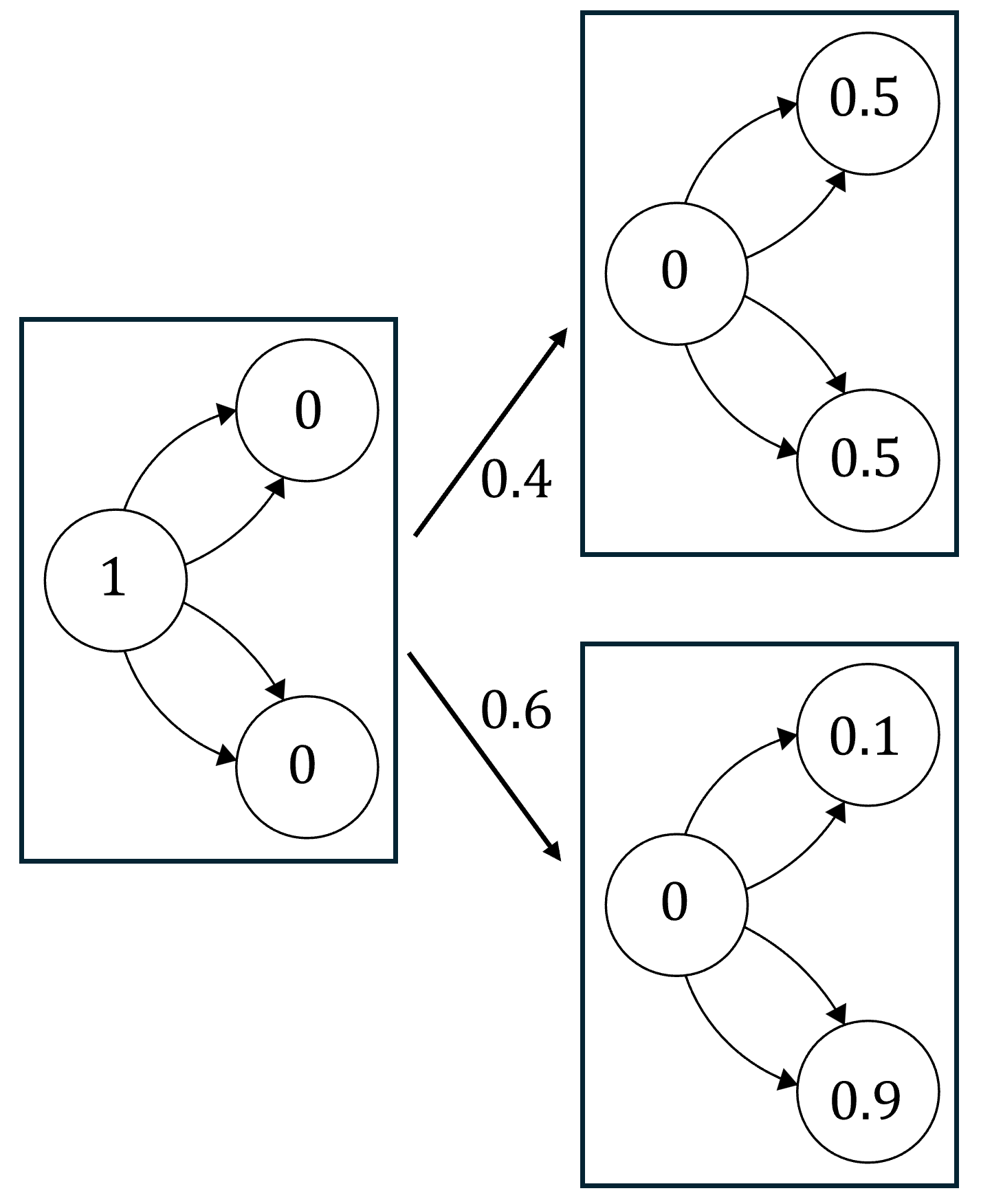}\\\cmidrule{2-3}
    &\msct{} semantics&\msmt{} semantics\\\cmidrule{2-3}
    &\includegraphics[width=.3\textwidth,valign=c]{./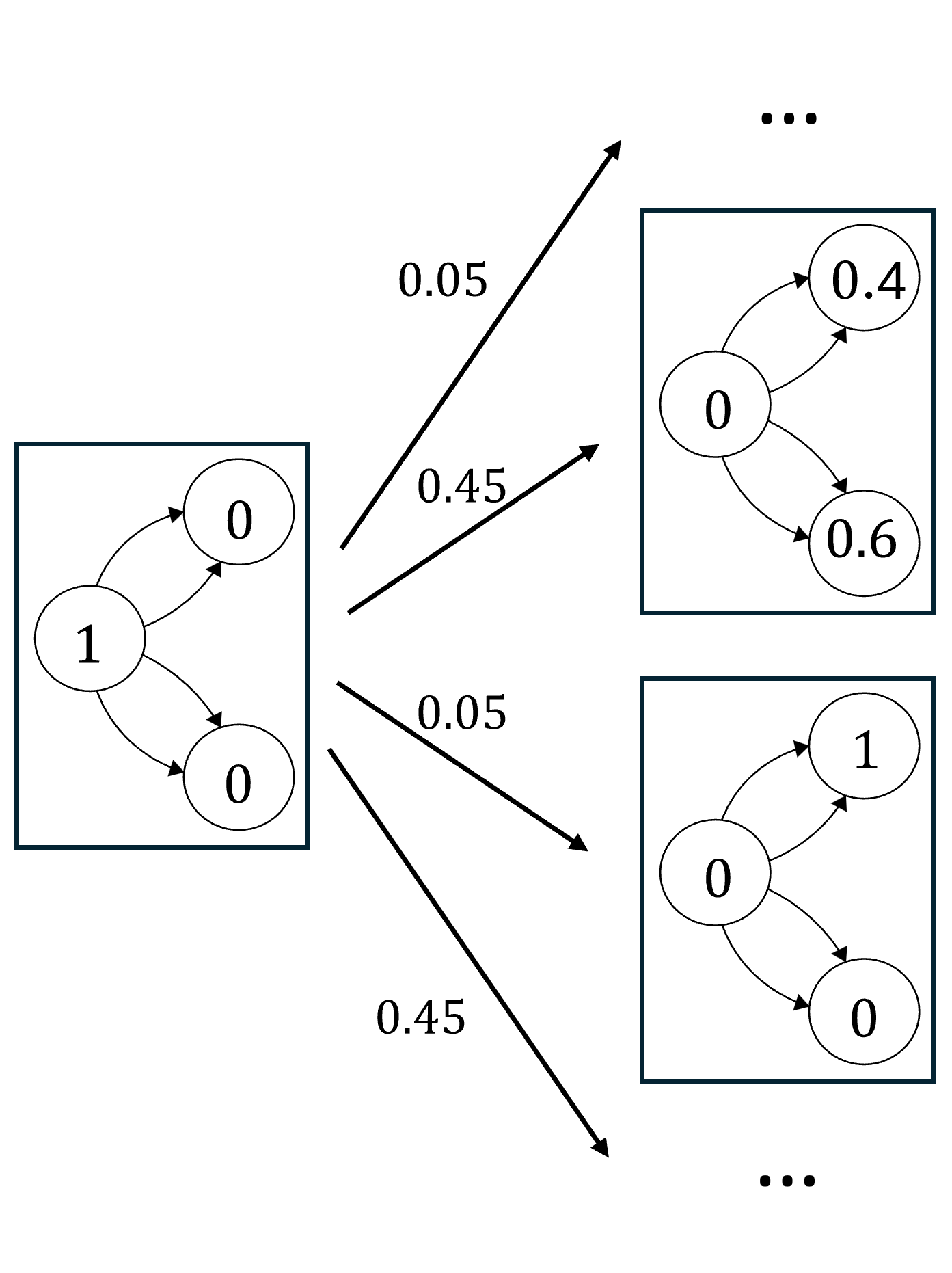}%
    &\includegraphics[width=.3\textwidth,valign=c]{./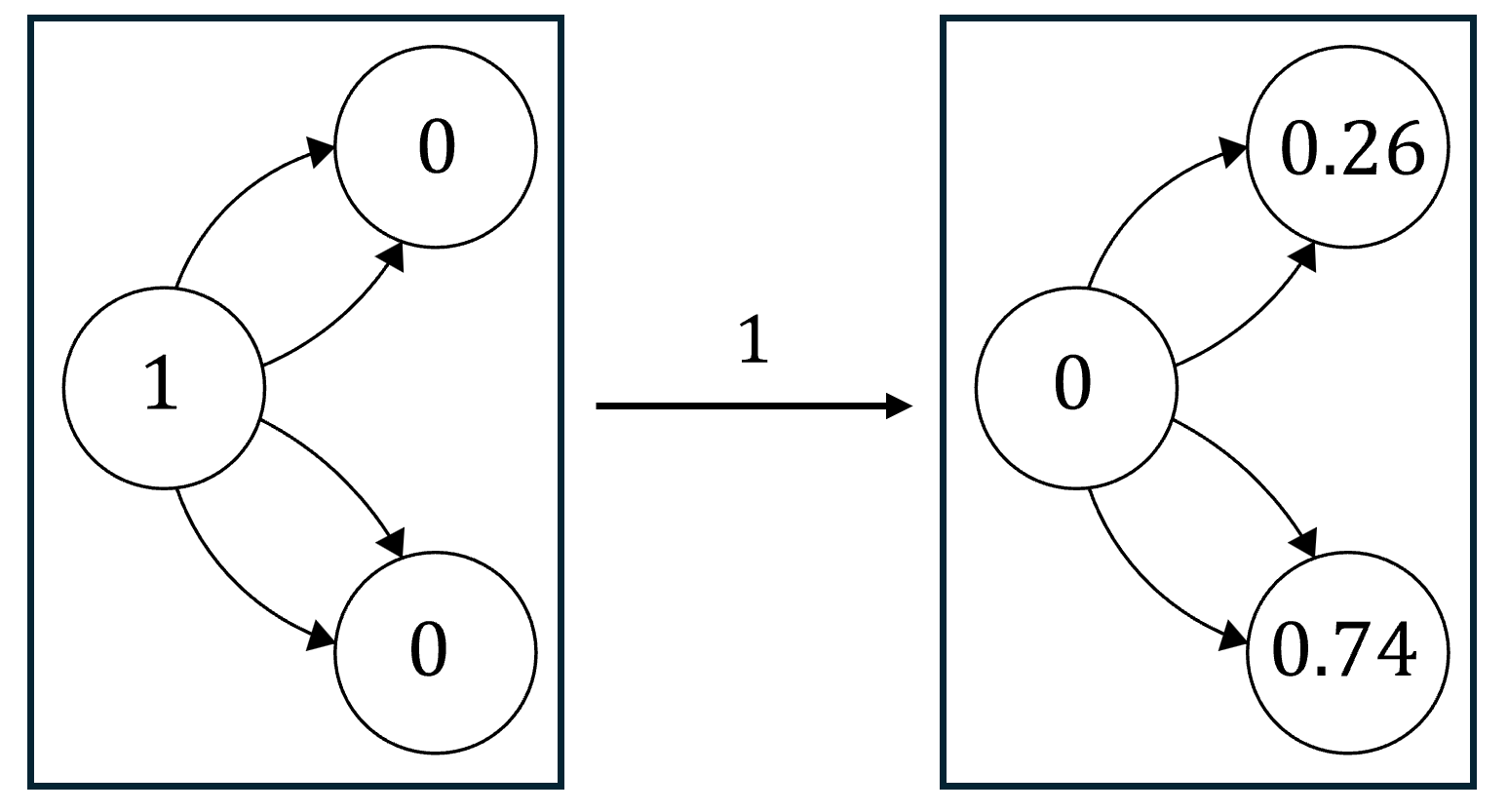}\\\bottomrule
    \end{tabular}
  }
\end{table}

\myparagraph{Four Semantics for MDPs: a Unified Framework}
We start by observing that there are two sources of probabilities in MDPs: those arising from randomized/mixed schedulers (such as $0.4$ and $0.6$ in \cref{table:fourVarExampleIntro}) and the inherent transition probabilities (such as $0.1$ and $0.9$ in \cref{table:fourVarExampleIntro}). On the other hand, these probabilities can be interpreted in two ways: the conventional manner, which we call the \emph{chance interpretation}, models probabilities as random events like coin tosses; and the ``distribution transformer'' manner, which we call the \emph{mass interpretation}, models the probabilities as the rate of splitting a mass of particles.

It is then natural to consider four variations of MDP semantics, where each of these two classes of probabilities---coming from the above two sources---is interpreted under either the \emph{chance} or \emph{mass} interpretation (cf.\ \cref{table:fourVarExampleIntro}). For example, the conventional semantics corresponds to applying the \emph{chance} interpretation to both scheduler and transition probabilities, as both schedulers and transitions are modeled as random events (like coin tosses). Similarly, the distribution transformer semantics corresponds to applying the \emph{mass} interpretation to both scheduler and transition probabilities.

Let us explain the illustration in \cref{table:fourVarExampleIntro} in more detail. The left side of \cref{table:fourVarExampleIntro} shows a simple MDP with three states and two possible actions from $q_{0}$, where each action leads to a probabilistic transition over states $q_{1}$ and $q_{2}$. A mixed scheduler is fixed, which chooses action $a$ with probability $0.4$ and action $b$ with probability $0.6$. Each of the four Markov chains (MCs) shown on the right side of \cref{table:fourVarExampleIntro}, corresponding to each semantics, is called the \emph{config MC} for the MDP and the specified scheduler. The intuition here is that these config MCs are  Markov chains where each state represents a mass distribution over the state space of the MDP on the left, and they model how mass distributions over the MDP states evolve under the specified interpretation of probabilities.


Let's take the \csmt{}, an abbreviation for \textbf{C}hance-interpreted \textbf{S}cheduler, \linebreak \textbf{M}ass-interpreted \textbf{T}ransitions, semantics as an example. As suggested by its naming, this semantics interprets the probability from the scheduler as a random event, and the probability from the transition as a rate of splitting mass. In one step of the MDP, an (unfair) coin is first tossed to decide the action to be taken by the scheduler: a $0.4$ chance of choosing action $a$ and a $0.6$ chance of choosing action $b$. This corresponds to the two transitions of the config MC as illustrated in \cref{table:fourVarExampleIntro}. Once the action is determined, the mass of $1$ at $q_{0}$ is then split according to the transition probabilities, which yields a distribution of $0.5$ at $q_{1}$ and $0.5$ at $q_{2}$ in the case of action $a$ (the $0.4$ branch). The process for the case of action $b$ is similar. These two distributions are the only distributions that can be reached by the MDP after one step if the probabilities are interpreted as specified.

\cref{table:fourVarExampleIntro} also clearly demonstrates that the reachability problem has different answers under different semantics. For example, the mass distribution ``$0.26$ at $q_1$, $0.74$ at $q_2$'' is reached only under the \msmt{} semantics; the mass distribution ``$1$ at $q_{1}$'' is reached with the probability $0.27$ under the \csct{} semantics, while it is with the probability $0.05$ under the \msct{} semantics.

\myparagraph{Examples}
All four semantics are not only theoretically natural, but also lead to different application scenarios. For the purely random \csct{} semantics, examples are abundant in the CS literature~\cite{Baier2008}, and similarly for the purely distributional \msmt{} semantics, examples are found in the dynamical systems and AI literature~\cite{Akshay2024}. 

\begin{figure}[tbp]\centering
  \scalebox{.8}{\begin{minipage}{.6\textwidth}
\includegraphics[width=.6\textwidth]{./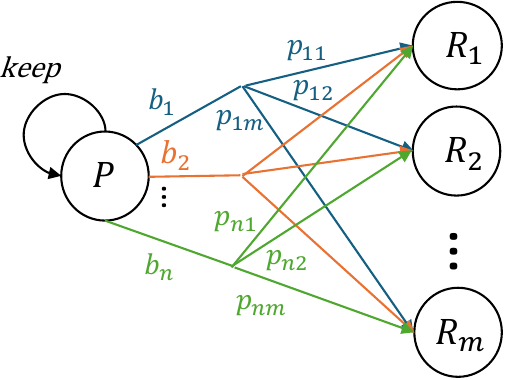}
    \end{minipage}
  }
  \hfill
  \scalebox{.8}{\begin{minipage}{.6\textwidth}
\includegraphics[width=.75\textwidth]{./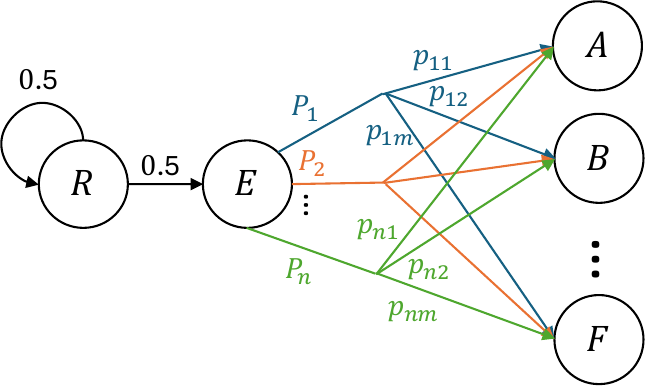}
    \end{minipage}
  }
  \caption{\emph{Casino} MDP for \msct{} sem.\ (left); \emph{exam} MDP for \csmt{} sem.\ (right)}\vspace{-1em}
  \label{fig:casino}
  \label{fig:exam}
\end{figure}

\myparagraph{Casino} An example suited for the \msct{} semantics is the \emph{casino} MDP in \cref{fig:casino}. Here, $P$ is the initial state, from which actions $b_{i}$ ($i\in \{1,\dotsc, n\}$, ``bet at Game $i$'') and $\mathsf{keep}$ (``skip this round'') are offered. Each bet $b_{i}$ can lead to various rewards---represented by states $R_{j}$---and the probabilities $p_{ij}$ are known. What is notable about the \msct{} semantics  is that a mixed scheduler $r_{\mathsf{keep}}\mathsf{keep} + r_{1}b_{1}+\cdots + r_{n}b_{n}$ nicely models a gambler's strategy to split the fund for different games (betting the $r_{1}$ portion of the fund at Game 1, the $r_{2}$ portion at Game 2, etc.). In contrast, the transition probabilities $p_{ij}$ should be interpreted in the chance, not mass, manner. This is because the rewards from each game are random events and only one outcome is produced at each game $i$ in each round (for example you either win or lose with a slot machine, instead of getting a half-and-half outcome). Another point illustrated here is that the scheduler can choose its action depending on the current mass. For example, a possible strategy is to spare half the original fund for the second round (using a strategy $\frac{1}{2}\mathsf{keep}+\cdots$), and to decide how to use it depending on the outcome of the first round (i.e.,\ the mass over the states $R_{1},\dotsc, R_{m}$ after one transition).

\myparagraph{Exam} As an example suited for the \csmt{} semantics, we present the \emph{exam} MDP in \cref{fig:exam}. Here, one chooses a problem set $P_{i}$ from a given repertoire $P_{1},\dotsc, P_{n}$, as commonly done in language tests such as TOEFL and IELTS. Difficulties can differ among problem sets: with the problem set $P_{i}$, the $p_{ij}$ portion of the examinees gets the grade $j\in \{A,B,\dotsc, F\}$, and  this number $p_{ij}$ is known. Exams are repeated, and different problem sets can be used at each time. The state $R$ with its transitions models that the mass of potential examinees decays exponentially over time. Here, a mixed scheduler $r_{1}P_{1}+\cdots+r_{n}P_{n}$ models the strategy that the problem set $P_{i}$ is chosen with the probability $r_{i}$. Each time, one uses the same problem set for all examinees, so $r_{i}$ should be chance-interpreted. In contrast, the transition probabilities $p_{ij}$ should be mass-interpreted: every individual has a chance of getting a certain grade while the overall distribution over grades should follow the distribution indicated by the paper's difficulties.


\myparagraph{Contributions}
In this paper, we are interested in 1) formulating a unified framework that captures all four semantics and 2) solving the \emph{threshold reachability} problem under the new semantics we introduce. 
Given an MDP $\mathcal{M}$ and a scheduler $\sigma$, choosing a semantics (out of the four) induces the so-called \emph{config MC} $\mathcal{M}_{\sigma}$; then the threshold reachability problem asks
if a target set $H$ can be reached with probability greater than a given threshold $\xi$. As mentioned earlier, this problem is poly-time solvable for the \csct{} semantics, while being Skolem hard for the \msmt{} semantics. Hence, it is natural to ask how hard it is for the new \csmt{} and \msct{} semantics, and if we can develop approaches to solve it.

Our contributions (as summarized in Table~\ref{table:fourVarIntro}) are as follows. 
\begin{table}[tbp]
  \caption{\small Summary of semantics/results: Blue are our contributions, Grey were previously known}\label{table:fourVarIntro}
  \centering
  \begin{tabular}{cc|c|c}
    \toprule
                                     &                               & \multicolumn{2}{c}{{\bf T}ransition}                                                        \\
                                     &                               & {\bf C}hance-interpreted                      & {\bf M}ass-interpreted                      \\
    \midrule
    \multirow{4}{*}{{\bf S}cheduler} & {\bf C}hance-interpreted      & \csct{} semantics                             & \csmt{} semantics                           \\
                                     &                               & \textcolor{gray}{Ptime}                       & \textcolor{blue}{Undecidable, antichain algo} \\
    \cmidrule{2-4}
                                     & {\bf M}ass-interpreted        & \msct{} semantics                             & \msmt{} semantics                           \\
                                     &                               & \textcolor{blue}{$\#P$-hard, template algo}
                                     & \textcolor{gray}{Skolem-hard}
    \\
    \bottomrule
  \end{tabular}
\vspace{-2em}
\end{table}
\begin{enumerate}
  \item  A unified framework where different semantics for MDPs are systematically derived
by appropriately specifying the so-called \emph{Chance-Mass (or CM) classifier}. 
  \item For the \csmt{} semantics, we show that this problem is undecidable under a global scheduler and we build an antichain-based backward reachability algorithm for finite union of finitely-generated monotone sets that is sound but necessarily incomplete.
  \item For the \msct{} semantics, we show the problem is $\#P$-hard---the counting counterpart of NP, see e.g.~\cite{AroraB09}---even for a single action in the MDP, and we develop a template-based submartingale which provides a sound 
 algorithm for reachability.
  \item We implemented both these algorithms in a prototype. This illustrates the working of our algorithms on a small set of examples including the motivating examples above.
\end{enumerate}


\myparagraph{Related Work}
Beyond the related work already discussed, here we add some more context and mention a few others. First, the use of distributional \msmt{} semantics for MDPs has been long studied, but mostly from a theoretical point of view. See e.g.,~\cite{DBLP:conf/qest/KorthikantiVAK10}.  Recently, a template-based approach is implemented in~\cite{Akshay2023} for safety constraints and~\cite{Akshay2024} for reach-avoidance, where the authors use Farkas Lemma~\cite{farkas1902theorie} and Handelman's~\cite{handelman1988representing} theorem to develop sound but incomplete algorithms via synthesizing ranking functions. 
 For the \csmt{} semantics, we use $\gamma$-scaled submartingales for reachability as done e.g., in~\cite{Takisaka2021}. Antichain algorithms have been developed for LTL synthesis e.g., in~\cite{antichain-raskin}, but to the best of our knowledge, we are the first to use them in the context of MDPs.

\section{Preliminaries}\label{sec:prelim}
We let $\mathbb{N}, \mathbb{Q}, \mathbb{R}$ denote the sets of natural numbers, rational numbers and real numbers, respectively.
The unit interval is denoted by $[0,1]$, and we let $[n]=\{1,2, \dotsc, n \}$.
For sets $A$ and $B$,  $2^{A}$ denotes the powerset of $A$, and $B^{A}$ denotes the set of  functions $f:A\to B$.
%
%
The collection of all finite (or infinite) sequences over $A$ is denoted by $A^{\ast}$ (or $A^{\omega}$, respectively). $A^{+}\subseteq A^{\ast}$ is the set of nonempty finite sequences.

\myparagraph{Probability Distributions}
%
For any set $X$, we let $\dist(X)=\{d\colon X\to [0,1]\mid \sum_{x}d(x)=1, \mathrm{supp}(d) \text{ is finite}\}$ denote the set of \emph{finitely-supported discrete probability distributions} over $X$. Here, $\mathrm{supp}(d)=\{x\in X\mid d(x)>0\}$ is the \emph{support} of $d$. In this paper, we call $d\in \dist(X)$ simply a \emph{distribution} over $X$. 


\begin{auxproof}
  We let $\dist(S,{\cal F})$ (or simply $\dist(S)$) denote the set of probability measures over a measurable space $(S,{\cal F})$.
  In this paper, we identify probability measures $\mu$ with functions $d\colon S\to [0,1]$, and say $d\in \dist(S)$, using the equality $\mu(A)=\int_{A}d\, \mathrm{d}\mathcal{L}$ relating them. We can do so since any measurable space we consider is either 1) a countable space (in which case we use the \emph{counting measure} as  $\mathcal{L}$), or 2) a subset of $\mathbb{R}^{n}$ (we use the Lebesgue measure as  $\mathcal{L}$). See e.g.~\cite{AshD99}.
\end{auxproof}



An $n$-dimensional \emph{stochastic matrix} is a $n\times n$ matrix $M\in [0,1]^{n\times n}$  such that $M\mathbf{1} = \mathbf{1}$. We let $\stoc(n)$ denote the set of $n$-dimensional stochastic matrices.

\myparagraph{Markov Decision Processes} A \emph{Markov decision process} (MDP)
${\cal M}=(Q,\act,\delta)$ consists of a finite set  $Q$ of \emph{states}, a finite set $\act$  of \emph{actions}, and a \emph{transition function} $\delta:Q\times\act\to\dist(Q)$. We write $M_a$ for the \emph{transition matrix} of an MDP $\mathcal{M}$ for an action $a$; concretely $(M_{a})_{ij} = \delta(q_{i},a)(q_{j})$. Using this, we also write $\mathcal{M}=(Q,\act,(M_{a})_{a\in \act})$ for the same MDP. 

A distribution over $Q$, i.e.\ $d\in \dist(Q)$, shall be called a \emph{configuration} of the MDP $\mathcal{M}$. This terminology is because of their use in \cref{subsec:CMclassConfigMC}.

\myparagraph{Antichains and Monotone Sets} For a partially-ordered set $(X,\leq)$, a subset $S\subseteq X$ is an \emph{antichain} if for any $x,y\in S$, $x\not\leq y$ and $y\not\leq x$. A subset $S\subseteq X$ is \emph{upward closed} if for any $x\in S$ and $y\in X$, $x\leq y$ implies $y\in S$. For a upward-closed set $S\subseteq X$, 
 the \emph{bottom} of $S$
is defined by  $\floor{S}=\{x\in S\mid \forall y\in S,\,y\leq x\implies x=y\}$; this collects all $\leq$-minimal elements of $S$ and is clearly an antichain.
Given an arbitrary subset $T\subseteq X$, its upward closure $\upcl{T}=\{x\in X\mid \exists y\in T, y\leq x\}$ is upward-closed. 
We say an upward-closed set $S\subseteq X$ is \emph{finitely generated} 
if $S=\upcl{T}$ for some finite set $T$. We do not lose generality if we restrict this $T$ to be an antichain.  In this case, we have $\floor{S}=T$. 
The dual notions, namely of \emph{downward-closed set} and its \emph{top}, are defined similarly. A set is \emph{monotone} if it is either upward-closed or downward-closed.




\section{A Unified Framework for Mass and Chance Interpretations}\label{sec:unifiedFramework}
\begin{auxproof}
 Intuitively, our framework decomposes the transitions of an MDP into two steps. In the first step, the transition is prepared by transforming the configuration into a representation called a \emph{pre-configuration}, this step is formally characterized as $\delta_{\sigma}$ (\cref{def:deltaSigma}) and it is worth noting that this step is universal, that is a pre-configuration does not depend on the semantics of the MDP. In the second step, which characterize the semantics of an MDP, a distribution over configurations is obtained from transforming the pre-configuration with a function parameter $\mathbb{X}$ in our framework, called a \emph{chance-mass classifier} (\cref{def:CMClassifier}). With the two steps combined, mathematically $\mathbb{X}\circ\delta_{\sigma}$, a \emph{configuration Markov chain} (\cref{def:ConfMC}) can be obtained.
\end{auxproof}

Our unified framework translates an MDP $\mathcal{M}$ and a scheduler $\sigma$ to the so-called \emph{configuration Markov chain (config MC)} $\mathcal{M}_{\sigma}$; the last is the semantic basis on which we can ask problems such as reachability. The transition function of the config MC $\mathcal{M}_{\sigma}$ is defined systematically as the composition $\mathbb{X}\circ\delta_{\sigma}$ of a \emph{chance-mass classifier (CM classifier)} $\mathbb{X}$  and a function $\delta_{\sigma}$;   see \cref{def:ConfMC}. There is a notable separation of concerns here:  a CM classifier $\mathbb{X}$ specifies which semantics we use (cf.\ \cref{table:fourVarIntro}) but is independent of  $\mathcal{M}$ or $\sigma$ (\cref{def:CMClassifier}); the function $\delta_{\sigma}$ (\cref{def:deltaSigma}), in contrast, is defined by $\mathcal{M}$ (whose transition function is $\delta$) and a scheduler $\sigma$, but is independent of the choice of the semantics.



The following basic constructs and notations will be used throughout the construction. The  notation $\dist g$---inspired by category theory---may be nonstandard but is useful in \cref{sec:instances}.

\begin{notation}[currying and pushforward]\label{def:curryingPushFwdDist}
  Let $X,Y,Z$ be sets. The \emph{currying} of $f\colon X\times Y\to Z$ is denoted by $f^{\wedge}\colon X\to Z^Y$. 
For  $g\colon X\to Y$, the function 
$\dist g\colon \dist(X)\to\dist(Y)$ carries a (discrete) distribution
 $d\in \dist(X)$ to the \emph{pushforward measure} $\bigl(\dist g(d)\bigr)(y)$ given by
 $\bigl(\dist g(d)\bigr)(y)=\sum_{x\in X, g(x)=y}d(x)$.

\end{notation}

\begin{notation}[ket notation] For distributions, we use the following \emph{ket notation}:
  for $d\in \dist(X)$, we write $d = \sum_{x}a_x\ket{x}$ with $a_x=d(x)\in [0,1]$. Intuitively, if we consider $d = \sum_{x}a_x\ket{x}$  as a random variable,  $a_{x}$ is the probability of obtaining the value $x$. More generally, we may use an arbitrary index set $I$ and a function $f\colon I\to X$ and write $\sum_{i\in I}a_{i}\ket{f(i)}$. This distribution assigns to each $x\in X$ the probability $\sum_{i\in I, f(i)=x} a_{i}$.
\end{notation}

We note two special cases of the usages of the ket notation.
Firstly, when $I=X'\subseteq X$ is a subset of $X$, the distribution $d=\sum_{x\in X'}a_{x}\ket{x}$ assigns $0$ by case distinction:
\begin{math}
  d(x) = a_{x}
\end{math}
if $x\in X'$, and
\begin{math}
  d(x)=0
\end{math}
otherwise.
%
Secondly,
the pushforward measure $\dist f(d)$, with $d=\sum_{x\in X}a_{x}\ket{x}$ and $f\colon X\to Y$, can be conveniently denoted by $\dist f(d)=\sum_{x\in X}a_{x}\ket{f(x)}$.

\subsection{CM Classifiers, Pre-configurations, and Config Markov Chains}\label{subsec:CMclassConfigMC}
We introduce the central constructs of our framework, under two different views:
\begin{compactitem}
  \item the \emph{concrete} and \emph{element-level} view, describing the constructs as functions; and
  \item the \emph{abstract} and \emph{type-level} view, providing high-level intuitions of the nature of each construct using its type.
\end{compactitem}
The latter view is in fact that of \emph{category theory}~\cite{MacLane71,Awodey06,Jacobs16coalgBook}, but its precise treatment is beyond the scope of this paper. 

We highlight the role of the type-level view through a preview of one of our main constructs, namely a \emph{CM classifier}. It is formulated as a function $\mathbb{X}_{Q}\colon\dist(\dist(\dist(Q)))\to\dist(\dist(Q))$. The intuition of a CM classifier $\mathbb{X}_{Q}$, in the abstract type-level view, is as in the following preview. (The preview focuses on type-level intuitions and is not meant to be understood at the first look---it is really a \emph{preview}. Readers can skim through it and come back.)
\begin{compactitem}
  \item In our framework for MDP semantics, there are three sources of probability distributions (namely \emph{schedulers}, \emph{configurations}, and \emph{transitions}); they are represented by the three occurrences of the distribution operator $\dist$ in the domain of $\mathbb{X}_{Q}$.
  \item Each of the three kinds of probabilities are interpreted either in a \emph{chance} or \emph{mass} manner; these two manners are represented by the two occurrences of $\dist$ in the codomain of $\mathbb{X}_{Q}$.
  \item
        For intuition,
        we often annotate $\dist$'s with their intentions, namely $\Dsched,\Dconf,\Dtrans$ (for probabilities of different sources) and $\Dchan,\Dmass$ (for how they are interpreted). In particular, following the above intuitions, a CM classifier $\mathbb{X}_{Q}$ will have the type
        \begin{equation}\label{eq:CMClassifierAnnotated}
          \mathbb{X}_{Q}\colon\Dsched(\Dconf(\Dtrans(Q)))
          \longrightarrow
          \Dchan(\Dmass(Q)).
        \end{equation}
        Note that these $\dist$'s have the same mathematical content regardless of the annotations.
  \item The function $\mathbb{X}_{Q}$ specifies which kind of probability (on the left) is interpreted in which manner (on the right). One combinatorial way of presenting such a specification  $\mathbb{X}_{Q}$ is as a mapping from the three $\dist$'s in the domain to the two $\dist$'s in the codomain.
  \item Furthermore, such a mapping can be described by
        1) swapping the order of $\dist$'s, 2) suppressing two $\dist$'s into one, and 3) inserting one $\dist$ where there was none.  See \cref{table:symbX}.  We introduce functions $\lambda, \mu, \eta$ for conducting these operations (\cref{def:mu});  examples of $\mathbb{X}_{Q}$ are thus described as combinations of $\lambda,\mu,\eta$.
\end{compactitem}

\begin{table}[tbp]\centering
  \caption{Combinatorial presentation of CM classifiers $\mathbb{X}$}
  \scalebox{0.9}{
    \centering
    \begin{tabular}{cccc}\toprule
      $\mathbb{X}^{\csct{}}$& $\mathbb{X}^{\csmt{}}$& $\mathbb{X}^{\msct{}}$ & $\mathbb{X}^{\msmt{}}$ \\\midrule
      \begin{minipage}{0.25\textwidth}
        \[\begin{tikzcd}[ampersand replacement=\&,column sep=tiny,row sep=small]
	{{\cal D}_{\sf sched}}\& {{\cal D}_{\sf conf}}\& {{\cal D}_{\sf trans}} \\
	{{\cal D}_{\sf sched}}\& {{\cal D}_{\sf trans}}\& {{\cal D}_{\sf conf}} \\
	{{\cal D}_{\sf chance}}\&\&{{\cal D}_{\sf mass}}
	\arrow[from=1-1, to=2-1]
	\arrow[from=1-2, to=2-3]
	\arrow["{\quad\lambda}"{pos=0.4}, from=1-3, to=2-2]
	\arrow[from=2-1, to=3-1]
	\arrow[from=2-2, to=3-1]
	\arrow["{\enspace\mu}"{pos=0.4}, from=2-1, to=3-1]
	\arrow[from=2-3, to=3-3]
\end{tikzcd}\]
      \end{minipage}
                                                       &
      \begin{minipage}{0.25\textwidth}
        \[\begin{tikzcd}[ampersand replacement=\&,column sep=tiny]
	{{\cal D}_{\sf sched}}\& {{\cal D}_{\sf conf}}\& {{\cal D}_{\sf trans}} \\
	{{\cal D}_{\sf chance}}\&\&{{\cal D}_{\sf mass}}
	\arrow[from=1-1, to=2-1]
	\arrow[from=1-2, to=2-3]
	\arrow["{\mu~}"', from=1-3, to=2-3]
\end{tikzcd}\]
      \end{minipage}
                                                       &
      \begin{minipage}{0.25\textwidth}
        \[\begin{tikzcd}[ampersand replacement=\&,column sep=tiny,row sep=small]
	{{\cal D}_{\sf sched}}\& {{\cal D}_{\sf conf}}\& {{\cal D}_{\sf trans}} \\
	{{\cal D}_{\sf sched}}\& {{\cal D}_{\sf trans}}\& {{\cal D}_{\sf conf}} \\
	{{\cal D}_{\sf trans}}\& {{\cal D}_{\sf sched}}\& {{\cal D}_{\sf conf}} \\
	{{\cal D}_{\sf chance}}\&\&{{\cal D}_{\sf mass}}
	\arrow[from=1-1, to=2-1]
	\arrow[from=1-2, to=2-3]
	\arrow["{\quad \lambda}"{pos=0.4}, from=1-3, to=2-2]
	\arrow["{\lambda \quad}"'{pos=0.4}, from=2-1, to=3-2]
	\arrow[from=2-2, to=3-1]
	\arrow[from=2-3, to=3-3]
	\arrow[from=3-1, to=4-1]
	\arrow[from=3-2, to=4-3]
	\arrow["{\mu~}"', from=3-3, to=4-3]
\end{tikzcd}\]
      \end{minipage}
                                                       &
      \begin{minipage}{0.25\textwidth}
        \[\begin{tikzcd}[ampersand replacement=\&,column sep=tiny,row sep=small]
	{{\cal D}_{\sf sched}}\& {{\cal D}_{\sf conf}}\& {{\cal D}_{\sf trans}} \\
	{{\rm Id}}\& {{\cal D}}\& {{\cal D}_{\sf trans}} \\
	{{\cal D}_{\sf chance}}\&\&{{\cal D}_{\sf mass}}
	\arrow[from=1-1, to=2-2]
	\arrow["{\mu~}"', from=1-2, to=2-2]
	\arrow[from=1-3, to=2-3]
	\arrow["{\eta}"'{pos=0.4},from=2-1, to=3-1]
	\arrow[from=2-2, to=3-3]
	\arrow["{\mu~}"', from=2-3, to=3-3]
\end{tikzcd}\]
      \end{minipage}
      \\\bottomrule
    \end{tabular}
  }
  \vspace{-2em}
  \label{table:symbX}
\end{table}


We move on to the precise technical development, which we mostly do in the concrete, element-level view, with some type-level intuitions supplementing it. An example of each component using the MDP in \cref{table:fourVarExampleIntro} is given in Appendix \ref{sec:UniFrameworkExample}. 
%

In our general framework, a configuration of an MDP is a distribution over states.
\begin{definition}[configuration, scheduler]\label{def:scheduler}
A \emph{configuration} of an MDP ${\cal M}=(Q,\act,\delta)$ is a distribution $d\in \dist(Q)$. 
  A \emph{(configuration-based) scheduler} for
$\mathcal{M}$
is a function $\sigma\colon\dist(Q)^{+}\to \dist(\act)$ that maps a sequence of configurations to a distribution over actions.  A scheduler is \emph{memoryless} if its value only depends on the last element of its input, that is for any sequence $d_{1}d_{2}...d_{n}\in\dist(Q)^{+}$, $\sigma(d_{1}d_{2}...d_{n})=\sigma(d_{n})$. Therefore, a memoryless scheduler is a function of the type $\sigma\colon\dist(Q)\to \dist(\act)$. A scheduler is \emph{pure} if it always gives a Dirac distribution, that is for any $p\in\dist(Q)^{+}$, $\sigma(p)(a)=1$ for some $a\in\act$. Otherwise, the scheduler is \emph{mixed}.
\end{definition}

\begin{remark}\label{rem:globalVsLocalActions}
 Note that the scheduler notion in \cref{def:scheduler} uses \emph{global} actions: 
it chooses 
an action $a$ is picked from the distribution $\sigma(d_{1}d_{2}...d_{n})\in \dist(\act)$ and \emph{this action $a$ is used at every state}  $q\in Q$. The more common notion of \emph{local} action---allowing different actions at different states---can be encoded as global actions by bloating the action set.
\end{remark}

A CM classifier (as presented in \cref{eq:CMClassifierAnnotated}) takes, as input, three-fold probability distributions, and its layers correspond to the three sources of probability in MDPs. We shall call such data a pre-configuration.


\begin{definition}[pre-configuration]\label{def:preconfig}
  Let ${\cal M}=(Q,\act,\delta)$ be an MDP.  A \emph{pre-configuration} of $\mathcal{M}$ is an element $t\in \dist(\dist(\dist(Q)))$, or $t\in \Dsched(\Dconf(\Dtrans(Q)))$ with annotations.
\end{definition}

The following function $\delta_{\sigma}$, as sketched in the beginning of the section, extracts a pre-configuration from an MDP $\mathcal{M}$ and a scheduler $\sigma$, exposing the three sources of probabilities.


\begin{restatable}[$\delta_\sigma$]{definition}{DeltaSigma}\label{def:deltaSigma}
  Let ${\cal M}=(Q,\act,\delta)$ be an MDP and $\sigma$ be a scheduler.
  We define a function $\delta_{\sigma}\colon\dist(Q)^{+}\to \dist(\dist(\dist(Q)))$ as follows: for any $d_{0}d_{1}\dots d_{k}\in\dist(Q)^{+}$,
  \begin{equation}\label{eq:preConfig}
    \delta_{\sigma}(d_{0}d_{1}\dots d_{k})=\sum_{a\in\act}\sigma(d_{0}d_{1}...d_{k})(a)\,\biggket{\sum_{q\in Q}d_{k}(q)\,\bigket{\,\delta(q,a)\,}}.
  \end{equation}
\end{restatable}
%
The annotations in both~\cref{eq:CMClassifierAnnotated} and \cref{def:preconfig} can now be explained using~\cref{eq:preConfig}.

\begin{compactitem}
  \item The innermost distribution $\delta(q,a)$ on the right-hand side of~\cref{eq:preConfig} comes from the transition function $\delta\colon Q\times\act\to \dist(Q)$ of an MDP; so we write $\delta(q,a)\in \Dtrans(Q)$.
  \item The distribution $\biggket{\sum_{q\in Q}d_{k}(q)\,\bigket{\,\delta(q,a)\,}}$, that is one-level higher, has probabilities $d_{k}(q)$ that come from a ``configuration'' $d_{k}$ (this terminology is justified by the memoryless case of \cref{def:ConfMC} later). Thus we say it belongs to $\Dconf(\Dtrans(Q))$.
  \item Finally, the whole right-hand side of each of~\cref{eq:preConfig} is a probability distribution over elements of $\Dconf(\Dtrans(Q))$, where the probabilities (such as $\sigma(d)(a)$) come from the scheduler $\sigma$. Therefore, the right-hand side is an element of $\Dsched(\Dconf(\Dtrans(Q)))$.
\end{compactitem}


\noindent
The following defines CM classifiers, whose intuitions have been discussed earlier in \cref{subsec:CMclassConfigMC}. 

\begin{definition}[chance-mass (CM) classifier]\label{def:CMClassifier}
  A \emph{chance-mass (CM) classifier} is a family $\mathbb{X}=(\mathbb{X}_{Q})_{Q}$ of functions indexed by set $Q$; its $Q$-component has type $\mathbb{X}_{Q}:\dist(\dist(\dist(Q)))\to\dist(\dist(Q))$. When the index $Q$ is obvious we drop it,  writing $\mathbb{X}\colon\dist(\dist(\dist(Q)))\to \dist(\dist(Q))$.
\end{definition}

\begin{remark}\label{rem:naturality}
  For our theoretical development,  in \cref{def:CMClassifier},  it is enough to fix an MDP ${\cal M}=(Q,\act,\delta)$ and only consider the $Q$-component $\mathbb{X}_{Q}\colon\dist(\dist(\dist(Q)))\to \dist(\dist(Q))$ of $\mathbb{X}$. We made the above general definition---it is independent of the choice of $Q$---to emphasize the \emph{uniformity} of $\mathbb{X}$. This uniformity should be formalized as the \emph{naturality in $Q$}, a central notion from category theory~\cite{MacLane71}. We will do so in our future work.
\end{remark}



We are ready to define a config MC $\mathcal{M}_{\mathbb{X}, \sigma}$, a semantic basis on which we formalize e.g.\ reachability problems. As announced, its transition function is a composition $\mathbb{X}\circ \delta_{\sigma}$ of 1) a CM classifier $\mathbb{X}$ that chooses the type of semantics (cf.\ \cref{table:fourVarExampleIntro}) and 2) the function $\delta_{\sigma}$ that exposes three sources of probability in an MDP and forms a pre-configuration.


\begin{definition}[config MC ${\cal M}_{\mathbb{X},\sigma}$]\label{def:ConfMC}
  Let ${\cal M}=(Q,\act,
\delta
)$ be an MDP, $\mathbb{X}$ be a CM classifier and $\sigma$ be a scheduler. The \emph{configuration Markov chain (config MC)} ${\cal M}_{\mathbb{X},\sigma}=\bigl(\dist(Q)^{+},\delta_{\mathbb{X},\sigma}\bigr)$ is a Markov chain carried by the set $\dist(Q)^{+}$. 
Its transition function $\delta_{\mathbb{X},\sigma}\colon \dist(Q)^{+}\to\dist(\dist(Q)^{+})$ is defined as follows: for $d_{0}d_{1}\dots d_{k}\in\dist(Q)^{+}$,
  \begin{displaymath}
    \delta_{\mathbb{X},\sigma}(d_{0}d_{1}\dots d_{k})=\sum_{d_{k+1}\in \dist(Q)} \bigg(\mathbb{X}\bigl(\delta_{\sigma}(d_{0}d_{1}\dots d_{k})\bigr)(d_{k+1})\bigg)\,\ket{d_{0}d_{1}...d_{k}d_{k+1}},
  \end{displaymath}
  that is, a config MC state $d_{0}d_{1}\dots d_{k}$ evolves into $d_{0}d_{1}\dots d_{k} d_{k+1}$ with the probability \linebreak $\left(\mathbb{X}\bigl(\delta_{\sigma}(d_{0}d_{1}\dots d_{k})\bigr)\right)(d_{k+1})$. Note that, in this memoryful setting, config MC states record the whole history as sequences and  get extended along transitions. 

  For a memoryless scheduler $\sigma\colon\dist(Q)\to\dist(\act)$, the config MC ${\cal M}_{\mathbb{X},\sigma}=\bigl(\dist(Q),\delta_{\mathbb{X},\sigma}\bigr)$ is carried by the set $\dist(Q)$ of configurations (\cref{def:scheduler}). The function
  \begin{math}
    \delta_{\mathbb{X},\sigma}\colon
    \dist(Q)\to\dist(\dist (Q))
  \end{math}
  is
  \begin{equation}\label{eq:configMCMemoryless}
    \delta_{\mathbb{X},\sigma}(d)=\sum_{d'\in \dist(Q)} \bigg(\mathbb{X}\bigl(\delta_{\sigma}(d)\bigr)(d')\bigg)\,\ket{d'},
  \end{equation}
  that is, a configuration $d\in \dist(Q)$ evolves into $d'\in \dist(Q)$ with the probability $\left(\mathbb{X}\bigl(\delta_{\sigma}(d)\bigr)\right)(d')$. Note that by the definition of $\dist$, the CM classifier $\mathbb{X}\bigl(\delta_{\sigma}(d_{0}d_{1}\dots d_{k})\bigr)$ and $\mathbb{X}\bigl(\delta_{\sigma}(d)\bigr)$ are finitely-supported, hence the use of summation notation here is well-defined.
\end{definition}

%
We formally state the reachability problem. More details are in Appendix \ref{sec:omittedDef}.
\begin{problem}[threshold reachability problem]\label{prob:GeneralThreshold}
  Assume the setting of \cref{def:ConfMC}. Let $d_{0}\in \dist(Q)$ and  $H\subseteq \dist(Q)$, be an \emph{initial configuration} and a \emph{target set}, respectively. We let $\reach_{\mathbb{X},\sigma}(d_{0},H)$ denote the set of paths that start from $d_{0}$ and eventually reach $H$.
The \emph{threshold reachability problem} for $\xi$ asks if there exists a scheduler $\sigma$ such that
\begin{math}
  \Pb_{\mathbb{X},\sigma}\bigl(\reach_{\mathbb{X},\sigma}(d_{0},H)\bigr)\geq \xi
\end{math}.
\end{problem}

\section{Instances of the Unified Framework and Four Semantics}
\label{sec:instances}
Now we instantiate the framework and derive 
the four semantics of MDPs in \cref{table:fourVarExampleIntro,table:fourVarIntro}. We first introduce the three operators $\lambda,\mu,\eta$ for constructing the CM classifiers $\mathbb{X}$.


\begin{definition}[%
    MC2CM $\lambda$,
    suppression $\mu$,
    Dirac $\eta$%
  ]\label{def:mu}
  Let $X$ be a set.
  \begin{compactitem}
    \item The \emph{MC2CM operator} $\lambda_{X}\colon \dist(\dist(X))\to\dist(\dist(X))$ on $X$ is defined as follows:
          \begin{equation}\label{eq:MC2CM}
            \lambda_{X}\biggl(\sum_{i\in I}a_{i}\,\biggket{\sum_{x\in X}b_{ix}\ket{x}}\biggr)=\sum_{f\in X^{I}}\Bigl(\prod_{i\in I}b_{i,f(i)}\Bigr)\,\biggket{\,\sum_{i\in I}a_{i}\,\bigket{f(i)}\,}.
          \end{equation}
    \item
          The \emph{suppression operator} $\mu_{X}\colon \dist(\dist(X))\to\dist(X)$ on $X$ is defined as follows:
          \begin{equation}
            \mu_{X}\biggl(\sum_{i\in I}a_{i}\,\biggket{\sum_{x\in X}b_{ix}\ket{x}}\biggr)
            =\sum_{x\in X} \Bigl(\sum_{i\in I}a_{i}b_{ix}\Bigr)\ket{x}.
          \end{equation}
    \item The \emph{Dirac operator} $\eta_{X}\colon X\to\dist(X)$ is defined by
          \begin{math}
            \eta_{X}(x)=1\,\ket{x}
          \end{math}, that is, $\eta_{X}(x)(x')=1$ if $x=x'$ and $0$ otherwise.
  \end{compactitem}
  Much like in \cref{def:CMClassifier}, we often drop the subscript $X$ and simply write $\lambda,\mu,\eta$.
\end{definition}
\begin{figure*}[tbp]\centering
 \begin{minipage}{0.5\textwidth}
  \centering
  \includegraphics[height=7em,trim={0 3cm 0 2cm},clip]{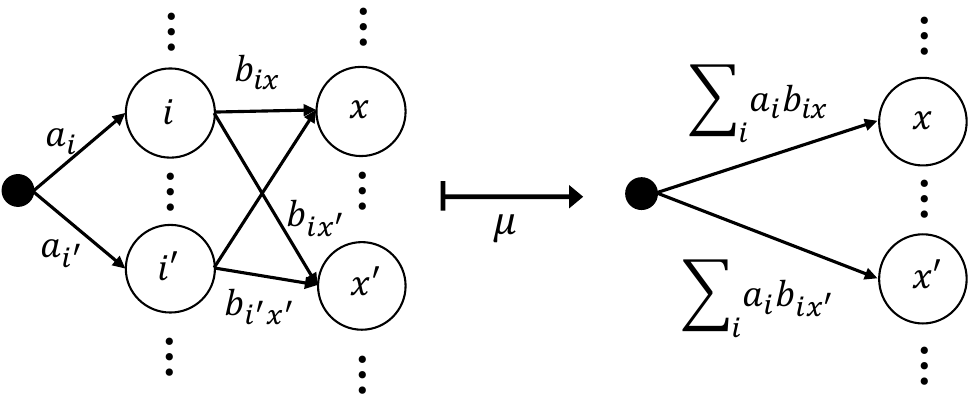}
  \caption{The suppression operator $\mu$}
  \label{fig:mu}
 \end{minipage}
 \hfill
 \begin{minipage}{0.47\textwidth}
  \centering
  \includegraphics[height=7em]{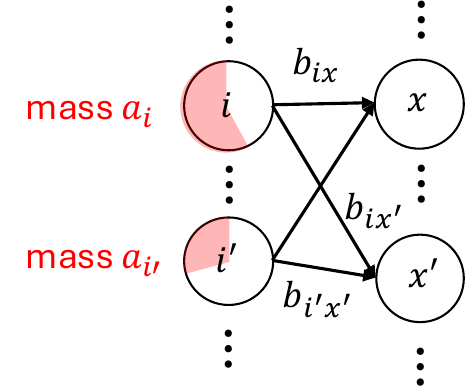}
  \caption{The MC2CM operator $\lambda$, its input}
  \label{fig:lambda}
 \end{minipage}
\end{figure*}

Here are some intuitions (Appendix \ref{subsec:ThreeOpAppen} for more). 
The suppression operator $\mu$, as in \cref{fig:mu}, suppresses two successive probabilistic branching into one: \emph{tossing coins altogether, instead of one-by-one}.
The Dirac operator $\eta$ turns an element $x$ into a trivial distribution.

The MC2CM operator $\lambda$ is much less known. Its type with annotations is $\lambda\colon\Dmass(\Dchan(X))\to\Dchan(\Dmass(X))$; hence the name (``mass-chance to chance-mass''). 
 See \cref{fig:lambda}.
This is how we interpret the input $\sum_{i\in I}a_{i}\,\bigket{\sum_{x\in X}b_{ix}\ket{x}}$ of $\lambda$; more specifically, we are looking at the  mass $a_{i}$ at each $i\in I$, and each $i\in I$ has the transition probability $b_{ix}$ to each $x\in X$.

Here is the intention of $\lambda$:
  the destination of the mass $a_{i}$ at each $i$ is chosen by chance (coin toss), and the whole mass $a_{i}$ goes to the chosen destination. This happens at each $i\in I$, and as a result, each $x\in X$ will acquire the sum of the masses whose destination is $x$.

  This whole process can be modeled as follows.
  \begin{itemize}
    \item Using functions $f\colon I\to X$, we can enumerate all combinations of the destinations $f(i)$ of the mass at each $i\in I$. That is, $f$ is the event ``the destination of $i$ is chosen as $f(i)$.''
    \item For each such event $f\in X^{I}$, the probability for $f$  is
$\prod_{i\in I}b_{i,f(i)}$, where $b_{i,f(i)}$ is the probability of $f(i)$ being chosen as the destination of $i$.
    \item Once an event $f$ is chosen,  the mass at each $i\in I$ is moved accordingly. The resulting mass at $x\in X$ is $\sum_{i\in I \text{ s.t.\ } f(i)=x} a_{i}$, and the induced distribution of masses is conveniently represented in the ket notation by $\sum_{i\in I}a_{i}\ket{f(i)}$.
  \end{itemize}
  This explains the right-hand side of~\cref{eq:MC2CM}: it is the distribution over the mass distributions $\sum_{i\in I}a_{i}\ket{f(i)}$ (where $f$ varies), and the probability for each $f$ given by $\prod_{i\in I}b_{i,f(i)}$.

\subsection{Concrete Instances}

Now we present, in the concrete level of elements, the four CM classifiers outlined in \cref{table:fourVarIntro}. We also present explicit formulas for  the config MCs that are derived using them.




\begin{definition}[CM classifiers, concrete instances] 
\label{def:CMclassifiersInstances}
We define four CM classifiers 
\linebreak
$
\mathbb{X}^{\csct{}},
\mathbb{X}^{\msct{}},
\mathbb{X}^{\csmt{}},
\mathbb{X}^{\msmt{}}
\colon
 \Dsched(\Dconf(\Dtrans(Q)))\to\Dchan(\Dmass(Q))
$:
\begin{align*}
 &\mathbb{X}^{\csct{}}
 = \mu_{\dist(Q)}\circ\dist\lambda_{Q}
, \quad\text{that is concretely},
\\
&\quad
\mathbb{X}^{\csct{}}\left(\sum_{i\in I}a_{i}\lrket{\sum_{j\in J}b_{i,j}\biggket{\sum_{q\in Q}p_{i,j,q}\ket{q}}}\right)=\sum_{i\in I}\sum_{f\in Q^{J}}\left(a_{i}\prod_{j\in J}p_{i,j,f(j)}\right)\biggket{\sum_{j\in J}b_{i,j}\ket{f(j)}};
\\
&
\mathbb{X}^{\msct{}}=\dist\mu_{(Q)}\circ\lambda_{\dist(Q)}\circ\dist\lambda_{Q}
, \quad\text{that is concretely},
\\
&\quad\resizebox{.95\textwidth}{!}{$\displaystyle{\mathbb{X}^{\msct{}}\left(\sum_{i\in I}a_{i}\lrket{\sum_{j\in J}b_{i,j}\biggket{\sum_{q\in Q}p_{i,j,q}\ket{q}}}\right)=\sum_{f\in Q^{I\times J}}\prod_{(i,j)\in I\times J}p_{i,j,f(i,j)}\biggket{\sum_{(i,j)\in I\times J}a_{i}b_{i,j}\ket{f(i,j)}}}$};
\\
&\mathbb{X}^{\csmt{}}=\dist\mu_{(Q)}
, \quad\text{that is concretely},
\\
&\quad
\mathbb{X}^{\csmt{}}\left(\sum_{i\in I}a_{i}\lrket{\sum_{j\in J}b_{i,j}\biggket{\sum_{q\in Q}p_{i,j,q}\ket{q}}}\right)=\sum_{i\in I}a_{i}\biggket{\sum_{j\in J}\sum_{q\in Q}b_{i,j}p_{i,j,q}\ket{q}}; \text{ and}
\\
&
\mathbb{X}^{\msmt{}}=\eta_{\dist(Q)}\circ\dist\mu_{Q}\circ\mu_{\dist(Q)}
, \quad\text{that is concretely},
\\
&\quad
\mathbb{X}^{\msmt{}}\left(\sum_{i\in I}a_{i}\lrket{\sum_{j\in J}b_{i,j}\biggket{\sum_{q\in Q}p_{i,j,q}\ket{q}}}\right)=\biggket{\sum_{i\in I}\sum_{j\in J}\sum_{q\in Q}a_{i} b_{i,j} p_{i,j,q}\ket{q}}.
\end{align*}
\end{definition}

Here is the illustration of the definition of the $\csct{}$ case (the other cases are similar). As demonstrated in~\cref{table:symbX}, the CM classifier $\mathbb{X}^{\csct{}}$ first applies $\dist\lambda_{Q}$ to the pre-configuration to rearrange the order of the distribution operators and obtain an element in $\Dsched(\Dtrans(\Dconf(Q)))$. Since both the scheduler and the transition distributions are interpreted in the chance manner, we apply the suppression operator $\mu_{\dist(Q)}$ to suppress $\Dsched$ and $\Dtrans$ to $\Dchan$ and leave $\Dconf$ to $\Dmass$.

Those concrete CM classifiers in \cref{def:CMclassifiersInstances} instantiate the general definition of config MC (\cref{def:ConfMC}) as follows. 
To avoid complex super- and subscripts, a  config MC ${\cal M}_{\mathbb{X}^{\csct{}},\sigma}$ shall be denoted by 
${\cal M}^{\csct{}}_{\sigma}$, and similarly for other semantics. 
Here we focus on memoryless schedulers for readability (cf.\ \cref{eq:configMCMemoryless}); the general case is deferred to Appendix \ref{sec:omittedDef}.

\begin{definition}[config MC under concrete semantics, memoryless]
\label{def:ConfigMCsInstances}
  Let ${\cal M}=(Q,\act,\delta)$ be an MDP, and  $\sigma\colon \dist(Q)\to \dist(\act)$ be a (memoryless) scheduler.
  The four config MCs induced by ${\cal M}$ and $\sigma$, under  the four CM classifiers in \cref{def:CMclassifiersInstances}, are concretely described as follows. 
\begin{itemize}
 \item Those config MCs are denoted by
${\cal M}^{\csct{}}_{\sigma}, 
{\cal M}^{\msct{}}_{\sigma}, 
{\cal M}^{\csmt{}}_{\sigma}, 
{\cal M}^{\msmt{}}_{\sigma}$.
 \item They all 
  have $\dist(Q)$ as a state space.
 \item Their transition functions, all of type $\dist(Q)\to\dist(\dist(Q))$, are given as follows.
\begin{align}
 \delta_{\mathbb{X}^{\csct{}},\sigma}(d)(d')
&=
\sum_{a\in\act}\sum_{f\in Q^{Q} \text{ s.t. } \dist f(d)=d'}\sigma(d)(a)\cdot\bigg(\prod_{q\in Q}\delta(q,a)\bigl(f(q)\bigr)\bigg),
\nonumber
\\
    \delta_{\mathbb{X}^{\msct{}},\sigma}(d)(d')
&=
\sum_{f\in Q^{Q\times\act}\text{ s.t. } d'=\dist f(d\otimes\sigma(d))}\prod_{(q,a)\in Q\times\act}\delta(q,a)(f(q,a)),
\nonumber
\\
    \delta_{\mathbb{X}^{\csmt{}},\sigma}(d)(d')
&=
\sum_{a\in\act\text{ s.t. }d'=\mu_{Q}(\dist(\delta^{\wedge}(a))(d))}\sigma(d)(a),
\label{eq:configMCConcreteCSMT}
\\
    \delta_{\mathbb{X}^{\msmt{}},\sigma}(d)(d')
&=\begin{cases}
      1 & \quad \text{if }d'=\sum_{a\in\act}\sum_{q'\in Q}\sum_{q\in Q}\sigma(d)(a) d(q)\delta(q,a)(q')\ket{q'}, \\
      0 & \quad \text{otherwise.}
    \end{cases}
\nonumber
\end{align}
In the second line, $d\otimes \sigma(d)\in \dist(Q\times\act)$ denotes the product, which is a coupling of the configuration $d\in\dist(Q)$ and the distribution  $\sigma(d)\in \dist(\act)$ over actions, that is $(d\otimes \sigma(d))(q,a)=d(q)\cdot\sigma(d)(a)$.
\end{itemize}

\end{definition}

\begin{remark}\label{rem:CSCTConv}
 As we discussed in \cref{sec:intro}, the \csct{} semantics should coincide with the conventional semantics of MDPs, but \cref{def:ConfigMCsInstances}  may not look  that way.  We can show that 1)
 the transition function $\delta_{\mathbb{X}^{\csct{}},\sigma}$ \emph{preserves Dirac-ness} (i.e.\ if  $d$ is Dirac, then $\delta_{\mathbb{X}^{\csct{}},\sigma}(d)(d')>0$ implies $d'$ is Dirac as well), and 2) the definition in \cref{def:ConfigMCsInstances} indeed is the usual semantics when $d$ and $d'$ are Dirac (meaning that these  configurations are single states with mass 1). A proof is in Appendix~\ref{sec:proof}.
\end{remark}

\begin{auxproof}
  \paragraph*{\csct{} Semantics}\label{subsubsec:csct}
 \begin{definition}[$\mathbb{X}^{\csct{}}$]\label{def:CMclassCSCT}
  We define a CM classifier $\mathbb{X}^{\csct{}}\colon\Dsched(\Dconf(\Dtrans(Q)))\to\Dchan(\Dmass(Q))$ by $\mathbb{X}^{\csct{}}=\mu_{\dist(Q)}\circ\dist\lambda_{Q}$.
  That is, concretely,
  $$\mathbb{X}^{\csct{}}\left(\sum_{i\in I}a_{i}\lrket{\sum_{j\in J}b_{i,j}\biggket{\sum_{q\in Q}p_{i,j,q}\ket{q}}}\right)=\sum_{i\in I}\sum_{f\in Q^{J}}\left(a_{i}\prod_{j\in J}p_{i,j,f(j)}\right)\biggket{\sum_{j\in J}b_{i,j}\ket{f(j)}}.$$
 \end{definition}

 As demonstrated in~\cref{table:symbX}, the CM classifier $\mathbb{X}^{\csct{}}$ first applies $\dist\lambda_{Q}$ to the pre-configuration to rearrange the order of the distribution operators and obtain an element in $\Dsched(\Dtrans(\Dconf(Q)))$. Since both the scheduler and the transition distributions are interpreted in the chance manner, we apply the suppression operator $\mu_{\dist(Q)}$ to suppress $\Dsched$ and $\Dtrans$ to $\Dchan$ and leave $\Dconf$ to $\Dmass$.


 The general definition of config MCs (\cref{def:ConfMC}) is now instantiated as follows.
 For  readability, we present the case of memoryless schedulers (cf.\ \cref{eq:configMCMemoryless}); the general case is deferred to \cref{def:ConfMCCSCTGeneral}.


 \begin{definition}[config MC under \csct{} semantics, memoryless]\label{def:ConfMCCSCT}
  Let ${\cal M}=(Q,\act,\delta)$ be an MDP, and  $\sigma\colon \dist(Q)\to \dist(\act)$ be a (memoryless) scheduler.
  The config MC induced by ${\cal M}, \mathbb{X}^{\csct{}}, \sigma$,
  denoted by
  \begin{math}
    {\cal M}^{\csct{}}_{\sigma}=(\dist(Q),\delta_{\mathbb{X}^{\csct{}},\sigma})
  \end{math}, is given  as follows.

  The config MC's state space is the set $\dist(Q)$ of configurations of the MDP $\mathcal{M}$.
  Its transition function $\delta_{\mathbb{X}^{\csct{}},\sigma}\colon\dist(Q)\to\dist(\dist(Q))$
  is defined as follows:
  \begin{equation}\label{eq:configMCTransCSCT}
    \delta_{\mathbb{X}^{\csct{}},\sigma}(d)(d')=\sum_{a\in\act}\sum_{f\in Q^{Q} \text{ s.t. } \dist f(d)=d'}\sigma(d)(a)\cdot\bigg(\prod_{q\in Q}\delta(q,a)\bigl(f(q)\bigr)\bigg).
  \end{equation}

 \end{definition}

 As we discussed in \cref{sec:intro}, the \csct{} semantics should coincide with the conventional semantics of MDPs, but \cref{def:ConfMCCSCT}  may not look  that way. In fact, we can show that 1)
 the transition function $\delta_{\mathbb{X}^{\csct{}},\sigma}$ \emph{preserves Dirac-ness} (i.e.\ if  $d$ is Dirac, then $\delta_{\mathbb{X}^{\csct{}},\sigma}(d)(d')>0$ implies $d'$ is Dirac as well), and 2) the definition \cref{eq:configMCTransCSCT} indeed is the usual semantics when $d$ and $d'$ are Dirac (meaning that these  configurations are single states with mass 1).
\end{auxproof}

\begin{auxproof}
  While we motivate the \csct{} semantics as a correspondence of the conventional MDP semantics under our framework, it is not trivial at first sight that \cref{def:ConfMCCSCT} behaves as described in \cref{table:fourVarExampleIntro}. In fact, the conventional MDP semantics is only a special case under such semantics, where we restrict the initial configuration to be a Dirac distribution. Under such restriction, the relation between \csct{} semantics and the conventional MDP semantics can be clearly shown by the following proposition.

  \begin{proposition}[\csct{} semantics with Dirac initial configuration]\label{prop:CSCTDiracInit}
    Assuming the setting of \cref{def:ConfMCCSCT}, if the configuration $d$ is a Dirac distribution, then the transition function $\delta_{\mathbb{X}^{\csct{}},\sigma}$ acting on $d$ is a distribution over Dirac distribution. That is, concretely, for any Dirac distributions $d,d'\in\dist(Q)$ with supports $q,q'\in Q$ respectively, the transition function yields a distribution over Dirac distribution as follows,
    $$\delta_{\mathbb{X}^{\csct{}},\sigma}(d)(d')=\sum_{a\in\act}\sigma(d)(a)\cdot\delta(q,a)(q').$$

  \end{proposition}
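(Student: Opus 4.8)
The plan is to unfold the explicit transition formula for $\delta_{\mathbb{X}^{\csct{}},\sigma}$ from \cref{def:ConfigMCsInstances} and specialize it to a Dirac input. First I would invoke \cref{def:curryingPushFwdDist}: for the Dirac distribution $d$ with $d(q)=1$ and any $f\colon Q\to Q$, the pushforward $\dist f(d)$ is again Dirac, with support $f(q)$. Hence in the inner sum $\sum_{f\in Q^{Q}\text{ s.t. }\dist f(d)=d'}(\cdots)$ every $f$ that contributes forces $d'$ to be the Dirac distribution at $f(q)$; in particular $\delta_{\mathbb{X}^{\csct{}},\sigma}(d)(d')=0$ whenever $d'$ is not Dirac, which already yields the Dirac-preservation part of the claim.

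Next I would fix a Dirac $d'$ with $d'(q')=1$. By the observation above, the constraint $\dist f(d)=d'$ is equivalent to the single equation $f(q)=q'$, so
\[
  \delta_{\mathbb{X}^{\csct{}},\sigma}(d)(d')
  =\sum_{a\in\act}\sigma(d)(a)\sum_{\substack{f\in Q^{Q}\\ f(q)=q'}}\ \prod_{p\in Q}\delta(p,a)\bigl(f(p)\bigr).
\]
I would then split the product over states $p\in Q$ into the factor at $p=q$, which equals $\delta(q,a)(q')$ since $f(q)=q'$ is fixed, and the remaining factors over $p\in Q\setminus\{q\}$, whose values $f(p)$ range freely over $Q$.

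The one step that needs care is the factorization
\[
  \sum_{g\colon Q\setminus\{q\}\to Q}\ \prod_{p\in Q\setminus\{q\}}\delta(p,a)\bigl(g(p)\bigr)
  =\prod_{p\in Q\setminus\{q\}}\Bigl(\sum_{r\in Q}\delta(p,a)(r)\Bigr),
\]
i.e.\ the finite distributive law turning a sum indexed by a product set into a product of sums (legitimate because $Q$ is finite). Since each $\delta(p,a)\in\dist(Q)$ is a probability distribution, every inner sum equals $1$, so the whole product collapses to $1$. Substituting back gives $\delta_{\mathbb{X}^{\csct{}},\sigma}(d)(d')=\sum_{a\in\act}\sigma(d)(a)\cdot\delta(q,a)(q')$, exactly the asserted formula; together with the first paragraph this also shows the output distribution is supported on Dirac distributions. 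I expect the bookkeeping around this factorization — identifying the restricted index set $\{f\in Q^{Q}\mid f(q)=q'\}$ with $Q^{Q\setminus\{q\}}$ and justifying the finite interchange of $\sum$ and $\prod$ — to be the only non-mechanical part; everything else is direct substitution.
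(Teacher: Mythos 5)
Your proposal is correct and follows essentially the same route as the paper's own proof: expand the concrete formula for $\delta_{\mathbb{X}^{\csct{}},\sigma}$, use Dirac-ness of $d$ to reduce the constraint $\dist f(d)=d'$ to $f(q)=q'$, factor out $\delta(q,a)(q')$, and observe that the remaining sum of products over $Q\setminus\{q\}$ marginalizes to $1$. You are in fact slightly more explicit than the paper about the bijection $\{f\in Q^{Q}\mid f(q)=q'\}\cong Q^{Q\setminus\{q\}}$ and the finite sum--product interchange, which the paper performs silently.
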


  By replacing each Dirac configuration $d$ with their corresponding support $q$, \cref{prop:CSCTDiracInit} recovers the transition function that we commonly define for conventional MC and MDP.
\end{auxproof}





\begin{auxproof}
 
 \paragraph*{\msct{} Semantics}\label{subsubsec:msct}
 \begin{definition}[$\mathbb{X}^{\msct{}}$]\label{def:CMclassMSCT}
  We define a CM classifier $\mathbb{X}^{\msct{}}\colon\Dsched(\Dconf(\Dtrans(Q)))\to\Dchan(\Dmass(Q))$ by $\mathbb{X}^{\msct{}}=\dist\mu_{(Q)}\circ\lambda_{\dist(Q)}\circ\dist\lambda_{Q}$. That is, concretely,
  $$\mathbb{X}^{\msct{}}\left(\sum_{i\in I}a_{i}\lrket{\sum_{j\in J}b_{i,j}\biggket{\sum_{q\in Q}p_{i,j,q}\ket{q}}}\right)=\sum_{f\in Q^{I\times J}}\prod_{(i,j)\in I\times J}p_{i,j,f(i,j)}\biggket{\sum_{(i,j)\in I\times J}a_{i}b_{i,j}\ket{f(i,j)}}.$$
 \end{definition}

 \begin{definition}[config MC under \msct{} semantics, memoryless]\label{def:ConfMCMSCT}
  Assume a similar setting as in~\cref{def:ConfMCCSCT} with $\mathbb{X}^{\msct{}}$ as the CM classifier instead. Let ${\cal M}^{\msct{}}_{\sigma}=(\dist(Q),\delta_{\mathbb{X}^{\msct{}},\sigma})$ denotes the config MC induced by ${\cal M}, \mathbb{X}^{\msct{}}, \sigma$. Its transition function $\delta_{\mathbb{X}^{\msct{}},\sigma}\colon\dist(Q)\to\dist(\dist(Q))$ is concretely given as follows: 
  \begin{displaymath}
    \delta_{\mathbb{X}^{\msct{}},\sigma}(d)(d')=\sum_{f\in Q^{Q\times\act}\text{ s.t. } d'=\dist f(d\otimes\sigma(d))}\prod_{(q,a)\in Q\times\act}\delta(q,a)(f(q,a)),
  \end{displaymath}
  where $d\otimes \sigma(d)\in \dist(Q\times\act)$ denotes the coupling of the configuration $d\in\dist(Q)$ and the distribution over actions $\sigma(d)\in \dist(\act)$, that is $(d\otimes \sigma(d))(q,a)=d(q)\cdot\sigma(d)(a)$.
 \end{definition}

\end{auxproof}

\begin{auxproof}
 \paragraph*{\csmt{} Semantics}\label{subsubsec:csmt}
 \begin{definition}[$\mathbb{X}^{\csmt{}}$]\label{def:CMclassCSMT}
  We define a CM classifier $\mathbb{X}^{\csmt{}}\colon\Dsched(\Dconf(\Dtrans(Q)))\to\Dchan(\Dmass(Q))$ by $\mathbb{X}^{\csmt{}}=\dist\mu_{(Q)}$. That is, concretely,
  $$\mathbb{X}^{\csmt{}}\left(\sum_{i\in I}a_{i}\lrket{\sum_{j\in J}b_{i,j}\biggket{\sum_{q\in Q}p_{i,j,q}\ket{q}}}\right)=\sum_{i\in I}a_{i}\biggket{\sum_{j\in J}\sum_{q\in Q}b_{i,j}p_{i,j,q}\ket{q}}.$$
 \end{definition}

 \begin{definition}[config MC under \csmt{} semantics, memoryless]\label{def:ConfMCCSMT}
  Assume a similar setting in~\cref{def:ConfMCCSCT} with $\mathbb{X}^{\csmt{}}$ as the CM classifier instead. Let ${\cal M}^{\csmt{}}_{\sigma}=(\dist(Q),\delta_{\mathbb{X}^{\csmt{}},\sigma})$ denotes the config MC induced by ${\cal M}, \mathbb{X}^{\csmt{}}, \sigma$. Its transition function $\delta_{\mathbb{X}^{\csmt{}},\sigma}\colon\dist(Q)\to\dist(\dist(Q))$ is concretely given as follows:
  \begin{displaymath}
    \delta_{\mathbb{X}^{\csmt{}},\sigma}(d)(d')=\sum_{a\in\act\text{ s.t. }d'=\mu_{Q}(\dist(\delta^{\wedge}(a))(d))}\sigma(d)(a).
  \end{displaymath}
 \end{definition}

\end{auxproof}

\begin{auxproof}
 
\paragraph*{\msmt{} Semantics}\label{subsubsec:msmt}
\begin{definition}[$\mathbb{X}^{\msmt{}}$]\label{def:CMclassMSMT}
  We define a CM classifier $\mathbb{X}^{\msmt{}}\colon\Dsched(\Dconf(\Dtrans(Q)))\to\Dchan(\Dmass(Q))$ by $\mathbb{X}^{\msmt{}}=\eta_{\dist(Q)}\circ\dist\mu_{Q}\circ\mu_{\dist(Q)}$. That is, concretely,
  $$\mathbb{X}^{\msmt{}}\left(\sum_{i\in I}a_{i}\lrket{\sum_{j\in J}b_{i,j}\biggket{\sum_{q\in Q}p_{i,j,q}\ket{q}}}\right)=\biggket{\sum_{i\in I}\sum_{j\in J}\sum_{q\in Q}a_{i} b_{i,j} p_{i,j,q}\ket{q}}.$$
\end{definition}

\begin{definition}[config MC under \msmt{} semantics, memoryless]\label{def:ConfMCMSMT}
  Assume a similar setting in~\cref{def:ConfMCCSCT} with $\mathbb{X}^{\msmt{}}$ as the CM classifier instead. Let ${\cal M}^{\msmt{}}_{\sigma}=(\dist(Q),\delta_{\mathbb{X}^{\msmt{}},\sigma})$ denotes the config MC induced by ${\cal M}, \mathbb{X}^{\msmt{}}, \sigma$. Its transition function $\delta_{\mathbb{X}^{\msmt{}},\sigma}\colon\dist(Q)\to\dist(\dist(Q))$ is concretely given as follows:
  \begin{displaymath}
    \delta_{\mathbb{X}^{\msmt{}},\sigma}(d)(d')=\begin{cases}
      1 & \quad d'=\sum_{a\in\act}\sum_{q'\in Q}\sum_{q\in Q}\sigma(d)(a) d(q)\delta(q,a)(q')\ket{q'} \\
      0 & \quad \text{otherwise}
    \end{cases}.
  \end{displaymath}


\end{definition}


\end{auxproof}

\begin{remark}[combinations other than the four]
 We have presented four instantiations of the CM classifier framework, corresponding
 \cref{table:fourVarIntro}. Combinatorially, there are $2^3=8$ possible ways of assigning $\Dsched$, $\Dconf$ and $\Dtrans$ to $\Dchan$ and $\Dmass$, but we have only four since we always assign $\Dconf$ to $\Dmass$. The other options (where $\Dconf$ is assigned to $\Dchan$) are not considered as they do not conceptually fit the general framework in \cref{sec:unifiedFramework}, where a `state' of a config MC induced by an MDP should be a distribution over the state space of the MDP. 



\end{remark}

\section{Algorithms and Complexity}\label{sec:Algo}
Now we present several complexity results and two algorithms for \cref{prob:GeneralThreshold}, with  $\mathbb{X}^{\csmt{}}$ and  $\mathbb{X}^{\msct{}}$ as  CM classifiers (the other two have been studied before). Due to space constraints, all proofs are in Appendix ~\ref{sec:proof}. For completeness, we start by recalling known results when $\mathbb{X}^{\csct{}}$ and  $\mathbb{X}^{\msmt{}}$ are provided as the input for the CM classifier. 
In this section,
we refer to \cref{prob:GeneralThreshold} as the \emph{threshold $\mathbf{S}$-reachability problem} with the CM classifier provided as $\mathbb{X}^{\mathbf{S}}$, where $\mathbf{S}$ can be $\csct{}, \msct{}, \csmt{}$ or $\msmt{}$.

\myparagraph{Known results on the \csct{} and the \msmt{} semantics}\label{subsec:KnownAlgoResult}
The \csct{} semantics models the conventional MDP semantics
(\cref{rem:CSCTConv}).
This implies that the threshold $\csct{}$-reachability problem is in {\sf Ptime} when the initial configuration $d_{0}$ is Dirac. The proof is a simple adaption of those for the conventional semantics and can be found in~\cite{Baier2008}. 

\begin{auxproof}
 For the \msmt{} semantics, it resembles the distribution transformer framework~\cite{DBLP:journals/tse/KwonA11,DBLP:conf/qest/KorthikantiVAK10,DBLP:conf/qest/ChadhaKVAK11,DBLP:conf/birthday/AghamovBKNOPV25,LMS14,DBLP:journals/jacm/AgrawalAGT15,DBLP:conf/lics/AkshayGV18,Akshay2023,Akshay2024}. 
 In~\cite{Akshay2015}, it is shown that a variant of~\cref{prob:GeneralThreshold}, which asks if there is a scheduler of type $\sigma\colon \dist(Q)\times Q\to \dist(\act)$ such that the target set $H$ is reachable, is shown to be positivity-hard. In this work, we focus on schedulers of type $\sigma\colon \dist(Q)\to\dist(\act)$, which is a more general type of scheduler as a scheduler of type $\dist(Q)\times Q\to\dist(\act)$ can be expressed as a scheduler of type $\dist(Q)\to \dist(\act^{Q})$ (proof in Appendix \ref{sec:proof}). Hence, this allows us to conclude that the threshold \msmt{}-reachability problem is {\sf Positivity-hard}. The proof is by adapting the proof from [4] via a simple modification to the global scheduler.
\end{auxproof}

  The \msmt{} semantics is about distribution transformers in~\cite{DBLP:journals/tse/KwonA11,DBLP:conf/qest/KorthikantiVAK10,DBLP:conf/qest/ChadhaKVAK11,DBLP:conf/birthday/AghamovBKNOPV25,LMS14,DBLP:journals/jacm/AgrawalAGT15,DBLP:conf/lics/AkshayGV18,Akshay2023,Akshay2024}. 
There, the threshold reachability  is shown to be positivity-hard (see~\cite{Akshay2015}), implying that the threshold \msmt{}-reachability is positivity-hard.
(The result in~\cite{Akshay2015} uses local actions but it is easily translated to our formalization 
with global actions. See \cref{rem:globalVsLocalActions}.)

\subsection{Complexity and Algorithm for the \msct{} Semantics}\label{subsec:msctAlgo}


We first a lower bound for the threshold \msct{}-reachability. It is proven by a reduction from the counting variant of the subset sum problem, which is known to be {\sf $\sharp$P-complete}~\cite{book-papa}.

\vspace{.2em}
\noindent
\begin{minipage}{\textwidth}
  \begin{restatable}{theorem}{MSCTSharpP}\label{thm:msct-sharpP-hard}
    The threshold \msct{}-reachability problem is {\sf $\sharp$P-hard}.
  \end{restatable}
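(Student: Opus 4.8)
The plan is to reduce from the counting Subset Sum problem $\#\mathrm{SUBSETSUM}$: given positive integers $w_1,\dots,w_n$ and a target $t\in\mathbb{N}$, output the number $k=\bigl|\{S\subseteq[n]\mid\sum_{i\in S}w_i=t\}\bigr|$; this problem is $\#P$-complete~\cite{book-papa}. Since $\#\mathrm{SUBSETSUM}$ is $\#P$-complete, it suffices to give a polynomial-time Turing reduction that computes $k$ using an oracle for the threshold $\msct{}$-reachability problem. We may assume $0\le t\le W:=\sum_i w_i$, as otherwise $k=0$. Notably, the MDP we build uses only one action.

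Given such an instance, build the single-action MDP ${\cal M}=(Q,\{a\},\delta)$ with $Q=\{s_1,\dots,s_n\}\cup\{Y,N\}$, where $\delta(s_i,a)=\tfrac12\ket{Y}+\tfrac12\ket{N}$ for each $i\in[n]$, and $\delta(Y,a)=\ket{Y}$, $\delta(N,a)=\ket{N}$ (self-loops). Since $\act=\{a\}$ is a singleton, there is a unique scheduler $\sigma$ (trivially memoryless), and by \cref{def:ConfigMCsInstances} the config MC ${\cal M}^{\msct{}}_{\sigma}$ has state space $\dist(Q)$ and $\delta_{\mathbb{X}^{\msct{}},\sigma}(d)(d')=\sum_{f\in Q^{Q}\,:\,\dist f(d)=d'}\prod_{q\in Q}\delta(q,a)(f(q))$. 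We take the initial configuration $d_{0}=\sum_{i=1}^{n}\tfrac{w_i}{W}\ket{s_i}$ and the singleton target set $H=\{d^{\star}\}$ with $d^{\star}=\tfrac{t}{W}\ket{Y}+\tfrac{W-t}{W}\ket{N}$.

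The core of the argument is the one-step analysis of ${\cal M}^{\msct{}}_{\sigma}$ at $d_{0}$. Split an $f\in Q^{Q}$ into its restriction $g$ to $\mathrm{supp}(d_{0})=\{s_1,\dots,s_n\}$ and its restriction to $Q\setminus\mathrm{supp}(d_{0})$; the transition factors of the latter sum to $1$, so both $\dist f(d_{0})$ and the one-step probability depend only on $g\colon\{s_1,\dots,s_n\}\to Q$, and $g$ contributes iff it maps into $\{Y,N\}$, in which case its weight is $\prod_i\tfrac12=2^{-n}$. Identifying such a $g$ with $S=\{i\mid g(s_i)=Y\}$, we get $\dist g(d_{0})=\tfrac1W\bigl(\sum_{i\in S}w_i\bigr)\ket{Y}+\tfrac1W\bigl(W-\sum_{i\in S}w_i\bigr)\ket{N}$, hence $\dist g(d_{0})=d^{\star}$ iff $\sum_{i\in S}w_i=t$. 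Because $Y$ and $N$ carry self-loops, every configuration supported on $\{Y,N\}$ is absorbing in the config MC, so a path from $d_{0}$ visits $H$ iff its first step lands on $d^{\star}$; since $d_{0}\neq d^{\star}$,
\[
  \Pb_{\mathbb{X}^{\msct{}},\sigma}\bigl(\reach_{\mathbb{X}^{\msct{}},\sigma}(d_{0},H)\bigr)=\delta_{\mathbb{X}^{\msct{}},\sigma}(d_{0})(d^{\star})=k\cdot 2^{-n}.
\]

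It remains to extract $k$. For $j=0,1,\dots,2^{n}$, the oracle on $({\cal M},d_{0},H,\xi_{j})$ with threshold $\xi_{j}=j\cdot 2^{-n}$ answers affirmatively iff $k\cdot 2^{-n}\ge j\cdot 2^{-n}$, i.e.\ iff $k\ge j$; a binary search over $j\in\{0,\dots,2^{n}\}$ thus pins down $k$ with $O(n)$ oracle calls. All numbers involved (the transition probabilities, the entries $w_i/W$, $t/W$ of $d_{0}$ and $d^{\star}$, and the thresholds $\xi_{j}$) are rationals of bit-size polynomial in the input, so the reduction runs in polynomial time, giving $\#P$-hardness. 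The main thing to get right is the $\msct{}$ bookkeeping: that the mass at each $s_i$ moves \emph{atomically} (transitions being chance-interpreted), that the states outside $\mathrm{supp}(d_{0})$ do not perturb the one-step distribution, and that absorption at $\{Y,N\}$ collapses ``eventually reach $H$'' into ``reach $H$ in one step'' — once these are checked, the identity $\Pb(\cdots)=k\,2^{-n}$ and the binary search are routine.
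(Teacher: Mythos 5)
Your proposal is correct and follows essentially the same route as the paper's proof: a reduction from counting subset-sum via a single-action MDP whose source states split their mass by a fair coin into two absorbing sinks, so that the one-step \msct{} transition probability to the target configuration equals (number of subsets summing to $t$)$\cdot 2^{-n}$, recovered by binary search over thresholds. The only differences are cosmetic (naming of states and your slightly more careful handling of the maps $f$ on zero-mass states, which the paper glosses over).
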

\end{minipage}


We show, however, that given an explicit scheduler, the problem becomes positivity-hard.  Unlike in conventional \csct{} or distributional \msmt{} semantics, here fixing a scheduler does not ease the problem.  \cref{thm:msct-wScheduler-pos-hard} is proved by reduction from the reachability problem for Markov chains under the \msmt{} semantics (shown positivity-hard in \cite{Akshay2015}).

\noindent
\begin{minipage}{\textwidth}
  \begin{problem}[threshold \msct{}-reachability with explicit scheduler]\label{prob:ThresholdMSCT-wScheduler}
  Given an MDP ${\cal M}=(Q,\act,\delta)$, an initial configuration $d_{0}\in\dist(Q)$, a scheduler $\sigma$, a set of configuration $H\subseteq\dist(Q)$ and a threshold $\xi\in [0,1]$, decide if
  $\Pb_{\mathbb{X}^{\msct{}},\sigma}(\reach_{\mathbb{X}^{\msct{}},\sigma}(d_{0},H))\geq\xi$.
  \end{problem}
\end{minipage}

\noindent
\begin{minipage}{\textwidth}
  \begin{restatable}[positivity-hardness when scheduler given]{theorem}{MSCTPosHard}\label{thm:msct-wScheduler-pos-hard}
    \cref{prob:ThresholdMSCT-wScheduler} is \text{Positivity-hard}.
  \end{restatable}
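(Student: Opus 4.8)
The plan is to reduce from the threshold reachability problem for Markov chains under the \msmt{} semantics, which is Positivity-hard by~\cite{Akshay2015}. The key conceptual point I would exploit is the following: under the \msmt{} semantics a Markov chain acts as a deterministic distribution transformer $d \mapsto dM$, so the ``reach a target set $H$'' question is really a question about the trajectory $d_0, d_0 M, d_0 M^2, \dots$ of a single linear map; this is where the Positivity/Skolem flavour comes from. I want to mimic this linear, deterministic evolution inside a \msct{}-config MC by choosing the MDP, the scheduler, and the target set so that the induced config MC is (essentially) deterministic and its single trajectory coincides with the \msmt{} trajectory of the given MC.

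First I would set up the reduction. Given a Markov chain instance $(Q, M)$ with initial distribution $d_0$, target set $H \subseteq \dist(Q)$ and threshold $\xi$ for the \msmt{} reachability problem, I build an MDP $\mathcal{M}=(Q,\act,\delta)$ with a \emph{single action} $\act=\{a\}$ and $M_a = M$ (so $\delta(q,a)=\sum_{q'}M_{qq'}\ket{q'}$), and I take the memoryless scheduler $\sigma$ that deterministically plays $a$, i.e.\ $\sigma(d) = 1\ket{a}$ for all $d$. The initial configuration and target set and threshold are carried over unchanged. Now I need to compute $\delta_{\mathbb{X}^{\msct{}},\sigma}$ for this instance. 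Looking at the formula in \cref{def:ConfigMCsInstances},
\[
  \delta_{\mathbb{X}^{\msct{}},\sigma}(d)(d') = \sum_{f\in Q^{Q\times\act}\ \text{s.t.}\ d'=\dist f(d\otimes\sigma(d))}\ \prod_{(q,a)\in Q\times\act}\delta(q,a)(f(q,a)),
\]
with $|\act|=1$ this is a sum over $f\in Q^{Q}$ weighted by $\prod_{q}\delta(q,a)(f(q))=\prod_q M_{q,f(q)}$, landing on the configuration $\dist f(d)$ — which is exactly the \emph{chance}-interpreted transition step. So in general this config MC is \emph{not} deterministic; the one transition step genuinely randomizes the destinations of the masses.

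This is the main obstacle: the \msct{} semantics chance-interprets transitions, so I cannot literally make the config-MC trajectory equal the deterministic \msmt{} trajectory $d_0, d_0M, d_0M^2,\dots$. The fix I would use is a ``point-mass gadget'': I arrange the reduction so that every configuration that actually arises along relevant runs is a \emph{Dirac} configuration on $Q$ (mass $1$ on a single state), because on Dirac configurations the chance interpretation and the mass interpretation agree (there is only one unit of mass, so splitting it by chance and splitting it by mass coincide, as in the Dirac-preservation remark \cref{rem:CSCTConv} for \csct{}; the same holds here). Concretely: the Positivity-hardness proof of~\cite{Akshay2015} ultimately encodes a linear recurrence / reachability question that can be phrased starting from a Dirac initial distribution on a fresh ``clock'' state, with the target being reachability of a Dirac configuration (or a set cut out by a linear equation among coordinates) — I would re-derive the hard instance in that Dirac-to-Dirac form. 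On Dirac $d=1\ket{q}$, $\delta_{\mathbb{X}^{\msct{}},\sigma}(d)(d') = \sum_{q': M_{q,q'}>0}M_{q,q'}\cdot\mathbbm{1}[d'=1\ket{q'}]$, i.e.\ the config MC on the set of Dirac configurations is isomorphic (via $1\ket{q}\leftrightarrow q$) to the underlying Markov chain $M$ itself — which is precisely the \csct{}/conventional Markov chain, whose threshold state-reachability is Positivity-hard by the same~\cite{Akshay2015}-style argument (state reachability in MCs is already Skolem/Positivity-hard). So: carry over exactly the Positivity-hard MC reachability instance from~\cite{Akshay2015}, view states as Dirac configurations, use the single-action MDP with the all-$a$ scheduler, and keep the target set $H$ as the image of the original target under $q\mapsto 1\ket{q}$ (or the corresponding linearly-defined subset of $\dist(Q)$ intersected with the Dirac configurations). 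Then $\Pb_{\mathbb{X}^{\msct{}},\sigma}(\reach(d_0,H))\ge\xi$ holds iff the original MC instance is a yes-instance, completing the reduction.

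The remaining steps are routine: (i) check that the map $q\mapsto 1\ket{q}$ is an isomorphism of Markov chains between the original MC and the sub-MC of $\mathcal{M}^{\msct{}}_{\sigma}$ on Dirac configurations, and that this sub-MC is \emph{closed} under transitions (immediate, since with one action a Dirac configuration can only step to Dirac configurations); (ii) check that reachability probabilities are preserved by this isomorphism (standard); (iii) observe the reduction is polynomial-time (the MDP, scheduler, $d_0$, $H$, $\xi$ are all the original data, essentially verbatim); (iv) make sure that if the~\cite{Akshay2015} hardness is stated with local actions, the translation to our global-action formalism (\cref{rem:globalVsLocalActions}) does not disturb the Dirac-to-Dirac structure — with a single abstract action this is trivial. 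I would flag step (i)'s closedness claim and the precise form of $H$ (Dirac target vs.\ linear-equation target) as the only places needing care, but both follow directly from the concrete \msct{}-transition formula once the instance is in Dirac-to-Dirac form.
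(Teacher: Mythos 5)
There is a genuine gap, and it is fatal to the proposed route. Your ``point-mass gadget'' destroys exactly the structure that makes the problem Positivity-hard. The hardness in~\cite{Akshay2015} comes from the \emph{deterministic linear} evolution $d_0, d_0M, d_0M^2,\dots$ of a genuinely mixed distribution: asking whether this linear-recurrence trajectory ever enters a half-space is the Positivity problem. If you force every configuration along the run to be Dirac, the config MC on Dirac configurations is (as you yourself observe) isomorphic to the underlying Markov chain in the conventional \csct{} sense, and threshold reachability of a target \emph{state} in a finite Markov chain is solvable in polynomial time by a linear system --- it is not Skolem- or Positivity-hard. Your parenthetical claim that ``state reachability in MCs is already Skolem/Positivity-hard'' is false, and it is the load-bearing step of your reduction. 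Moreover, the ``Dirac-to-Dirac'' form of the hard instance you propose to re-derive cannot exist: under \msmt{} a Dirac configuration $1\ket{q}$ steps to the row $M_{q,\cdot}$, which is non-Dirac unless $M$ is a $0$--$1$ matrix, in which case reachability is trivial graph reachability. So either the trajectory leaves the Dirac configurations (and your isomorphism argument no longer applies, since the \msct{} step then genuinely randomizes) or the instance is easy.

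The idea you are missing is the one the paper uses: under \msct{} it is the \emph{scheduler} that is mass-interpreted, so the Markov chain's probabilities must be carried by the scheduler rather than by the MDP's transition function. The paper makes every MDP transition Dirac (one action per edge $(p,q)$ of the given chain, each action moving mass deterministically), so the chance interpretation of transitions becomes vacuous and the induced config MC is deterministic; the mixed, configuration-dependent scheduler then splits the mass at each state according to $M$, reproducing $d\mapsto dM$ exactly. Because schedulers are global (\cref{rem:globalVsLocalActions}), the construction additionally needs the round-robin bookkeeping track $Q\times\{2\}$ and the auxiliary states $\phi,\psi$ so that one original state's outgoing transitions are simulated per step, with one MC step completed every $\abs{Q}$ rounds. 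None of this machinery appears in your plan, and without it the chance-interpreted transitions of a single-action MDP scatter the mass randomly, as you correctly computed before taking the wrong turn.
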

\end{minipage}

\begin{remark}
  The proof of the above results are presented in Appendix~\ref{sec:proof}.
  Both of the proof does not require the scheduler to be memoryful, hence the hardness results holds even when the problem is restricted to memoryless schedulers.
\end{remark}

\paragraph*{A Template-based Synthesis Algorithm for Finitely-Generated Monotone Set}

Now we present a template-based synthesis algorithm for solving the threshold \msct{}-reachability problem,  under the assumption that the target set is provided as a finitely-generated monotone set.

We saw that the threshold reachability problem is hard (\cref{thm:msct-sharpP-hard,thm:msct-wScheduler-pos-hard})---this is because  MDPs under the \msct{} semantics can branch into exponentially many  configurations which are expensive to track. Therefore, instead of tracking reachable config's, we
directly search for a reachability certificate by template-based synthesis and constraint solving.

 
Our result is an extension of the template-based algorithm in \cite{Akshay2024} with \emph{$\gamma$-scaled submartingales} introduced in \cite{Takisaka2021}. We also make several changes on the assumption of the target set, namely using upward- or downward-closed sets, which involves some technical novelties. We present a version with upward-closed sets; dealing with downward-closed sets is similar.




We start by defining the $\gamma$-scaled configuration submartingale, adapted from \cite{Takisaka2021}.

\vspace{.3em}
\noindent
\begin{minipage}{\textwidth}
 \begin{definition}[$\gamma$-scaled configuration submartingale]\label{def:gamma-subm}
  Given an MDP ${\cal M}$, a memoryless scheduler $\sigma$, a target set $H$ and a real number $\gamma\in (0,1)$, a $\gamma$-scaled configuration submartingale is a function $R:\dist(Q)\to [0,1]$ such that the following holds:  for each $d\in\dist(Q)\setminus H$,
  \begin{equation}\label{eqt:gamma-subm}
    R(d)\;\leq\; \gamma\cdot \sum_{d'\in\dist(Q)}R(d')\cdot\delta_{\mathbb{X}^{\msct{}},\sigma}(d)(d').
  \end{equation}
 \end{definition}
\end{minipage}
Note that, in~\cref{eqt:gamma-subm}, the term $\delta_{\mathbb{X}^{\msct{}},\sigma}(d)$ is of type $\dist(\dist(Q))$ which gives a distribution over successor configurations $d'$  in the config MC under the scheduler $\sigma$. Hence, evaluating it with $d'$ gives the probability of reaching $d'$ from $d$.
The sum in~\cref{eqt:gamma-subm} can be understood as the expected value of the submartingale $R$ after one-step under the scheduler $\sigma$. Note that the sum is well-defined since all distributions in this paper are countably supported.

\begin{auxproof}
 Furthermore, we shall see later under suitable assumption, \cref{eqt:gamma-subm} can be evaluated efficiently. 
\end{auxproof}

The next theorem is proved much like in~\cite{Takisaka2021}.

\begin{theorem}[$\gamma$-scaled submartingale witness]\label{thm:subm-witness}
  Assume that the input of \cref{prob:GeneralThreshold} is given, with $\mathbb{X}^{\msct{}}$ as the CM classifier.  Let $R$ be a $\gamma$-scaled submartingale $R$ for some $\gamma\in (0,1)$. Then
  \begin{math}
    R(d_{0})\geq\xi
  \end{math}
  implies there exists a memoryless scheduler $\sigma$ such that 
\begin{math}
\Pb(\reach_{\mathbb{X}^{\msct{}},\sigma}(d_{0},H))\geq\xi.
\end{math}
If $R(d_{0})\geq\xi$ holds, we say that $R$ is a witness of reachability.
\end{theorem}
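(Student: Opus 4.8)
The plan is to prove the stronger pointwise inequality $R(d)\le\mathbb{P}_{\mathbb{X}^{\msct{}},\sigma}\bigl(\reach_{\mathbb{X}^{\msct{}},\sigma}(d,H)\bigr)$ for every $d\in\dist(Q)$, where $\sigma$ is the memoryless scheduler witnessing that $R$ is a $\gamma$-scaled submartingale; taking $d=d_{0}$ and using $R(d_{0})\ge\xi$ then closes the argument. First I would set up notation: write $P(d,d')=\delta_{\mathbb{X}^{\msct{}},\sigma}(d)(d')$ for the transition kernel of the config MC ${\cal M}^{\msct{}}_{\sigma}$ (finitely supported in $d'$ by \cref{def:ConfigMCsInstances}, so $\sum_{d'}P(d,d')=1$), let $V_{n}(d)$ be the probability of reaching $H$ from $d$ within $n$ steps, and $V(d)=\mathbb{P}_{\mathbb{X}^{\msct{}},\sigma}\bigl(\reach_{\mathbb{X}^{\msct{}},\sigma}(d,H)\bigr)$. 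I would record the standard value-iteration facts obtained by first-step analysis: $V_{0}=\mathbf{1}_{H}$; $V_{n+1}(d)=1$ for $d\in H$ and $V_{n+1}(d)=\sum_{d'}V_{n}(d')P(d,d')$ for $d\notin H$; and $V_{n}\nearrow V$ pointwise (continuity of the path measure from below).

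The heart of the proof is the invariant
\begin{equation*}
  R(d)\;\le\;V_{n}(d)+\gamma^{n}\qquad\text{for all }n\ge 0,\ d\in\dist(Q),
\end{equation*}
which I would prove by induction on $n$. The base case is immediate since $R$ is $[0,1]$-valued and $V_{0}\ge 0$. For the inductive step, the case $d\in H$ is trivial ($R(d)\le 1=V_{n+1}(d)$), and for $d\notin H$ one chains the defining inequality \cref{eqt:gamma-subm}, the induction hypothesis, and $\sum_{d'}P(d,d')=1$:
\begin{equation*}
  R(d)\;\le\;\gamma\sum_{d'}R(d')P(d,d')\;\le\;\gamma\sum_{d'}\bigl(V_{n}(d')+\gamma^{n}\bigr)P(d,d')\;=\;\gamma\bigl(V_{n+1}(d)+\gamma^{n}\bigr)\;\le\;V_{n+1}(d)+\gamma^{n+1},
\end{equation*}
the last step using $0<\gamma<1$ and $V_{n+1}(d)\ge 0$. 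Letting $n\to\infty$ then gives $R(d)\le V(d)$ because $\gamma^{n}\to 0$ (the unique place where the strictness $\gamma<1$ is used) and $V_{n}(d)\to V(d)$; specializing to $d_{0}$ yields $\xi\le R(d_{0})\le V(d_{0})=\mathbb{P}_{\mathbb{X}^{\msct{}},\sigma}\bigl(\reach_{\mathbb{X}^{\msct{}},\sigma}(d_{0},H)\bigr)$, which is the desired conclusion with this same $\sigma$.

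I do not expect a genuine obstacle here---the statement is a soundness lemma in the style of \cite{Takisaka2021}---but a few points need care. First, all the sums $\sum_{d'}(\cdots)P(d,d')$ range over at most countably many $d'$ and converge, which is guaranteed by the finitely-supported transition kernel of ${\cal M}^{\msct{}}_{\sigma}$ (this also validates the splitting $\sum_{d'}(V_{n}(d')+\gamma^{n})P(d,d')=\sum_{d'}V_{n}(d')P(d,d')+\gamma^{n}$). Second, although the config MC is carried by the uncountable set $\dist(Q)$, it is finitely branching from $d_{0}$, so $V$ and the path measure $\mathbb{P}_{\mathbb{X}^{\msct{}},\sigma}$ are well-defined on the countable reachable fragment via the usual cylinder construction, and $V$ is the pointwise supremum of the $V_{n}$ (equivalently, the least fixed point of the Bellman operator). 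Third, as an alternative to value iteration one could argue that $\bigl(R(Y_{k\wedge\tau})\bigr)_{k}$ is a bounded submartingale along the config-MC trajectory $(Y_{k})_{k}$ with $\tau$ the hitting time of $H$; but then one must show its almost-sure limit vanishes on $\{\tau=\infty\}$, which is slightly delicate, so I would stick with the value-iteration route above, where $\gamma<1$ does this work for free in the limit.
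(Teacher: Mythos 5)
Your proof is correct, and it follows exactly the route the paper intends: the paper gives no explicit proof of \cref{thm:subm-witness} but defers to the $\gamma$-scaled submartingale technique of~\cite{Takisaka2021}, whose soundness argument is precisely your induction $R(d)\le V_{n}(d)+\gamma^{n}$ against the value-iteration approximants, with $\gamma<1$ killing the error term in the limit. The points you flag for care (finitely supported kernel, countable reachable fragment of $\dist(Q)$, the scheduler being the one fixed in \cref{def:gamma-subm}) are exactly the right ones and are handled correctly.
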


As usual in  template-based synthesis, our algorithm collects constraints---ours are given as a system of inequalities over reals---and solves them. The construction is  three steps.
\begin{compactitem}
  \item In the first step, templates for a 
scheduler $\sigma$ and a $\gamma$-scaled configuration submartingale $R$ are constructed.
  \item In the second step, constraints are constructed to ensure that a feasible solution of the constraints corresponds to a valid scheduler and a valid $\gamma$-scaled submartingale that is a witness of reachability.
  \item In the final step, the whole constraints are passed to an off-the-shelf polynomial inequality solver to compute a feasible solution (we use Z3). If a feasible solution is found, the algorithm concludes the target set is reachable and return the submartingale as a witness.
\end{compactitem}

Below we briefly describe each step.

\myparagraph{Step 1: Setup of Template}
The first step is to set up templates for a scheduler and a  $\gamma$-scaled submartingale.  This step is standard and done similarly to other template-based synthesis algorithms; thus we only outline the templates here. For details, see~\cite{Akshay2024} and the references therein.
\begin{compactitem}
  \item \textbf{Scheduler}:  two sets of template variables $\{\theta^{a}_{q}\mid a\in\act, q\in Q\cup\{0\}\}$ and $\{s_{q}\mid q\in Q\cup\{0\}\}$ are fixed, such that the target scheduler
$\sigma$  is  as follows:
for each configuration $d\in\dist(Q)$ and action $a\in \act$,
        $\sigma(d)(a)=\frac{\theta^{a}_{0}+\sum_{q\in Q}\theta^{a}_{q}\cdot d(q)}{s_{0}+\sum_{q\in Q}s_{q}\cdot d(q)}.$
  \item \textbf{Submartingale}: a set of template variables $\{r_{q}\mid q\in Q\cup\{0\}\}$ is fixed so that the target submartingale $R\colon \dist(Q)\to [0,1]$ is  
        $R(d)=r_{0}+\sum_{q\in Q}r_{q}\cdot d(q)$.
\end{compactitem}

\myparagraph{Step 2: Constraint Collection}
We now describe how the constraints for the templates are constructed to enforce the validity and witness property of the templates.


\myparagraph{Constraints for Schedulers}
To ensure that for any configuration $d\in\dist(Q)$, $\sigma(d)$ is a distribution over actions, 
we require it to satisfy 1) for any action $a\in\act$, $\sigma(d)(a)\geq 0$ and 2) $\sum_{a\in\act}\sigma(d)(a)=1$. 
Concretely, using the template variables, 
\begin{equation}\label{eqt:constraint-scheduler}
  \Phi_{\sf Schedule}:\begin{cases}
    d\in\dist(Q) & \implies \bigwedge_{a\in \act}\left[\theta^{a}_{0}+\sum_{q\in Q}\theta^{a}_{q}\cdot d(q)\geq 0\right]             \\
    d\in\dist(Q) & \implies s_{0}+\sum_{q\in Q}s_{q}\cdot d(q)\geq 1                                                                 \\
    d\in\dist(Q) & \implies \sum_{a\in\act}\theta^{a}_{0}+\sum_{q\in Q}\theta^{a}_{q}\cdot d(q) = s_{0}+\sum_{q\in Q}s_{q}\cdot d(q).
  \end{cases}
\end{equation}

\myparagraph{Constraints for Submartingales}
 For submartingales, we have two constraints: 1) for any configuration $d\in \dist(Q)$, $R(d)\in [0,1]$ and 2) for any configuration $d\in\dist(Q)\setminus H$, $R(d)$ satisfies the submartingale condition in \cref{eqt:gamma-subm}. Concretely, using the template variables, 
\begin{align}
  &\Phi_{\sf Bound}:&& \textstyle d\in\dist(Q)\implies 0\leq r_{0}+\sum_{q\in Q}r_{q}\cdot d(q)\leq 1,\label{eqt:constraint-subm-bound}\\
  &\Phi_{\sf Inductive}:&& \textstyle d\in\dist(Q)\setminus H\implies r_{0}+\sum_{q\in Q}r_{q}\cdot d(q)\leq \gamma\sum_{a\in\act}\sigma(q)(a)R(M_a^{T}d).\label{eqt:constraint-subm-inductive}
\end{align}


\myparagraph{Constraints for Reachability Certificates} We also impose a constraint so that 
 $R$ is a witness of reachability, i.e. $R(d_{0})\geq \xi$. 
Concretely,
\begin{align}\label{eqt:subm-reach-cert}
&  \Phi_{\sf Reachable}: &&\textstyle R(d_{0})=r_{0}+\sum_{q\in Q}r_{q}d_{0}(q)\geq \xi.
\end{align}

\begin{auxproof}
  Then we proceed to handle the inductive rule in equation \cref{eqt:gamma-subm}, a technical difficulty here is that in general to express right-hand side of \cref{eqt:gamma-subm} one has to enumerate over all possible branches, which leads to an exponential time of translation. Here we exploit the linear structure of the submartingale,
  \begin{lemma}\label{lem:subm-linearity}
    Fix a MDP ${\cal M}$ and a memoryless scheduler $\sigma$, when $R$ is linear, $\forall a\in \act$,
    $$\int_{\dist(Q)}R\,d\mathbb{X}^{\msct{}}(d,\sigma) = \sum_{a\in\act}\sigma(d)(a) R(M_{a}^{T}d)$$
  \end{lemma}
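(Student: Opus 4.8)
The plan is to prove \cref{lem:subm-linearity} by a direct computation that exploits the linearity of $R$ together with the explicit formula for the transition function $\delta_{\mathbb{X}^{\msct{}},\sigma}$ given in \cref{def:ConfigMCsInstances}. Recall that $R(d)=r_0+\sum_{q\in Q}r_q\cdot d(q)$, so $R$ is an affine functional on $\dist(Q)$. The left-hand side of the claimed identity is the expected value $\sum_{d'\in\dist(Q)}R(d')\cdot\delta_{\mathbb{X}^{\msct{}},\sigma}(d)(d')$, i.e.\ the one-step expectation of $R$ starting from configuration $d$. The key observation is that, because $R$ is affine, the expectation of $R$ over any distribution $\mu\in\dist(\dist(Q))$ depends only on the ``barycenter'' $\overline{\mu}\in\dist(Q)$ defined by $\overline{\mu}(q)=\sum_{d'}\mu(d')\cdot d'(q)$; concretely $\sum_{d'}R(d')\mu(d')=R(\overline{\mu})$. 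So it suffices to compute the barycenter of $\delta_{\mathbb{X}^{\msct{}},\sigma}(d)$ and show it equals $\sum_{a\in\act}\sigma(d)(a)\cdot M_a^{T}d$, from which the identity $\sum_{a}\sigma(d)(a)R(M_a^T d)=R\bigl(\sum_a\sigma(d)(a)M_a^T d\bigr)$ follows by affineness once more.

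First I would record the affineness lemma: for any $\mu\in\dist(\dist(Q))$ and any affine $R$, $\sum_{d'}R(d')\,\mu(d')=R(\overline{\mu})$. This is an immediate swap of summation order: $\sum_{d'}\mu(d')\bigl(r_0+\sum_q r_q d'(q)\bigr)=r_0+\sum_q r_q\sum_{d'}\mu(d')d'(q)=r_0+\sum_q r_q\overline{\mu}(q)=R(\overline{\mu})$. Second I would compute the barycenter of $\mu=\delta_{\mathbb{X}^{\msct{}},\sigma}(d)$. From the \msct{} formula, $\delta_{\mathbb{X}^{\msct{}},\sigma}(d)$ is the pushforward $\dist\Phi\bigl(\bigotimes_{(q,a)}\delta(q,a)\bigr)$ along the map $\Phi$ sending a choice function $f\in Q^{Q\times\act}$ to the configuration $\dist f(d\otimes\sigma(d))=\sum_{(q,a)}d(q)\sigma(d)(a)\ket{f(q,a)}$. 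Hence the barycenter at state $q'$ is the probability that a random $f$ (with $f(q,a)\sim\delta(q,a)$ independently) together with a random pair $(q,a)\sim d\otimes\sigma(d)$ produces $f(q,a)=q'$. By independence this is $\sum_{(q,a)}d(q)\sigma(d)(a)\cdot\delta(q,a)(q')=\sum_a\sigma(d)(a)\sum_q d(q)\delta(q,a)(q')=\sum_a\sigma(d)(a)(M_a^T d)(q')$. So $\overline{\mu}=\sum_a\sigma(d)(a)M_a^T d$. Third, combining: $\sum_{d'}R(d')\delta_{\mathbb{X}^{\msct{}},\sigma}(d)(d')=R(\overline{\mu})=R\bigl(\sum_a\sigma(d)(a)M_a^T d\bigr)=\sum_a\sigma(d)(a)R(M_a^T d)$, the last equality again by affineness of $R$ (this time applied to the finite convex combination indexed by $a$).

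The only mild obstacle is bookkeeping the pushforward/product-distribution computation in the barycenter step cleanly --- one must be careful that the support of $\bigotimes_{(q,a)}\delta(q,a)$ is over $Q^{Q\times\act}$ and that the fibre sum over $f$ with $\dist f(d\otimes\sigma(d))=d'$ collapses exactly as the independence argument suggests; it is helpful to phrase this as ``the marginal of the coupling at the $(q,a)$-coordinate, pushed through $\delta(q,a)$,'' so that no explicit enumeration over $f$ is needed. There is also the trivial but worth-stating point that all sums converge absolutely because every distribution here is finitely supported (as noted after \cref{def:ConfMC} and \cref{def:gamma-subm}), so reordering is legitimate. Once these are in place the lemma is essentially two applications of affineness sandwiching one elementary computation of a pushforward barycenter; no deeper machinery is required. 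I would close by remarking that this is precisely what makes \cref{eqt:constraint-subm-inductive} a polynomial-size constraint: the right-hand side $\gamma\sum_{a\in\act}\sigma(d)(a)R(M_a^T d)$ avoids the exponential enumeration over branching configurations that a naive reading of \cref{eqt:gamma-subm} would entail.
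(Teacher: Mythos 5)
Your proposal is correct. It proves exactly the identity the paper establishes in the appendix (under ``Proof of right-hand side of \cref{eqt:constraint-subm-inductive}''), and the computational core is the same marginalization over choice functions $f\in Q^{Q\times\act}$ --- in both arguments the decisive step is that $\sum_{f\colon f(s,b)=q'}\prod_{(p,a)}\delta(p,a)(f(p,a))$ factors into $\delta(s,b)(q')$ times a product of normalizing sums each equal to $1$. The difference is organizational: the paper computes $\sum_{d'}R(d')\,\delta_{\mathbb{X}^{\msct{}},\sigma}(d)(d')$ head-on in one long chain (expand $\delta_{\mathbb{X}^{\msct{}},\sigma}$, distribute $R$, peel off the $r_0$ term, substitute the pushforward, swap sums, marginalize), whereas you factor the argument through a reusable ``barycenter'' lemma --- an affine $R$ satisfies $\sum_{d'}R(d')\mu(d')=R(\overline{\mu})$ --- and reduce the problem to computing the mean of the one-step distribution, which you then identify as $\sum_a\sigma(d)(a)M_a^Td$ by an independence/Fubini argument. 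Your decomposition is cleaner and makes transparent \emph{why} linearity of the template is what collapses the exponential branching into a polynomial-size expression (it is the only property used, twice); the paper's direct computation buys nothing extra here beyond being self-contained in a single display. Both proofs correctly rely on all distributions being finitely supported to justify the reorderings, a point you state explicitly and the paper leaves implicit.
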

  \begin{proof}
    It follows simply from $R$ being linear, thus by viewing the configuration as some random vector, then expectation of $R$ is equal to $R$ acting on the expected configuration, for detail algebra please refer to the next part of the appendix.
  \end{proof}
\end{auxproof}

\myparagraph{Step 2.5: Quantifier Elimination}
It is worth noting that all the constraints we obtained in the system are either a simple inequality or a quantified formula in the form of $\forall \vx\in\mathbb{R}^{n},\, \Phi(x)\geq 0\implies\psi(x)\geq 0$, where $\Phi$ is 
a linear function and $\psi$ is a polynomial. Therefore, the well-known Farkas's lemma~\cite{farkas1902theorie} and Handelman's theorem~\cite{handelman1988representing} can be applied to perform quantifier elimination, which reduce the problem to solving a system of polynomial inequalities over reals. Further details are presented in the appendix.

\begin{auxproof}
  Lastly we briefly describe the approach for performing quantifier elimination, we rely on both Farkas's lemma \cite{farkas1902theorie}, Handelman's theorem~\cite{handelman1988representing} and their extension with non-strict inequalities support \cite{amir23}. We first recall Farkas's lemma and Handelman's theorem.

  \begin{theorem}[Farkas's lemma]
    A system of $m$ linear inequalities $\Phi\vx\succeq 0$ entails a linear inequality $\phi(\vx)\geq 0$ if and only if there exists $\vy\in\mathbb{R}^{m}_{\geq 0}$ such that $\phi=\Phi\vy$, that is,
    $\forall\vx\in\mathbb{R}^{n}, \left(\Phi\vx\succeq 0\implies \phi(\vx)\geq 0\right)\iff\exists \vy\in\mathbb{R}^{m}_{\geq 0},\,\left(\phi=\vy^{T}\Phi\right)$.
  \end{theorem}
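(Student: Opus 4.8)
The plan is to prove the two directions separately, the reverse one being a one-line verification and the forward one requiring a small amount of convex geometry. For $(\Leftarrow)$, suppose $\phi = \vy^{T}\Phi$ with $\vy\in\mathbb{R}^{m}_{\geq 0}$. Then for any $\vx$ with $\Phi\vx\succeq 0$ we have $\phi(\vx) = \vy^{T}(\Phi\vx)$, a sum of products of nonnegative numbers, hence $\geq 0$; so the implication $\Phi\vx\succeq 0\implies\phi(\vx)\geq 0$ holds.

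For $(\Rightarrow)$ I would argue by contraposition. Let $C = \{\,\vy^{T}\Phi \mid \vy\in\mathbb{R}^{m}_{\geq 0}\,\}\subseteq\mathbb{R}^{n}$ be the convex cone of nonnegative combinations of the rows of $\Phi$, and suppose $\phi\notin C$; I must produce $\vx$ with $\Phi\vx\succeq 0$ but $\phi(\vx) < 0$. The argument has three steps. (i) Show that $C$, being finitely generated (by the $m$ rows), is topologically closed in $\mathbb{R}^{n}$; this is the one substantive step, and I would do it via the conic Carathéodory theorem --- every element of $C$ is already a nonnegative combination of at most $n$ of the rows forming a linearly independent set --- followed by a compactness argument ranging over the finitely many such subfamilies of rows. (ii) Since $C$ is closed and convex and $\phi\notin C$, the separating-hyperplane theorem gives a vector $\vx\in\mathbb{R}^{n}$ with $\phi(\vx) > \sup_{\psi\in C}\psi(\vx)$; because $C$ is a cone containing the origin, $\sup_{\psi\in C}\psi(\vx)$ is either $0$ or $+\infty$, and finiteness forces it to be $0$, so $\phi(\vx) > 0$ while $\psi(\vx)\leq 0$ for every $\psi\in C$. (iii) Specialising $\vy$ to each standard basis vector shows every row of $\Phi$, evaluated at $\vx$, is $\leq 0$, i.e.\ $\Phi\vx\preceq 0$; hence $\Phi(-\vx)\succeq 0$ while $\phi(-\vx) = -\phi(\vx) < 0$, which is exactly the counterexample to the implication.

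The main obstacle is step (i): closedness of the finitely generated cone. It is the only place where the reasoning is not immediate, and it is precisely the subtlety that makes ``separate a point from a convex set'' fail for general convex sets. An alternative that avoids the topology altogether is Fourier--Motzkin elimination on the system $\{\Phi\vx\succeq 0,\ \phi(\vx) < 0\}$: if this system is infeasible, eliminating $x_{1},\dots,x_{n}$ in turn --- each step replacing the current inequalities by explicit nonnegative combinations of them --- terminates in a manifestly false inequality $0 < 0$, and reading off the accumulated multipliers yields the required $\vy\in\mathbb{R}^{m}_{\geq 0}$ constructively. Since the paper uses Farkas's lemma only as a black box for quantifier elimination, in the write-up I would cite a standard source for whichever proof rather than reproduce it in full.
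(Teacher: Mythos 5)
Your argument is correct, but there is nothing in the paper to compare it against: the paper recalls Farkas's lemma as a classical result, cites Farkas (1902), and gives no proof, using the statement purely as a black box for quantifier elimination. Your write-up is the standard duality proof and it is sound. The easy direction is exactly the one-line nonnegativity check. For the hard direction, your contrapositive via the cone $C$ of nonnegative combinations of the rows of $\Phi$ is the classical route, and you correctly isolate the one genuinely substantive step, namely that a finitely generated cone is closed; the conic Carath\'eodory reduction to linearly independent subfamilies followed by closedness of each resulting subcone (and finiteness of the union) is the right way to discharge it, and it is indeed the step that separates Farkas from a naive separation argument. Steps (ii) and (iii) are also handled correctly: since $0\in C$ the supremum of $\psi(\vx)$ over $\psi\in C$ is either $0$ or $+\infty$, strict separation forces it to be $0$, and evaluating at the standard basis vectors turns the conclusion $\psi(\vx)\leq 0$ for all $\psi\in C$ into $\Phi\vx\preceq 0$, so $-\vx$ is the desired counterexample to the entailment. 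The Fourier--Motzkin alternative you sketch is likewise a valid, fully constructive substitute. One presentational caveat: the statement as printed in the paper writes both $\phi=\Phi\vy$ and $\phi=\vy^{T}\Phi$; only the latter is dimensionally meaningful ($\vy^{T}\Phi$ is a row vector of coefficients), and your proof implicitly uses the correct reading. Your closing suggestion to cite a standard source rather than reproduce the proof is exactly what the paper does, so for the purposes of this paper your proof is supererogatory but harmless.
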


  \begin{theorem}[Handelman's theorem]
    Fix a system of $m$ linear inequalities $\Phi\vx\succeq 0$, let ${\sf Prod}(\Phi)=\{\prod_{k=1}^{K}\Psi_{k}\mid K\in \{0\}\cup[m],\Psi_{k}\in\Phi\}$, where the case $K=0$ corresponds to $1$ and the notation $\Psi_{k}\in\Phi$ is abused to represent $\Psi_{k}$ is one of the linear inequalities in $\Phi$. Then if the feasible set of $\Phi$ is non-empty and compact, then $\Phi\vx\succeq 0$ entails a polynomial inequality $\phi(\vx)> 0$ if and only if $\phi$ can be expressed as some convex combination of $\psi\in {\sf Prod}(\Phi)$, i.e. $\forall\vx\in\mathbb{R}^{n}, \left(\Phi\vx\succeq \mathbf{0}\implies \phi(\vx) > 0\right)\iff\exists y_{1},...,y_{t}\geq 0$ and $\psi_{1},...,\psi_{t}\in {\sf Prod}(\Phi)$ such that
    $\psi_{1},...,\psi_{t}\in {\sf Prod}(\Phi),\phi(\vx)=\sum_{i=1}^{t}y_{i}\psi_{i}(\vx)$.
  \end{theorem}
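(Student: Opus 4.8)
The plan is to establish the two implications separately, treating the reverse (soundness) direction as routine and concentrating on the forward (completeness) direction, which carries the classical content of Handelman's result.

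For the reverse direction, suppose $\phi=\sum_{i=1}^{t}y_{i}\psi_{i}$ with $y_{i}\ge 0$ and each $\psi_{i}=\prod_{k}\Psi_{k}\in{\sf Prod}(\Phi)$ a product of constraints drawn from $\Phi$. At any feasible $\vx$ every factor satisfies $\Psi_{k}(\vx)\ge 0$, so each product $\psi_{i}(\vx)\ge 0$ and hence $\phi(\vx)=\sum_{i}y_{i}\psi_{i}(\vx)\ge 0$. Strictly speaking this yields only $\phi\ge 0$ on the feasible set; the strict inequality $\phi>0$ is recovered exactly when the empty product $1\in{\sf Prod}(\Phi)$ (the case $K=0$) occurs with a positive coefficient, which I would record as a side remark on the precise form of the statement.

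The forward direction is the heart of the matter: assuming $\phi>0$ on the nonempty compact polytope $K=\{\vx:\Psi_{k}(\vx)\ge 0\}$, one must produce the representation. I would reduce this to \textbf{P\'olya's theorem} (a form strictly positive on the standard simplex, multiplied by a sufficiently high power of the sum of its variables, acquires only nonnegative coefficients). The reduction proceeds in three moves. First, homogenize: the affine map $\vx\mapsto(\Psi_{1}(\vx),\dots,\Psi_{m}(\vx))$ carries $K$ into the nonnegative orthant, and compactness of $K$ lets one normalize (rescaling the $\Psi_{k}$ and adjoining a slack coordinate $\Psi_{0}$) so that the forms sum to a constant on $K$; this realizes $K$ as a section of a standard simplex $\Delta$ in the coordinates $(\Psi_{0},\dots,\Psi_{m})$ under the affine relation $\sum_{k}\Psi_{k}=1$. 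Second, rewrite $\phi$ in these coordinates as a homogeneous form $F$ that inherits strict positivity on $\Delta$. Third, apply P\'olya's theorem to obtain an $N$ for which $(\sum_{k}\Psi_{k})^{N}F$ has all positive coefficients; since $\sum_{k}\Psi_{k}=1$ holds identically on $K$, the prefactor may be discarded and $F$ agrees on $K$ with a positive combination of monomials $\prod_{k}\Psi_{k}^{\alpha_{k}}$, i.e.\ of elements of ${\sf Prod}(\Phi)$. Translating back recovers the claimed representation of $\phi$.

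I expect the homogenization bookkeeping to be the main obstacle: one must correctly handle the inhomogeneity of the $\Psi_{k}$, ensure that compactness is genuinely used both to bound the image and to preserve strict positivity after homogenization, and verify that the identity obtained modulo $\sum_{k}\Psi_{k}=1$ lifts to a genuine polynomial identity in the original variables (so that the degrees of the products, governed by P\'olya's exponent $N$, are correctly accounted for). A cleaner but heavier alternative is the functional-analytic proof: were $\phi$ not in the convex cone $T$ generated by ${\sf Prod}(\Phi)$, a separating linear functional $L$ with $L|_{T}\ge 0$ and $L(\phi)\le 0$ would exist; compactness of $K$ then lets one represent $L$ by a positive measure $\mu$ supported on $K$ via a moment/Riesz-representation argument, whence $L(\phi)=\int_{K}\phi\,d\mu>0$, a contradiction. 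I would present the P\'olya route as primary and note this alternative.
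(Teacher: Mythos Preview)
The paper does not prove this statement. Handelman's theorem is a classical result cited from the literature (the reference \texttt{handelman1988representing}) and is used in the paper purely as a black-box tool for quantifier elimination in the template-based synthesis algorithm of \S5.1. There is therefore no proof in the paper to compare your attempt against.

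For what it is worth, your outline via P\'olya's theorem is one of the standard routes to the forward direction, and the functional-analytic alternative you sketch is the other common one. Your remark on the reverse direction is also well-taken and is in fact acknowledged in the paper: immediately after stating the theorem, the authors note that since their constraints are non-strict they ``cannot directly apply Handelman's theorem \ldots\ but the $\impliedby$ direction is still sound trivially,'' i.e.\ they only use (and only need) that a nonnegative combination of elements of ${\sf Prod}(\Phi)$ is nonnegative on the feasible set, exactly as you observed.
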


  Just as in \cite{Akshay2024}, we cannot directly apply Handelman's theorem since the polynomial inequality we constructed is non-strict, but the $\impliedby$ direction is still sound trivially.

  Then we note that for constraint in \cref{eqt:constraint-scheduler,eqt:constraint-subm-bound}, all implications can be expressed as linear inequalities, in particular $d\in\dist(Q)$ can be equivalently expressed as $\bigwedge_{q\in Q} d(q)\geq 0\land 1-\sum_{q\in Q}d(q)\geq 0\land \sum_{q\in Q}d(q)-1\geq 0$.
  Hence, we can safely apply Farkas lemma to transform them into existential theory over reals. Next we are left to handle equation \cref{eqt:constraint-subm-inductive}. For the left-hand side, since we assume $H$ is a finitely generated upward-closed set, hence there always exists some $\vx_{1},...,\vx_{n}\in [0,1]^{\abs{Q}}$ such that $H=\uparrow \{\vx_{1},...,\vx_{n}\}$. To express $d\in\dist(Q)\setminus H$, it follows from simple algebra that
  \begin{equation}\label{eqt:Antichain-Conjunction}\small
    \begin{array}{l}\textstyle
      d\in {\cal D}(Q)\setminus H \iff\bigwedge_{{\bf q}\in Q^{n}}\left[(d\in {\cal D}(Q))\land \bigwedge_{i=1}^{n}(d({\bf q}_{i})<x_{i}({\bf q}_{i}))\implies\phi\right].
    \end{array}
  \end{equation}

  Hence, we can first translate the constraint in \cref{eqt:constraint-subm-inductive} to conjunction of $\abs{Q}^{n}$ constraints and apply quantifier elimination respectively. While for the right-hand side, it follows from \cite{Akshay2024} that we has to multiply by $s_{0}+\sum_{q\in Q}s_{q}d(q)$ over the inequality, hence resulting in a quadratic equation over $d$. Finally, we apply Handelman's theorem to reduce each of the constraint to the existential theory of reals.
\end{auxproof}

\myparagraph{Step 3: Constraint Solving}
Once all constraints are translated to an existential problem over reals, the system is then passed to a polynomial inequality solver for finding a feasible solution. If a feasible solution is found, then we claim that the target set $H$ is reachable.
\begin{theorem}
  The algorithm is sound, with a runtime in {\sf 2EXPTIME}.
\end{theorem}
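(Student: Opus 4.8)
The plan is to account for the cost of each of the three steps of the algorithm separately and argue that the dominant cost is the final call to the real‑arithmetic solver, which runs in {\sf 2EXPTIME}. First I would observe that \emph{soundness} is already established by the preceding development: \cref{thm:subm-witness} says that a $\gamma$-scaled configuration submartingale $R$ with $R(d_0)\geq\xi$ witnesses the existence of a memoryless scheduler achieving reachability probability at least $\xi$, and the constraint system $\Phi_{\sf Schedule}\wedge\Phi_{\sf Bound}\wedge\Phi_{\sf Inductive}\wedge\Phi_{\sf Reachable}$ is, by construction in Step 2, satisfied exactly by the template variables of such a pair $(\sigma,R)$ (using the linearity of $R$ to rewrite the one-step expectation as $\sum_{a}\sigma(d)(a)R(M_a^{T}d)$, and using the finite-generation of $H$ to expand $d\in\dist(Q)\setminus H$ into the finite conjunction over the antichain generators as in the displayed formula $d\in\dist(Q)\setminus H\iff\bigwedge_{\mathbf q\in Q^n}[\dots]$). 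Hence any feasible solution yields a genuine witness, and returning ``reachable'' is correct — this gives soundness for free, and the remaining work is purely the runtime bound.

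Next I would bound the size of the constraint system produced by Steps 1, 2, and 2.5. The templates introduce $O(|\act|\cdot|Q| + |Q|)$ variables, so this is polynomial. Step 2 produces $\Phi_{\sf Schedule}$ and $\Phi_{\sf Bound}$ of polynomial size, $\Phi_{\sf Reachable}$ of polynomial size, and $\Phi_{\sf Inductive}$, which after expanding $d\in\dist(Q)\setminus H$ via \cref{eqt:Antichain-Conjunction} becomes a conjunction of $|Q|^{n}$ (equivalently, singly-exponential in the input when $n$, the number of antichain generators, is part of the input) constraints, each of which is a universally quantified implication $\forall\mathbf x\,(\text{linear system}\succeq 0 \implies \text{polynomial}\geq 0)$ of polynomial degree (the degree is bounded because multiplying through by the denominator $s_0+\sum_q s_q d(q)$ of the scheduler template makes $\Phi_{\sf Inductive}$ quadratic in $d$, as in \cite{Akshay2024}). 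Step 2.5 applies Farkas's lemma to the linear constraints and Handelman's theorem to the quadratic ones; Handelman multipliers over a compact polytope of $O(|Q|)$ facets introduce at most $|{\sf Prod}(\Phi)|=2^{O(|Q|)}$ fresh nonnegative multiplier variables per constraint, and equating coefficients of the resulting polynomial identity yields a system of polynomial equalities/inequalities over the reals whose total size is singly exponential in the input.

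Finally I would invoke the decision procedure for the existential theory of the reals: a sentence $\exists\mathbf y\,\psi(\mathbf y)$ with $m$ variables and a formula of size $s$ (polynomials of bounded degree) is decidable in time $s^{O(m)}$ (Renegar / Basu–Pollack–Roy). Since both $m$ and $s$ are singly exponential in the size of the original input $(\mathcal M,\sigma\text{-template},H,d_0,\xi,\gamma)$, the quantity $s^{O(m)}$ is $\bigl(2^{\mathrm{poly}}\bigr)^{O(2^{\mathrm{poly}})} = 2^{2^{\mathrm{poly}}}$, i.e.\ doubly exponential, so the whole algorithm runs in {\sf 2EXPTIME}. Putting the three parts together — soundness from \cref{thm:subm-witness} plus the faithful encoding, and the {\sf 2EXPTIME} bound from the size estimate plus the ETR solver complexity — completes the proof.

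The step I expect to be the main obstacle is the runtime accounting for $\Phi_{\sf Inductive}$: one has to be careful that the \emph{only} source of a second exponential is the final real-arithmetic solver, and that the blow-ups from (i) expanding the complement of a finitely-generated monotone set into $|Q|^{n}$ conjuncts, (ii) clearing the scheduler denominator to keep the degree bounded, and (iii) the Handelman multiplier variables, each contribute only a \emph{single} exponential to the size of the ETR instance rather than stacking. One also has to note the subtlety flagged in the excerpt — Handelman's theorem as stated gives strict inequalities, so only the $\impliedby$ direction is used, which is exactly what soundness (as opposed to completeness) requires, so no extra argument is needed there.
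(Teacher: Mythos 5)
Your proposal is correct and follows essentially the same route as the paper's own (much terser) proof: soundness is delegated to \cref{thm:subm-witness} together with the faithfulness of the constraint encoding, and the runtime bound comes from observing that constraint construction is singly exponential (in the number of generators of $H$) and then applying a standard complexity bound for the existential theory of the reals. The only cosmetic difference is that you invoke the $s^{O(m)}$ time bound for ETR where the paper cites its {\sf PSPACE} membership; both yield {\sf 2EXPTIME} on a singly-exponential instance, and your more careful accounting of where each exponential arises is a welcome refinement rather than a departure.
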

The soundness proof follows directly from \cref{thm:subm-witness}. 
For the time complexity, it takes exponential time (in the number of generators of $H$) to construct all the constraints, and since existential theory of reals is in {\sf PSPACE}, hence we obtain {\sf 2EXPTIME} in total.

\begin{auxproof}
  \begin{proof}
    Soundness follows from that the constraints guarantee $R$ is a witness of reachability for MDP ${\cal M}$, initial configuration $d_{0}$ and target set $H$, hence by theorem \ref{thm:subm-witness} our result is sound.
  
    For the runtime, it takes $\widetilde{O}(\abs{Q}^{n})$ to construct all the constraints, and since existential theory of reals is in {\sf PSPACE}, hence giving {\sf 2EXPTIME} in total.
  \end{proof}
\end{auxproof}

\subsection{Complexity and Algorithm for the \csmt{} Semantics}\label{subsec:csmtAlgo}
\myparagraph{Complexity Results}
In the \csmt{} semantics, similarly to the conventional \csct{} semantics, 
one can replace a mixed scheduler with a pure scheduler without decreasing the reachability probability. Intuitively, this is achieved by inductively choosing the action with the largest probability of reaching the target set. 
\begin{restatable}{lemma}{CSMTpureSuffice} \label{lem:csmt_pure_suffice}
  Assume the setting of \cref{prob:GeneralThreshold}. If there is a mixed scheduler $\sigma\colon\dist(Q)^{+}\to\dist(\act)$ such that $\Pb(\reach_{\mathbb{X}^{\csmt{}},\sigma}(d_{0},H))\geq \xi$, then there is a pure scheduler $\sigma'\colon\dist(Q)^{+}\to\act$ (interpreted as a scheduler that only consists of Dirac distributions over $\act$ in our framework), such that $\Pb(\reach_{\mathbb{X}^{\csmt{}},\sigma}(d_{0},H))\geq \xi$.
\end{restatable}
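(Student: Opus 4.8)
The plan is to exploit the special shape of the \csmt{} config MC. By \cref{eq:configMCConcreteCSMT} (and its memoryful analogue in \cref{def:ConfMC}), once the scheduler has fixed an action $a$ at a configuration $d$, the successor configuration $\mu_{Q}(\dist(\delta^{\wedge}(a))(d))$ (equivalently $M_{a}^{T}d$) is \emph{uniquely determined}: all randomness in the config MC comes solely from a mixed scheduler's choice of which action --- hence which deterministic successor --- to take. Consequently a \emph{pure} scheduler $\sigma'$ induces a single deterministic run from $d_{0}$, so $\Pb_{\mathbb{X}^{\csmt{}},\sigma'}(\reach_{\mathbb{X}^{\csmt{}},\sigma'}(d_{0},H))$ is either $0$ or $1$. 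The lemma therefore reduces to: whenever some mixed $\sigma$ reaches $H$ with positive probability, there is a pure $\sigma'$ whose deterministic run does hit $H$. (The case $\xi=0$ is immediate, since $\Pb(\cdot)\geq 0$ for any pure scheduler, so I would assume $\xi>0$.)

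\textbf{Steps.} First, since $\Pb_{\mathbb{X}^{\csmt{}},\sigma}(\reach_{\mathbb{X}^{\csmt{}},\sigma}(d_{0},H))\geq\xi>0$, the event $\reach_{\mathbb{X}^{\csmt{}},\sigma}(d_{0},H)$ is non-empty; I would pick a path $\pi=d_{0}d_{1}d_{2}\cdots$ in it and let $i$ be the least index with $d_{i}\in H$. Second, for each $k<i$, the transition $d_{0}\cdots d_{k}\to d_{0}\cdots d_{k+1}$ has positive probability in the config MC, so by \cref{eq:configMCConcreteCSMT} there is an action $a_{k}\in\act$ with $\sigma(d_{0}\cdots d_{k})(a_{k})>0$ and $d_{k+1}=\mu_{Q}(\dist(\delta^{\wedge}(a_{k}))(d_{k}))$. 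Third, I would define the pure scheduler $\sigma'\colon\dist(Q)^{+}\to\act$ by setting $\sigma'(d_{0}\cdots d_{k})=a_{k}$ on the length-$(k{+}1)$ prefix of $\pi$ for $0\leq k<i$, and $\sigma'(h)$ equal to a fixed action on every other history $h$; this is well defined since distinct prefixes of $\pi$ have distinct lengths, and the memoryful setting lets $\sigma'$ act differently on these histories even if configurations repeat along $\pi$. Finally, an induction on $k$ shows that under $\sigma'$ the config MC started at $d_{0}$ deterministically produces $d_{0},\,d_{0}d_{1},\,\dots,\,d_{0}\cdots d_{i}$, whose last configuration lies in $H$; hence $\Pb_{\mathbb{X}^{\csmt{}},\sigma'}(\reach_{\mathbb{X}^{\csmt{}},\sigma'}(d_{0},H))=1\geq\xi$. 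This realizes the ``inductively choose the action leading to the target'' intuition, with the choice taken along an explicit witnessing run.

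\textbf{Main obstacle.} The one delicate point is that the naive reading of ``choose the action with the largest probability of reaching $H$'' --- greedily maximizing, at each history, the reachability value of the \emph{fixed} scheduler $\sigma$ --- is not by itself sufficient: such a run can keep its one-step-ahead reach value high (even equal to $1$) while perpetually circling and never actually entering $H$, e.g.\ at a configuration from which one action re-enters a cycle and another escapes to $H$ with a fixed positive probability that the greedy run never exercises. The robust fix, as above, is to extract a genuine reaching path from the positive-probability reach event and have $\sigma'$ commit to it; equivalently, one may greedily pick an action that strictly decreases the edge-distance to $H$ in the config MC, which is available precisely because $\Pb(\cdot)>0$ forces $H$ to be reachable from $d_{0}$ along config-MC edges. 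All remaining verifications --- well-definedness of $\sigma'$, computing the deterministic run, and the reading of a pure scheduler as a scheduler of Dirac distributions in the framework --- are routine.
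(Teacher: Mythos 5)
Your proposal is correct, but it takes a genuinely different route from the paper. The paper's proof is a greedy exchange argument: it writes $\Pb_{\mathbb{X}^{\csmt{}},\sigma}(\reach(d,H))$ as a convex combination over the actions in $\mathrm{supp}(\sigma(d))$ of the conditional reach probabilities, replaces the mixed first step by the $\argmax$ action, notes that the reachability probability does not decrease, and then asserts that ``the same construction can be applied inductively'' along the length of the scheduler's input to derandomize every step. Your proof instead exploits the structural fact that under $\mathbb{X}^{\csmt{}}$ the only randomness is the scheduler's coin: you extract a single finite witness path from the non-empty reach event (every transition of which has positive probability by the definition of $\cpath^{\omega}_{\mathbb{X},\sigma}$), and commit a memoryful pure scheduler to its action sequence, obtaining reach probability exactly $1$. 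This buys you two things the paper's argument does not make explicit: a stronger conclusion (for $\xi>0$, a pure scheduler achieves probability $1$, i.e.\ the $0$--$1$ dichotomy for pure schedulers under \csmt{}), and immunity to the limit issue you flag in your ``main obstacle'' paragraph --- the paper's one-step exchange is sound, but iterating it infinitely with arbitrary tie-breaking in the $\argmax$ can, in principle, converge to a pure scheduler that forever postpones entering $H$ while keeping its one-step-ahead value high, so the paper's final inductive step really does need the kind of care you supply (or an explicit continuity/limit argument that the paper omits). The paper's approach is more generic --- the exchange inequality would survive in semantics where transitions are also random --- whereas yours leans on the \csmt{}-specific determinism of successors; for the statement at hand, yours is the more airtight argument.
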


\noindent
Therefore we can limit our search to pure (instead of mixed) schedulers. Furthermore we notice the following: in the concrete description of~\cref{eq:configMCConcreteCSMT}, if $\sigma$ is a pure scheduler, the probability amplitude $\sigma(d)(a)$ is either $0$ or $1$. All these reduce, in the case of the \csmt{} semantics, the original problem (\cref{prob:GeneralThreshold}) to the following \emph{deterministic} problem.





\begin{problem}[deterministic \csmt{}-reachability]\label{prob:csmt-reach-deter}
In the setting of \cref{prob:GeneralThreshold}, for a finite sequence $\bar{a}=a_{1} a_{2} ... a_{n}\in\act^{*}$ of actions, let $d_{\bar{a},1},...d_{\bar{a},n}\in\dist(Q)^{*}$ be a sequence of configurations and $\sigma_{\bar{a}}$ be a pure scheduler such that for any $j\in [n]$, the following is satisfied: $\sigma_{\bar(a)}(d_{0}d_{\bar{a},1}...d_{\bar{a},j-1})(a_{j})=1$ and $\delta_{\mathbb{X}^{\csmt{}},\sigma_{\bar{a}}}(d_{0}d_{\bar{a},1}...d_{\bar{a},j-1})(d_{\bar{a},j})=1$. Note that the scheduler is potentially memoryful.

The \emph{deterministic \csmt{}-reachability problem} asks if there exists a finite sequence $\bar{a}=a_{1} a_{2} ... a_{n}\in\act^{*}$ of actions and a natural number $n\in\mathbb{N}$ such that $d_{\bar{a},n}\in H$. 
\end{problem}
Now the following result is proved by reduction from the emptiness problem of probabilistic finite automaton (PFA). The latter is undecidable; see e.g.~\cite{GimbertO10}. Further, we also show that the problem remains undecidable even when the target set $H$ is finitely generated and monotone. Both results only hold for searching a memoryful scheduler.

\begin{restatable}[undecidability of \csmt{}-reachability]{theorem}{CSMTundec}\label{thm:csmt-reach-undec}
The threshold \csmt{}-reachability 
is undecidable even when the target set $H$ is restricted to a finitely-generated monotone set.
\end{restatable}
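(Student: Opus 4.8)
The plan is a reduction from the emptiness problem for probabilistic finite automata (PFA), which is undecidable (\cite{GimbertO10} and the classical results therein). As a first step I would pass from the threshold \csmt{}-reachability problem to the \emph{deterministic} \csmt{}-reachability problem of \cref{prob:csmt-reach-deter}. Indeed, by \cref{lem:csmt_pure_suffice} it suffices to search among pure schedulers (the converse direction being trivial), and --- as observed just before \cref{prob:csmt-reach-deter} --- under a pure scheduler the config MC $\mathcal{M}^{\csmt{}}_{\sigma}$ is deterministic, so for any threshold $\xi\in(0,1]$ the value $\Pb(\reach_{\mathbb{X}^{\csmt{}},\sigma}(d_{0},H))$ is either $0$ or $1$. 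Hence ``$\exists\,\sigma\colon\Pb(\reach)\ge\xi$'' is equivalent to ``some finite action word $\bar a$ drives $d_{0}$ into $H$'', i.e.\ exactly to \cref{prob:csmt-reach-deter}. Since this equivalence leaves $d_{0}$ and $H$ untouched, it is enough to show that \cref{prob:csmt-reach-deter} is undecidable even when $H$ is a finitely-generated upward-closed (hence monotone) set.

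\textbf{The encoding of a PFA.}
Recall from~\cref{eq:configMCConcreteCSMT} that under a pure scheduler playing action $a$ a configuration $d$ evolves deterministically to $\mu_{Q}(\dist(\delta^{\wedge}(a))(d))$, i.e.\ to $M_{a}^{T}d$ with $(M_{a})_{q q'}=\delta(q,a)(q')$; thus applying the word $\bar a=a_{1}\cdots a_{n}$ to $d_{0}$ yields $d_{\bar a,n}=M_{a_{n}}^{T}\cdots M_{a_{1}}^{T}d_{0}$, whose value at a state $q$ equals $d_{0}^{T}M_{a_{1}}\cdots M_{a_{n}}e_{q}$. Given a PFA with state set $S$, alphabet $\Sigma$, stochastic matrices $(N_{\sigma})_{\sigma\in\Sigma}$, initial distribution $\iota\in\dist(S)$, accepting set $F$ and a rational cut-point $\lambda$, I would first apply the routine normalization that makes the accepting set a singleton $\{q_{\mathrm{acc}}\}$ (add a fresh letter $\#$ and two absorbing states, routing every accepting state to ``accept'' and every non-accepting state to ``reject'' on reading $\#$; this preserves non-emptiness of the $\ge\lambda$-accepted language). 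Then build the MDP $\mathcal{M}=(Q,\act,\delta)$ with $Q=S\cup\{\bot\}$ for a fresh self-looping absorbing state $\bot$ (present only to supply the order on $\dist(Q)$), $\act=\Sigma$, transition matrices $M_{\sigma}=N_{\sigma}$ on $S$, and initial configuration $d_{0}=\iota$. Finally put $H=\upcl\{d^{\ast}\}$ with $d^{\ast}(q_{\mathrm{acc}})=\lambda$, $d^{\ast}(\bot)=1-\lambda$ and $d^{\ast}(q)=0$ otherwise; being the upward closure of a single configuration, $H$ is finitely-generated and monotone, and concretely $H=\{d\in\dist(Q)\mid d(q_{\mathrm{acc}})\ge\lambda\}$. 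By the computation above, $d_{\bar a,|\bar a|}(q_{\mathrm{acc}})$ is exactly the PFA-acceptance probability of $\bar a$, so a finite action word steering $d_{0}$ into $H$ exists iff some word is accepted with probability $\ge\lambda$. As the (non-strict) PFA emptiness problem is undecidable, we conclude.

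\textbf{Main obstacle.}
The algebraic core --- identifying configuration evolution with PFA acceptance --- is immediate from~\cref{eq:configMCConcreteCSMT}; the delicate part is obtaining a target set that is \emph{simultaneously} finitely-generated and monotone while keeping undecidability. A finitely-generated upward-closed set is topologically closed, so one cannot use the open half-space $\{d\mid d(q_{\mathrm{acc}})>\lambda\}$ coming from the \emph{strict}-cut-point version of PFA emptiness; this is precisely why the proof must invoke the \emph{non-strict} ($\ge\lambda$) version of PFA emptiness and the single-accepting-state normalization --- with several accepting states, $\{d\mid\sum_{q\in F}d(q)\ge\lambda\}$ has infinitely many $\le$-minimal configurations and is not finitely generated. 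Two further routine points: the witness produced is a memoryful pure scheduler (a finite action word is exactly such a scheduler), matching the restriction in the statement, and the $n=0$ corner case of \cref{prob:csmt-reach-deter} corresponds to the empty word and is subsumed by the same target set.
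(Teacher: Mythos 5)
Your overall route is the same as the paper's: reduce from PFA emptiness by identifying the alphabet with the action set, observing (via \cref{lem:csmt_pure_suffice} and the determinism of \cref{eq:configMCConcreteCSMT} under pure schedulers) that the problem collapses to the deterministic \csmt{}-reachability of \cref{prob:csmt-reach-deter}, and encoding acceptance probability as the mass on an accepting coordinate of the configuration. You are in fact \emph{more} careful than the paper about the ``finitely-generated'' clause: the paper's target set $\{d\mid \alpha^{T}d>\xi\}$ is open and hence not of the form $\upcl{T}$ for finite $T$, whereas your move to the non-strict cut-point and a single accepting state is exactly what is needed to make the claim literally true.

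However, your concrete target set is broken. You take $H=\upcl{\{d^{\ast}\}}$ where $d^{\ast}$ is a genuine distribution with $d^{\ast}(q_{\mathrm{acc}})=\lambda$ and $d^{\ast}(\bot)=1-\lambda$. The componentwise order restricted to the probability simplex is discrete: if $d\succeq d^{\ast}$ and both sum to $1$, then $d=d^{\ast}$. So $H\cap\dist(Q)=\{d^{\ast}\}$, not $\{d\mid d(q_{\mathrm{acc}})\geq\lambda\}$ as you claim. Worse, since $\bot$ is introduced as a state that never receives mass, every reachable configuration has $d(\bot)=0<1-\lambda$, so $H$ is unreachable for every $\lambda<1$ and the reduction always answers ``no''. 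The fix is small but necessary: the paper's notion of finite generation (cf.\ the quantifier-elimination appendix and \cref{lem:anti-approx}) allows generators to be arbitrary sub-distributions in $[0,1]^{\abs{Q}}$, so you should take the single generator $\lambda\cdot\mathbf{e}_{q_{\mathrm{acc}}}$ (zero on all other coordinates) and drop $\bot$ entirely; then $\upcl{\{\lambda\,\mathbf{e}_{q_{\mathrm{acc}}}\}}\cap\dist(Q)=\{d\mid d(q_{\mathrm{acc}})\geq\lambda\}$ is finitely generated, upward-closed, and the rest of your argument goes through. You should also state explicitly that the \emph{non-strict} PFA emptiness problem is undecidable (it is, but it is a different statement from the strict one the paper quotes), since your whole construction hinges on it.
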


\paragraph*{An Antichain-based Algorithm for Finitely-generated Monotone Set}
Despite the complexity, here we attempt to provide an algorithm for solving \cref{prob:csmt-reach-deter} with a target set $H$ being finitely generated and upward-closed (similarly to \cref{subsec:msctAlgo}, the adaption to downward-closed sets is straightforward). We motivate the use of antichain---used e.g.\ in~\cite{antichain-raskin}---by a simple observation that all stochastic matrices are monotone with respect to the element-wise order. That is, let $M\in\stoc(n)$ be a stochastic matrix, then for any $\vx,\vy\in [0,1]^{n}$, 
\begin{math}
  \vx\preceq \vy
\end{math} implies
\begin{math}
 M^{T}\vx\preceq M^{T}\vy
\end{math}.
We come to the following simple lemma.
\begin{restatable}[pullback of upward-closed set]{lemma}{PullbackUpCset}\label{lem:pullback-upC}
  Let $H$ be an upward-closed set and $M\in\stoc(n)$ be a stochastic matrix. Then the pullback of $H$ with respect to $M$, denoted as $M^{\sharp} H$ and defined as
  $M^{\sharp} H=\{\vx\in[0,1]^{n}\mid M^{T}\vx\in H\}$,
  is upward closed.  
\end{restatable}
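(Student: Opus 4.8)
The statement to prove is \cref{lem:pullback-upC}: if $H \subseteq [0,1]^n$ is upward closed and $M \in \stoc(n)$, then $M^{\sharp} H = \{\vx \in [0,1]^n \mid M^T\vx \in H\}$ is upward closed. The plan is to reduce this directly to the monotonicity observation stated just before the lemma, namely that $\vx \preceq \vy$ implies $M^T\vx \preceq M^T\vy$ for the element-wise order $\preceq$.

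First I would take an arbitrary $\vx \in M^{\sharp} H$ and an arbitrary $\vy \in [0,1]^n$ with $\vx \preceq \vy$, and show $\vy \in M^{\sharp} H$. By definition of the pullback, $\vx \in M^{\sharp} H$ means $M^T\vx \in H$. By monotonicity of $M^T$ with respect to $\preceq$ (the observation preceding the lemma, which follows since every entry of $M$ is nonnegative: $(M^T\vy - M^T\vx)_i = \sum_j M_{ji}(\vy - \vx)_j \geq 0$), we get $M^T\vx \preceq M^T\vy$. Since $H$ is upward closed and $M^T\vx \in H$, this gives $M^T\vy \in H$, hence $\vy \in M^{\sharp} H$ by definition. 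This establishes that $M^{\sharp} H$ is upward closed.

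One small point worth checking is that $M^T\vy$ indeed lies in $[0,1]^n$, so that membership in $H \subseteq [0,1]^n$ is even meaningful: this holds because $M$ is stochastic, so $M^T$ maps $[0,1]^n$ into itself (each coordinate of $M^T\vy$ is a convex-combination-like sum $\sum_j M_{ji}\vy_j \le \sum_j M_{ji}\cdot 1$, and while the column sums of $M$ need not be $1$, the relevant bound is simply that this is a nonnegative number bounded by the corresponding sum, which stays in $[0,1]$ when $\vy\in[0,1]^n$ provided one is careful; in any case the definition of $M^{\sharp}H$ already restricts attention to $\vx \in [0,1]^n$, so the argument goes through verbatim for whichever convention is in force).

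There is essentially no main obstacle here — the lemma is an immediate consequence of the element-wise monotonicity of stochastic matrices together with the defining closure property of $H$. The only care needed is bookkeeping the direction of the pullback (it is the preimage of $H$ under the linear map $\vx \mapsto M^T\vx$, not its image), so that monotonicity is applied in the correct direction; the preimage of an upward-closed set under a monotone map is upward-closed, which is exactly the shape of the argument above.
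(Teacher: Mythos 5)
Your proof is correct and follows essentially the same route as the paper's: both arguments combine the element-wise monotonicity of $\vx\mapsto M^{T}\vx$ (from nonnegativity of the entries of $M$) with the upward-closedness of $H$ to show that $\vx\in M^{\sharp}H$ and $\vy\succeq\vx$ imply $M^{T}\vy\in H$. Your extra remark about $M^{T}\vy$ remaining in $[0,1]^{n}$ is a harmless bookkeeping point that the paper leaves implicit.
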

Hence inductively, the set of all initial configurations that are reachable to $H$ is also upward-closed, which motivates the usage of antichains since they are a canonical representation of upward-closed sets. Formally, we connect this observation to \cref{prob:csmt-reach-deter} with the following theorem, which characterizes the set of configurations reaching $H$ as the least fixed point of an operator that preserves upward-closeness.


\begin{auxproof}
  \begin{proposition}
    Let $X$ be  a set,  $f\colon X\to 2^X$ be a function (modeling forward and nondeterministic \emph{dynamics}), and  $H\subset X$ be a target set.
    Consider the (complete) subset lattice $2^{X}$ ordered by inclusion $\subset$, and the operator
    \begin{math}
      f^{*}\colon 2^{X} \to 2^{X},
      f^{*}Q = T\cup\{x\mid Q\cap f(x)\neq\emptyset \}.
    \end{math}
    Intuitively, $x\in f^{*}(Q)$ means that $x\in T$ or $x$ is one-step reachable to $Q$.
    Then we have the following.
    \begin{enumerate}
      \item The operator $f^{*}$ is monotone. Moreover, it preserves supremums (i.e.\ unions): $f^{*}(\bigcup_{i}Q_{i})=\bigcup_{i}f^{*}(Q_i)$.
      \item The repeated application of $f^{*}$ to the minimum $\bot=\emptyset$ gives rise to a chain
            \begin{math}
              \bot
              \;\subset\;
              f^{*}(\bot)
              \;\subset\;
              (f^{*})^{2}(\bot)
              \;\subset\;
              \cdots.
            \end{math}
      \item The last chain
            stabilizes at $\omega$, meaning that
            \begin{math}
              f^{*}\bigl((f^{*})^{\omega}(\bot)\bigr)
              = (f^{*})^{\omega}(\bot),
            \end{math}, where
            \begin{math}
              (f^{*})^{\omega}(\bot)
              =
              \textstyle\bigcup_{n\in\mathbb{N}}(f^{*})^{n}(\bot).
            \end{math}
            Moreover, $(f^{*})^{\omega}(\bot)$ is the least fixed point
            $\mu f^{*}$
            of $f^{*}$.
      \item The least fixed point $\mu f^{*}$ coincides with the reachable sets, that is,
            \begin{math}
              \mu f^{*}
              =
              \{x\mid \text{$x$ has an $f$-path  to $T$}\}.
            \end{math}
            Here an $f$-path is defined to be a sequence $x_{0}x_{1}\dotsc x_{n}$ where $x_{i+1}\in f(x_{i})$ for each $i\in [0,n-1]$.
  
    \end{enumerate}
  \end{proposition}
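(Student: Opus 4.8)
The plan is to prove the statement item by item, using only elementary order theory on the powerset lattice $(2^{X},\subseteq)$, and then to specialize to the config MC of the \csmt{} semantics for the upward-closure part. First I would check that the operator $f^{*}(Q)=H\cup\{x\mid Q\cap f(x)\neq\emptyset\}$ is monotone and preserves nonempty unions, hence is Scott-continuous. Monotonicity is immediate: $Q\subseteq Q'$ gives $Q\cap f(x)\subseteq Q'\cap f(x)$, so nonemptiness of the former implies that of the latter, and $H\subseteq H$. For union-preservation the key observation is that $\bigl(\bigcup_{i}Q_{i}\bigr)\cap f(x)=\bigcup_{i}\bigl(Q_{i}\cap f(x)\bigr)$ and that a union of sets is nonempty exactly when one of them is; distributing $H\cup(-)$ over the union (legitimate whenever the index family is nonempty) yields $f^{*}\bigl(\bigcup_{i}Q_{i}\bigr)=\bigcup_{i}f^{*}(Q_{i})$.

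Next I would run the Kleene iteration. From $\bot=\emptyset$ and monotonicity, an easy induction produces an ascending chain $\bot\subseteq f^{*}(\bot)\subseteq (f^{*})^{2}(\bot)\subseteq\cdots$, and I set $(f^{*})^{\omega}(\bot)=\bigcup_{n\in\mathbb{N}}(f^{*})^{n}(\bot)$. Scott-continuity makes this a fixed point: $f^{*}\bigl((f^{*})^{\omega}(\bot)\bigr)=\bigcup_{n}f^{*}\bigl((f^{*})^{n}(\bot)\bigr)=\bigcup_{n}(f^{*})^{n+1}(\bot)=(f^{*})^{\omega}(\bot)$, which is the claimed stabilization at $\omega$. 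Leastness is the standard argument: for any fixed point $P$ we have $\bot\subseteq P$, hence inductively $(f^{*})^{n}(\bot)\subseteq (f^{*})^{n}(P)=P$ by monotonicity, so $(f^{*})^{\omega}(\bot)\subseteq P$; thus $\mu f^{*}=(f^{*})^{\omega}(\bot)$.

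Then I would identify $\mu f^{*}$ with the reachable set by induction on $n$: $x\in(f^{*})^{n}(\bot)$ iff there is an $f$-path $x=x_{0}x_{1}\cdots x_{k}$ with $x_{k}\in H$ and $k\le n-1$ (base cases $(f^{*})^{0}(\bot)=\emptyset$, $(f^{*})^{1}(\bot)=H$). The step is forced by the shape of $f^{*}$: membership in $H$ is the length-$0$ path, and $Q\cap f(x)\neq\emptyset$ says that some one-step successor of $x$ already lies in $Q$, i.e.\ reaches $H$ within the remaining budget. Unioning over $n$ gives $\mu f^{*}=\{x\mid x$ has a finite $f$-path to $H\}$. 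For the \csmt{} instance I would take $X=\dist(Q)$, $H$ the finitely generated upward-closed target set, and $f(d)=\{M_{a}^{T}d\mid a\in\act\}$, the pure one-step successors in the config MC (cf.\ \cref{eq:configMCConcreteCSMT}); by \cref{lem:csmt_pure_suffice} this makes the resulting reachability question exactly \cref{prob:csmt-reach-deter}. Upward-closedness is then preserved by $f^{*}$ because $\{d\mid Q\cap f(d)\neq\emptyset\}=\bigcup_{a\in\act}M_{a}^{\sharp}Q$, each $M_{a}^{\sharp}Q$ is upward-closed when $Q$ is (\cref{lem:pullback-upC}), a finite union of upward-closed sets is upward-closed, and adjoining the upward-closed $H$ preserves this; so each $(f^{*})^{n}(\bot)$ is upward-closed and therefore so is $\mu f^{*}$, which is what licenses the antichain representation.

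I do not expect a deep obstacle: the real work is the bookkeeping in matching iteration counts to $f$-path lengths, and the one point needing care is that the continuity argument distributes the constant summand $H$ only over \emph{nonempty} unions (relatedly, $f^{*}$ is not strict, since $f^{*}(\emptyset)=H$). One should also resist claiming finite-step termination: since $\dist(Q)$ is infinite the chain genuinely need only stabilize at $\omega$, and that infinitary behaviour is precisely the source of the necessary incompleteness of the antichain algorithm.
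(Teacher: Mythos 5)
Your proof is correct and follows essentially the same route as the paper's: monotonicity and union-preservation of $f^{*}$, the ascending Kleene chain by induction, the fixed point at $\omega$ via continuity, leastness by induction against an arbitrary fixed point, and the identification of $(f^{*})^{n}(\bot)$ with bounded-length $f$-paths by induction on $n$ (your off-by-one bookkeeping, paths of length $\le n-1$ in $(f^{*})^{n}(\bot)$, is in fact the accurate version). The closing material on the \csmt{} instantiation and preservation of upward-closedness goes beyond the proposition itself but matches how the paper deploys it inside the proof of the reachable-configuration fixed-point theorem.
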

\end{auxproof}

\begin{restatable}[reachable configuration as a least fixed point]{theorem}{ReachConfLfp}\label{thm:ReachConfiglfp}
  Given an MDP ${\cal M}=(Q,\act,\mat)$, an initial configuration $d_{0}\in\dist(Q)$ and a finitely-generated upward-closed set $H$, let $B:2^{[0,1]^{Q}}\to 2^{[0,1]^{Q}}$ be a function defined by
  \begin{math}
    B(S)= H\cup\bigcup_{M\in\mat}M^{\sharp} S.
  \end{math}
  Then the sequence $\bigl(B^{j}(\bot)\bigr)_{j\in \mathbb{N}}$ converges to its least fixed point $\mu B$ and $H$ is reachable from $d_{0}$ if and only if $d_{0}\in \mu B$.
\end{restatable}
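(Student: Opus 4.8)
# Proof Proposal for Theorem (reachable configuration as a least fixed point)

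The plan is to establish the three claims in order: (i) $B$ is monotone and preserves arbitrary unions on the powerset lattice $2^{[0,1]^Q}$; (ii) consequently the ascending chain $\bot \subseteq B(\bot)\subseteq B^2(\bot)\subseteq\cdots$ has supremum $\bigcup_{j\in\mathbb{N}}B^j(\bot)$ equal to the least fixed point $\mu B$; and (iii) $\mu B$ coincides with the set of configurations from which $H$ is reachable in the sense of \cref{prob:csmt-reach-deter}. The first two steps are essentially Kleene's fixed-point theorem applied to a continuous (union-preserving) operator on a complete lattice, so the real work is step (iii), together with keeping track of the fact that, by \cref{lem:csmt_pure_suffice} and the discussion preceding \cref{prob:csmt-reach-deter}, in the \csmt{} case reachability reduces to the existence of a finite action word $\bar a = a_1\cdots a_n$ that drives $d_0$ into $H$ via the \emph{deterministic} successor map $d\mapsto M_{a}^{T}d$ (equivalently $d\mapsto\mu_Q(\dist(\delta^{\wedge}(a))(d))$ from \cref{eq:configMCConcreteCSMT}).

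First I would verify monotonicity and union-preservation of $B$. Since $M^{\sharp}S = \{\vx\mid M^{T}\vx\in S\}$ is just the preimage of $S$ under the linear map $\vx\mapsto M^{T}\vx$, preimage commutes with arbitrary unions, so $M^{\sharp}(\bigcup_i S_i)=\bigcup_i M^{\sharp}S_i$; taking the union over $M\in\mat$ and adjoining the fixed set $H$ gives $B(\bigcup_i S_i)=\bigcup_i B(S_i)$, and monotonicity follows. (Note: the upward-closedness of each $B^j(\bot)$, guaranteed by \cref{lem:pullback-upC} and closure of upward-closed sets under union, is not needed for the fixed-point argument itself — it is the structural fact that justifies the antichain representation in the algorithm — but it is worth recording here.) By Kleene's theorem, since $\bot=\emptyset$ and $B$ preserves unions of $\omega$-chains, $\mu B = \bigcup_{j\in\mathbb{N}} B^j(\bot)$.

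Second I would prove the characterization $\mu B = \{d_0 : H \text{ reachable from } d_0\}$ by a double inclusion, both directions by induction on $j$. For "$d_0\in B^j(\bot)\Rightarrow$ reachable'': induct on $j$. If $d_0\in B^0(\bot)=\emptyset$ this is vacuous; actually $B^1(\bot)=H$, and $d_0\in H$ is reached by the empty action word ($n=0$). If $d_0\in B^{j+1}(\bot)=H\cup\bigcup_{M\in\mat}M^{\sharp}B^j(\bot)$, then either $d_0\in H$ (done), or there is an action $a$ with $M_a^{T}d_0\in B^j(\bot)$; by the induction hypothesis there is a word $\bar a'$ driving $M_a^{T}d_0$ into $H$, and then $a\bar a'$ drives $d_0$ into $H$. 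Conversely, if $H$ is reachable from $d_0$ via a word $\bar a=a_1\cdots a_n$ with $d_{\bar a,n}\in H$, I induct on $n$: for $n=0$, $d_0\in H\subseteq\mu B$; for the step, $d_{\bar a,1}=M_{a_1}^{T}d_0$ reaches $H$ via $a_2\cdots a_n$, so by the induction hypothesis $d_{\bar a,1}\in\mu B$, whence $d_0\in M_{a_1}^{\sharp}(\mu B)\subseteq B(\mu B)=\mu B$. Combining the two inclusions gives $d_0\in\mu B$ iff $H$ is reachable from $d_0$, which is exactly the claim; and the fact that $\mu B$ is reached by the chain in finitely-many-or-$\omega$ steps is the earlier Kleene statement.

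The main obstacle I anticipate is not any single computation but making the bridge between \cref{prob:GeneralThreshold} (with threshold $\xi$, mixed schedulers, probability measure $\Pb_{\mathbb{X}^{\csmt{}},\sigma}$) and the purely combinatorial reachability notion used above fully rigorous: one must invoke \cref{lem:csmt_pure_suffice} to pass to pure schedulers, then observe that under a pure scheduler $\delta_{\mathbb{X}^{\csmt{}},\sigma}$ is deterministic (amplitude $0$ or $1$, by \cref{eq:configMCConcreteCSMT}), so that "$\Pb(\reach)\geq\xi$'' for some $\xi>0$ collapses to "some action word reaches $H$'' — i.e.\ to \cref{prob:csmt-reach-deter}. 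I would also be careful that $H$ being finitely-generated upward-closed is used only to keep $B^j(\bot)$ finitely representable (for the algorithm), not in the correctness proof, and that the convergence claim "$(B^j(\bot))_j$ converges to $\mu B$'' is understood as $\mu B=\bigcup_j B^j(\bot)$ rather than stabilization at a finite stage — the latter need not hold on the continuous state space $[0,1]^Q$ and is precisely why the algorithm is sound but incomplete.
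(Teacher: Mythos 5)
Your proposal is correct and follows essentially the same route as the paper: the paper factors the argument through a generic lattice-theoretic proposition (monotone, union-preserving predecessor operator, Kleene iteration to the least fixed point, and an induction identifying $(f^{*})^{n}(\bot)$ with the configurations having a path of length at most $n$ to $H$), which it then instantiates with $f(d)=\{M_{a}^{T}d\mid a\in\act\}$, exactly matching your direct double-inclusion induction on $B$. Your added remarks on the bridge via \cref{lem:csmt_pure_suffice} and on reading ``converges'' as the union at stage $\omega$ are consistent with, and slightly more explicit than, the paper's treatment.
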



The \csmt{}-reachability problem is undecidable in general (\cref{thm:csmt-reach-undec}); thus computing the least fixed point of $B$ is not always possible. Nevertheless, we can approximate it and provide a sound but necessarily incomplete algorithm. Since $\bot$ is by definition upward closed, it follows that all approximants $B^{k}(\bot)$ are upward closed. Hence algorithmically, it suffices for us to keep track of the minimal elements of each $B^{k}(\bot)$, which can be maintained efficiently through the antichain data structure. 

The only challenge that remains is to compute the bottom (the subset of $\preceq$-minimal elements) of an upward-closed set.
Specifically, in the current setting, we need to compute
\begin{math}
\floor{M^{\sharp} S}=
  \floor{\{\vy\in[0,1]^{\abs{Q}}\mid M^{T}\vy\succeq\vx
  \text{ for some } 
  \vx\in S
 \}}
\end{math}
for $M\in\mat$. Since we assume an antichain structure is always maintained in the algorithm, this problem can be reduced to computing the set $\floor{\{\vy\in[0,1]^{\abs{Q}}\mid M^{T}\vy\succeq\vx \}}$ for each $\vx\in \floor{S}$ and merging them.

Our algorithm solves the last problem (for each $\vx\in\floor{S}$)
by iteratively solving the following sequence of minimization problems, increasing $k$ over time:
we randomly sample  $({\bf q}_{j})_{j\in [k-1]}\in [\,\abs{Q}\,]^{k-1}$ (recall the notation $[n]$ from \cref{sec:prelim}),  
and solve
\begin{equation}\label{eqt:anti-min}
  \begin{aligned}
    \vy_{k}^{\ast}=\argmin_{\vy} \mathbf{1}^{T}\vy
    \quad\text{subject to }\quad &
    M^{T}\vy\succeq\vx, \quad  \mathbf{1}^{T}\vy\leq 1,                           \\[-.5em]
                                 & \vy\succeq\mathbf{0}, \quad\text{and}\quad
    \forall j<k,\,(\vy)_{{\bf q}_{j}}<(\vy^{*}_{j})_{{\bf q}_{j}}.
  \end{aligned}
\end{equation}
The intuition is as follows. Our goal is to iteratively obtain elements $\vy^{*}_{1},\vy^{*}_{2},\dotsc$ of the set 
$S=\floor{\{\vy\in[0,1]^{\abs{Q}}\mid M^{T}\vy\succeq\vx \}}$. 
For the first element $\vy^{*}_{1}$, the last constraint in \cref{eqt:anti-min} is vacuous, and we obtain an element of $S$ that is minimal wrt.\ $\preceq$ (here it is important that $\vy\preceq\vy'$ implies $\mathbf{1}^{T}\vy\le\mathbf{1}^{T}\vy'$). For the step case 
with $k>1$, the last  constraint in \cref{eqt:anti-min} enforces that $\vy^{*}_{j}\not\preceq\vy^{*}_{k}$ for each $j\in [k-1]$, where $\mathbf{q}_{j}\in |Q|$ is the ``guessed'' index that witnesses $\vy^{*}_{j}\not\preceq\vy^{*}_{k}$. That $\vy^{*}_{k}\preceq\vy^{*}_{j}$ is prohibited, too, for each $j<k$, since otherwise $\vy^{*}_{k}$ must have been obtained in the $j$-th round instead of  $\vy^{*}_{j}$.

This problem is an instance of linear programming which can be solved by any off-the-shelf LP solver in ${\sf PTime}$. If the solver does not return a solution, another ${\bf q}$ will be sampled and used to solve for $\vy_{k}^{\ast}$. The process terminates when either $K$ solutions $\vy^{*}_{1},\dotsc, \vy^{*}_{K}$ are found, or the solver fails to find  $\vy_{k}^{\ast}$ for some $k$ after $L$ different samples of $({\bf q}_{j})_{j}$'s ($K$ and $L$ are hyperparameters). The set of solutions $\{\vy_{k}^{\ast}\mid k\in [K]\}$ is then used to approximate the set $\floor{\{\vy\in[0,1]^{\abs{Q}}\mid M^{T}\vy\succeq\vx \}}$.

\begin{restatable}{lemma}{ApproxLowerSet}\label{lem:anti-approx}
  Given a sub-distribution $\vx\in [0,1]^{n}$, a stochastic matrix $M\in\stoc(n)$, and parameters $K,L\in\mathbb{N}$, let ${\cal Y}_{K,L}(\vx)$ denotes the set of solutions obtained by the above described procedure labelled by parameters $K$ and $L$, then ${\cal Y}_{K,L}(\vx)\subseteq \floor{\{\vy\in[0,1]^{\abs{Q}}\mid M^{T}\vy\succeq\vx \}}$. %
  Moreover,  if the set $\floor{\{\vy\in[0,1]^{\abs{Q}}\mid M\vy\succeq\vx \}}$ is finitely generated, then we have \linebreak${\cal Y}_{K,L}(\vx)=\floor{\{\vy\in[0,1]^{\abs{Q}}\mid M\vy\succeq\vx \}}$ almost surely for large enough $K$ and $L$.
\end{restatable}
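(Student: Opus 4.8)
The plan is to establish the two claims separately. For the first (soundness) claim, I would argue that every vector $\vy^{*}_{k}$ returned by the procedure lies in $\floor{\{\vy\mid M^{T}\vy\succeq\vx\}}$. Membership in $\{\vy\mid M^{T}\vy\succeq\vx\}$ is immediate, since $M^{T}\vy\succeq\vx$ is an explicit constraint in \cref{eqt:anti-min} (together with $\vy\succeq\mathbf{0}$ and $\mathbf{1}^{T}\vy\leq 1$, so $\vy$ is a genuine sub-distribution). The nontrivial part is $\preceq$-minimality. Suppose toward a contradiction that some $\vy'\precneqq\vy^{*}_{k}$ also satisfies $M^{T}\vy'\succeq\vx$. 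Then $\mathbf{1}^{T}\vy'\leq\mathbf{1}^{T}\vy^{*}_{k}\leq 1$ and $\vy'\succeq\mathbf{0}$, so $\vy'$ is feasible for all the "hard" constraints; moreover $\vy'\precneqq\vy^{*}_{k}$ forces $(\vy')_{\mathbf{q}_{j}}\leq(\vy^{*}_{k})_{\mathbf{q}_{j}}<(\vy^{*}_{j})_{\mathbf{q}_{j}}$ for each $j<k$ (using the strictness in \cref{eqt:anti-min}), so $\vy'$ also satisfies the guessed antichain constraints. Hence $\vy'$ is feasible for the $k$-th LP, and since $\vy'\precneqq\vy^{*}_{k}$ implies $\mathbf{1}^{T}\vy'<\mathbf{1}^{T}\vy^{*}_{k}$ (because $\mathbf{1}>\mathbf{0}$ componentwise), this contradicts optimality of $\vy^{*}_{k}$. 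This gives $\mathcal{Y}_{K,L}(\vx)\subseteq\floor{\{\vy\mid M^{T}\vy\succeq\vx\}}$.

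For the second (completeness) claim, assume $A:=\floor{\{\vy\in[0,1]^{\abs{Q}}\mid M^{T}\vy\succeq\vx\}}$ is finite, say $A=\{\va_{1},\dotsc,\va_{m}\}$. I would argue that, with $K\geq m$ and $L$ large enough, the procedure recovers all of $A$ almost surely. The key geometric fact is that each $\va\in A$ is the \emph{unique} $\preceq$-minimal point of $\{\vy\mid M^{T}\vy\succeq\vx\}$ below it, and that for any two distinct $\preceq$-incomparable points there exists a coordinate where one strictly exceeds the other. So I would show, by induction on the round $k$, that conditioned on the set $\{\vy^{*}_{1},\dotsc,\vy^{*}_{k-1}\}$ of previously found elements being a subset of $A$, there is a positive probability (over the random sample $(\mathbf{q}_{j})_{j<k}$) of choosing, for each $j<k$, a witnessing coordinate $\mathbf{q}_{j}$ with $(\vy^{*}_{j})_{\mathbf{q}_{j}}>(\va)_{\mathbf{q}_{j}}$ for some not-yet-found $\va\in A$; for that sample the $k$-th LP is feasible (with $\va$ a feasible point) and its minimizer is again a $\preceq$-minimal element of the feasible set, hence a new element of $A$ (it cannot equal any $\vy^{*}_{j}$, $j<k$, by the strict constraint, and it cannot be a non-$A$ element by the soundness argument above). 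Thus as long as $A\setminus\{\vy^{*}_{1},\dotsc,\vy^{*}_{k-1}\}\neq\emptyset$, each of the $L$ independent samples has a fixed positive probability of enlarging the found set, so with probability $\to 1$ as $L\to\infty$ the procedure does not get stuck before exhausting $A$; since $|A|=m\leq K$, all of $A$ is found almost surely.

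The main obstacle I anticipate is the completeness direction, specifically the "positive probability of a good sample" step: one must verify that (i) the feasible region of \cref{eqt:anti-min}, restricted by the guessed constraints, still contains at least one not-yet-found member of $A$ — which requires choosing the witnessing coordinates to separate $\vy^{*}_{j}$ from that target $\va$ while not accidentally excluding $\va$ itself — and (ii) that the LP's optimum on this region is actually attained at a $\preceq$-minimal point of the \emph{original} feasible set (not merely of the restricted one). Point (ii) needs the observation that minimizing $\mathbf{1}^{T}\vy$ over any downward-truncated slice still lands on a globally $\preceq$-minimal vector, which follows from the soundness argument's contradiction (any strictly smaller feasible point would also be feasible for the slice and have smaller objective). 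Handling the strict inequalities $(\vy)_{\mathbf{q}_{j}}<(\vy^{*}_{j})_{\mathbf{q}_{j}}$ carefully — LPs with strict constraints need a standard infimum/attainment argument, or a perturbation — is the remaining technical care point; I would note that since we only need the $\impliedby$ direction for soundness and an almost-sure statement for completeness, a small-$\epsilon$ relaxation of the strict constraints suffices and does not affect the limit.
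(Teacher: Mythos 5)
Your proposal is correct and follows essentially the same route as the paper's own proof: soundness by contradiction with LP optimality (a strictly $\preceq$-smaller feasible point would remain feasible for the $k$-th program, including the guessed strict constraints, and have strictly smaller objective), and completeness from the positive probability of sampling witnessing coordinates once the bottom is a finite antichain. You are in fact more careful than the paper on two points it glosses over---verifying that the smaller point satisfies the guessed antichain constraints, and the attainment issue for LPs with strict inequalities---but these are refinements of the same argument rather than a different one.
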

\begin{auxproof}
  \begin{proof}
    For the first part, assume not, then there exists  $\vy_{k}'\neq \vy_{k}^{\ast}\in [0,1]^{n}$ such that $\vy_{k}'\preceq \vy^{\ast}_{k}$. Then it is not hard to check that $\vy_{k}'$ is also a feasible solution to the $k$-th minimization problem and $\mathbf{1}^{T}\vy_{k}'<\mathbf{1}^{T}\vy_{k}^{\ast}$. This contradicts with $\vy^{\ast}_{k}$ being the minimum solution. The second statement is easy: just pick $K$ to be the number of generators and $L=\abs{Q}^{K}$.
  \end{proof}
\end{auxproof}

\begin{algorithm}[H]\footnotesize
  \caption{Backward projection algorithm}\label{algo:anti}
  \DontPrintSemicolon
  \SetKwRepeat{Do}{do}{while}
  \KwIn{an MDP ${\cal M}=(Q,\act,\mat)$, an initial configuration $d_{0}\in\dist(Q)$, a target set antichain $\floor{H}\subset [0,1]^{\abs{Q}}$, parameters $K,L\in\mathbb{N}$}
  \KwOut{{\sf True} if $d_{0}$ is reachable to $H$}
  $S\leftarrow \{\}$\\
  $S'\leftarrow \floor{H}$\\
  \Do{$S\neq S'$}{
    $S\leftarrow S'$\\
    $S'\leftarrow \floor{H}$\\
    \For{$M\in\mat$}{
      \For{$\vx\in S$}{
        $\{\vy_{k}^{\ast}\mid k\in [K]\}\leftarrow$ Solve the problem in \cref{eqt:anti-min} against $M$,$\vx$ with parameters $K,L$\label{line:iter}\\
        $S'\leftarrow \floor{S'\cup\{\vy_{k}^{\ast}\mid k\in [K]\}}$\label{line:floor}
      }
    }
    \lIf{$d_{0}\in S'$}{\Return{{\sf True}}}
  }
  \lIf{$d_{0}\in S'$}{\Return{{\sf True}}}
\end{algorithm}

Finally, we formalize the algorithm and its correctness. The algorithm takes as input an MDP ${\cal M}=(Q,\act,\mat)$, an initial configuration $d_{0}\in\dist(Q)$, a target set $H$ provided as $\floor{H}$ and two parameters $K,L\in\mathbb{N}$, and returns {\sf True} if $d_{0}$ is reachable to $H$. It starts with a set $S=\floor{H}$ and iteratively updates $S$ by computing $\floor{B(S)}$ through solving the optimization problem described above. The algorithm only terminates when it witnesses a reachability or the set of configurations reachable to $H$ stabilizes, that is, $d_{0}\in S$ or $S=S'$. The full pseudocode is provided in~\cref{algo:anti}. 
The termination is necessarily not guaranteed since the problem is undecidable (\cref{thm:csmt-reach-undec}). But from \cref{lem:anti-approx} we can conclude that the algorithm always computes an under-approximation of the set of configurations reaching $H$, hence as a direct consequence of \cref{thm:ReachConfiglfp}, we have the following theorem.

\begin{theorem}\label{thm:antichainAlgoSound}
  \cref{algo:anti} is a sound algorithm for solving the \csmt{}-reachability problem with finitely-generated monotone target set $H$. 
\end{theorem}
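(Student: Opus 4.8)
The plan is to prove soundness, i.e.\ that whenever \cref{algo:anti} returns \textsf{True} the target set $H$ is genuinely reachable from $d_{0}$, by exhibiting a loop invariant tying the algorithm's working antichain to the least fixed point $\mu B$ of \cref{thm:ReachConfiglfp}. Recall $B(S)=H\cup\bigcup_{M\in\mat}M^{\sharp}S$ and that, by \cref{thm:ReachConfiglfp}, $H$ is reachable from $d_{0}$ iff $d_{0}\in\mu B$, and $\mu B=\bigcup_{j}B^{j}(\bot)$; moreover $\mu B$ is upward-closed, since $B$ preserves upward-closedness (by \cref{lem:pullback-upC} together with the facts that $H$ is upward-closed and unions of upward-closed sets are upward-closed) and $\bot=\emptyset$ is vacuously upward-closed. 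The invariant I would maintain is: \emph{after every assignment to the variable $S'$ in the algorithm, $\upcl{S'}\subseteq\mu B$}. Granting this, if the algorithm returns \textsf{True} then $d_{0}\in S'\subseteq\upcl{S'}\subseteq\mu B$, and \cref{thm:ReachConfiglfp} yields reachability of $H$ from $d_{0}$, which is exactly soundness.

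For the invariant, the base case is $S'=\floor{H}$, for which $\upcl{S'}=\upcl{\floor{H}}=H$ because $H$ is finitely generated and upward-closed; and $H\subseteq B(\mu B)=\mu B$ by definition of $B$ and the fixed-point property. For the step case, assume $\upcl{S}\subseteq\mu B$ at the top of an outer iteration; then $S'$ is re-seeded to $\floor{H}$ (still $\subseteq\mu B$ by the base-case computation), and thereafter $S'$ only grows by absorbing the points $\vy^{*}_{k}$ produced on \cref{line:iter}. Each such $\vy^{*}_{k}$ is a feasible point of the linear program~\cref{eqt:anti-min}, hence satisfies $M^{T}\vy^{*}_{k}\succeq\vx$ for the corresponding $\vx\in S$; since $\vx\in\upcl{S}\subseteq\mu B$ and $\mu B$ is upward-closed, $M^{T}\vy^{*}_{k}\in\mu B$, so $\vy^{*}_{k}\in M^{\sharp}(\mu B)\subseteq B(\mu B)=\mu B$. (Note this uses only LP feasibility; the stronger antichain/exactness statement of \cref{lem:anti-approx} is not needed for soundness.) As every element accumulated into $S'$ during the iteration lies in the upward-closed set $\mu B$, taking $\floor{\cdot}$ changes nothing about this containment, so $\upcl{S'}\subseteq\mu B$, closing the induction.

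With the invariant in hand, soundness is immediate as sketched; the case where the target set is finitely-generated and downward-closed is handled dually, replacing $\floor{\cdot}$, $\upcl{\cdot}$ and \cref{lem:pullback-upC} by their top/downward-closed analogues. I do not anticipate a serious mathematical obstacle: the monotonicity and fixed-point machinery is already packaged in \cref{thm:ReachConfiglfp} and \cref{lem:pullback-upC}, and the only new observation is the one-line fact that feasibility of~\cref{eqt:anti-min} forces $\vy^{*}_{k}$ into the pullback, hence into $\mu B$. The care needed is purely in the bookkeeping around the nested \texttt{while}/\texttt{for} structure of \cref{algo:anti} --- checking that $S'$ is re-initialized to $\floor{H}$ at the top of each outer pass and only ever extended by feasible LP solutions --- so that the invariant holds at \emph{every} intermediate write to $S'$, not merely at iteration boundaries. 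Completeness and termination are out of scope here: the problem is undecidable by \cref{thm:csmt-reach-undec}, so the algorithm is necessarily incomplete, which does not affect the soundness claim.
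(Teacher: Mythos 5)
Your proposal is correct and follows essentially the same route as the paper: the paper also proves soundness via a loop invariant (stated in the appendix as $\upcl{S^{i}}\subseteq B^{i}(\bot)$, hence $\subseteq\mu B$) combined with \cref{thm:ReachConfiglfp}, citing \cref{lem:anti-approx} for the fact that the LP solutions land inside the pullback. Your only deviations are cosmetic refinements --- phrasing the invariant directly against $\mu B$ rather than the approximants $B^{i}(\bot)$, and observing that mere LP feasibility (rather than the minimality claim of \cref{lem:anti-approx}) already suffices --- both of which are sound and arguably make the dependence structure cleaner.
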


\section{Prototype Implementation and Experiments}\label{sec:exper}
While the primary focus of this paper is the theoretical development of the unifying framework and algorithms in the two new semantics, we also implemented prototypes of the algorithms in \cref{subsec:csmtAlgo,subsec:msctAlgo}, as a proof of concept. Details of the experiment setup and benchmarks are presented in Appendix \ref{sec:ExperAppendix}. We tested the implementations against various models from~\cite{Akshay2023,DBLP:conf/qest/ChadhaKVAK11} and also our motivating examples \emph{Casino} and \emph{Exam} from~\cref{sec:intro}.

Despite the hardness of the problem, the antichain algorithm in \cref{subsec:csmtAlgo} on average solves a problem with three to five states within a few seconds. Most runs terminate before reaching a loop limit and prove the reachability.  The template-based algorithm in \cref{subsec:msctAlgo} on  average solves a problem with 3 to 5 states within the range of a few seconds to a hundred seconds. We were able to find a certificate that witnesses reachability in all our examples by considering only polynomials up to degree 4. Complete results are provided in Appendix \ref{sec:ExperAppendix} due to lack of space. 
In summary, our implementation indicates that the algorithms work on small models, at least, and have the potential to be further developed for real-world applications.

\section{Conclusion}\label{sec:conclu}
In this paper, we presented a unified framework for defining semantics for MDPs that distinguishes randomness arising from schedulers from randomness in transitions. By fixing which get interpreted distributionally (\emph{mass}) and which probabilistically (\emph{chance}), we obtain four semantics of which two are new. We proved hardness results for the reachability problem with monotone target states for the two new semantics and developed sound algorithms using completely different techniques. As future work, it would be interesting to use the framework defined here, e.g., for obtaining the semantics for \emph{quantum} MCs and MDPs that have been recently considered in the verification community~\cite{DBLP:journals/ijautcomp/YingFY21, DBLP:conf/cav/GuanFTY24}.


\bibliography{ref}

\newpage
\appendix
\section{Examples and Illustrations on Unified Framework}\label{sec:UniFrameworkExample}
In this section, we provide more examples to demonstrate the unified framework concretely.

\subsection{CM classifier, Pre-configuration and Config MC}
We start by providing a concrete examples for each of the components defined in the construction of the unified framework. We will take the MDP and the scheduler defined in \cref{table:fourVarExampleIntro} as example, and we mainly focus on the \csct{} semantics for demonstrative purpose. Adopting it to other MDP and semantics should be simple.

\myparagraph{Pre-configuration and $\delta_{\sigma}$}
With direct substitution of equation \cref{eq:preConfig}, it is easy to see that the pre-configuration of the MDP in \cref{table:fourVarExampleIntro} is given by
\begin{align*}
  \delta_{\sigma}(d)&=0.4\underbrace{\biggket{d(q_{0})\underbrace{\bigket{0.5\ket{q_{1}}+0.5\ket{q_{2}}}}_{\delta(q_{0},a)} + d(q_{1})\underbrace{\bigket{\ket{q_{1}}}}_{\delta(q_{1},a)} + d(q_{1})\underbrace{\bigket{\ket{q_{2}}}}_{\delta(q_{2},a)}}}_{\text{transition induced by action }a}\\
  &\enspace+0.6\underbrace{\biggket{d(q_{0})\underbrace{\bigket{0.1\ket{q_{1}}+0.9\ket{q_{2}}}}_{\delta(q_{0},b)} + d(q_{1})\underbrace{\bigket{\ket{q_{1}}}}_{\delta(q_{1},b)} + d(q_{1})\underbrace{\bigket{\ket{q_{2}}}}_{\delta(q_{2},b)}}}_{\text{transition induced by action }b}.
\end{align*}

\myparagraph{CM classifier}
 Since we are composing the CM-classifier with the pre-configuration generated by $\delta_{\sigma}$, this forces the set $I$ and $J$ in \cref{def:CMclassifiersInstances} to be $\act$ and $Q$ respectively. Thus by expanding the definition, the CM-classifier $\mathbb{X}^{\csct{}}$ when applying on a pre-configuration $\zeta$ of the form $\zeta=\sum_{a\in\act}\rho_{a}\biggket{\sum_{q\in Q}d(q)\ket{\delta(q,a)}}$ gives the following expression:
\begin{displaymath}
  \mathbb{X}^{\csct{}}(\zeta)=\sum_{a\in\act}\sum_{f\in Q^{Q}}\rho_{a}\cdot\bigg(\prod_{q\in Q}\delta(q,a)(f(q))\bigg)\biggket{\sum_{q\in Q}d(q)\ket{f(q)}}.
\end{displaymath}

\myparagraph{Config MC}
Thus by combining the $\delta_{\sigma}$ and the CM classifier, we should get a config MC as defined in the framework. Below is the concrete transition functions of the corresponding config MC when applying to the example in \cref{table:fourVarExampleIntro} under the \csct{} semantics.

\begin{align*}
  \delta_{\mathbb{X}^{\csct{}},\sigma}(d)&=(\mathbb{X}^{\csct{}}\circ\delta_{\sigma})(d)\\
  &=\left(\underbrace{0.4\cdot\left(0.5\cdot 1\cdot 1\right)}_{\text{action $a$, }q_{0}\to q_{1}}+\underbrace{0.6\cdot \left(0.1\cdot 1\cdot 1\right)}_{\text{action $b$, }q_{0}\to q_{1}}\right)\biggket{(d(q_{0})+d(q_{1}))\ket{q_{1}}+d(q_{2})\ket{q_{2}}}\\
  &\enspace + \left(\underbrace{0.4\cdot\left(0.5\cdot 1\cdot 1\right)}_{\text{action $a$, }q_{0}\to q_{2}}+\underbrace{0.6\cdot \left(0.9\cdot 1\cdot 1\right)}_{\text{action $b$, }q_{0}\to q_{2}}\right)\biggket{d(q_{1})\ket{q_{1}}+(d(q_{0})+d(q_{2}))\ket{q_{2}}}\\
  &=0.26\biggket{(d(q_{0})+d(q_{1}))\ket{q_{1}}+d(q_{2})\ket{q_{2}}} + 0.74\biggket{d(q_{1})\ket{q_{1}}+(d(q_{0})+d(q_{2}))\ket{q_{2}}}.
\end{align*}

The final expression tells us that after one step of MDP under the \csct{} semantics, with $0.26$ probability the system will be in a state where the mass on $q_{1}$ is $d(q_{0})+d(q_{1})$ and the mass on $q_{2}$ is $d(q_{2})$, and with $0.74$ probability the system will be in a state where the mass on $q_{1}$ is $d(q_{1})$ and the mass on $q_{2}$ is $d(q_{0})+d(q_{2})$. This also match the illustration in \cref{table:fourVarExampleIntro} when we plug in $d(q_{0})=1, d(q_{1})=d(q_{2})=0$.
 
\subsection{The Three Operators $\lambda$, $\mu$, $\eta$}\label{subsec:ThreeOpAppen}
\begin{figure}[h]
  \centering
  \includegraphics[width=.9\textwidth]{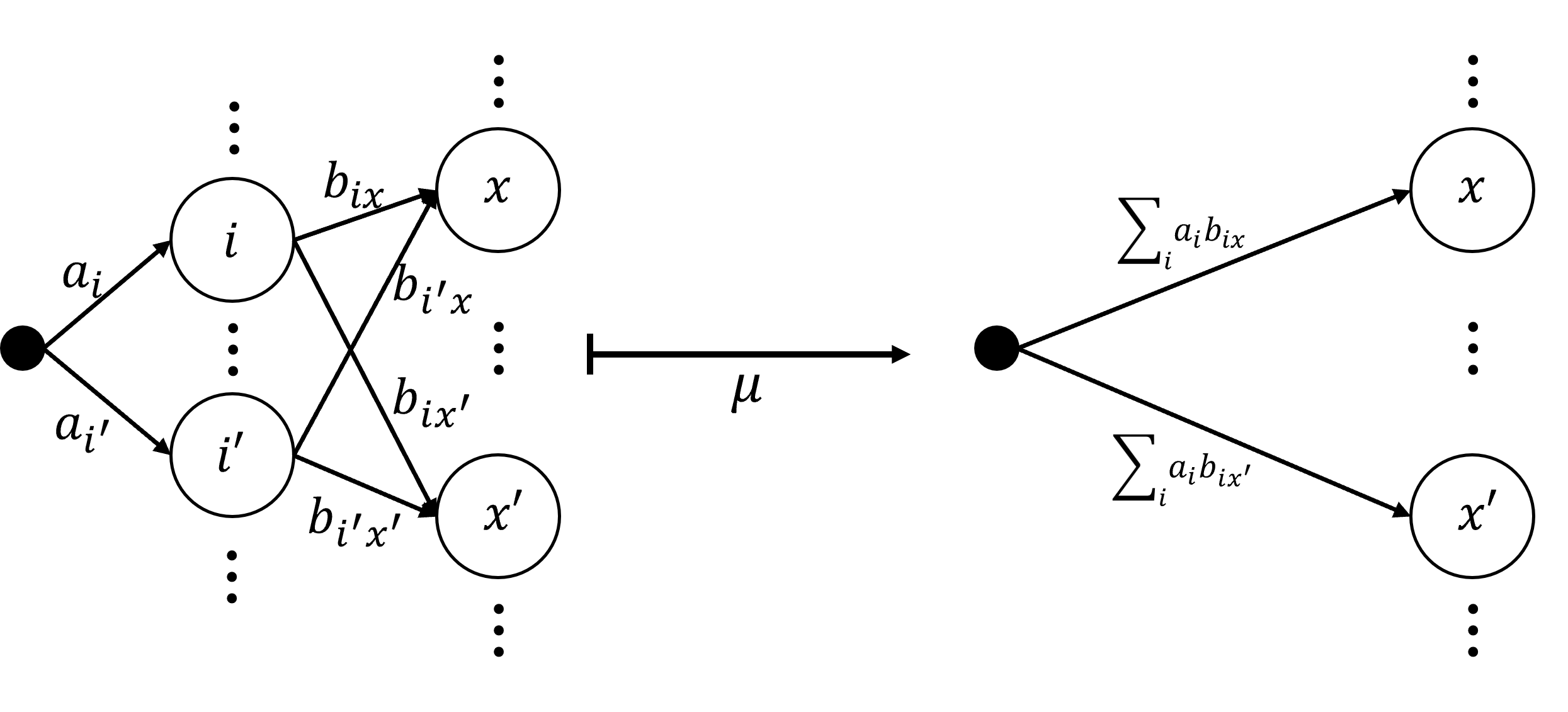}
  \caption{the suppression operator $\mu$ }
  \label{fig:suppression}

  \includegraphics[width=.9\textwidth]{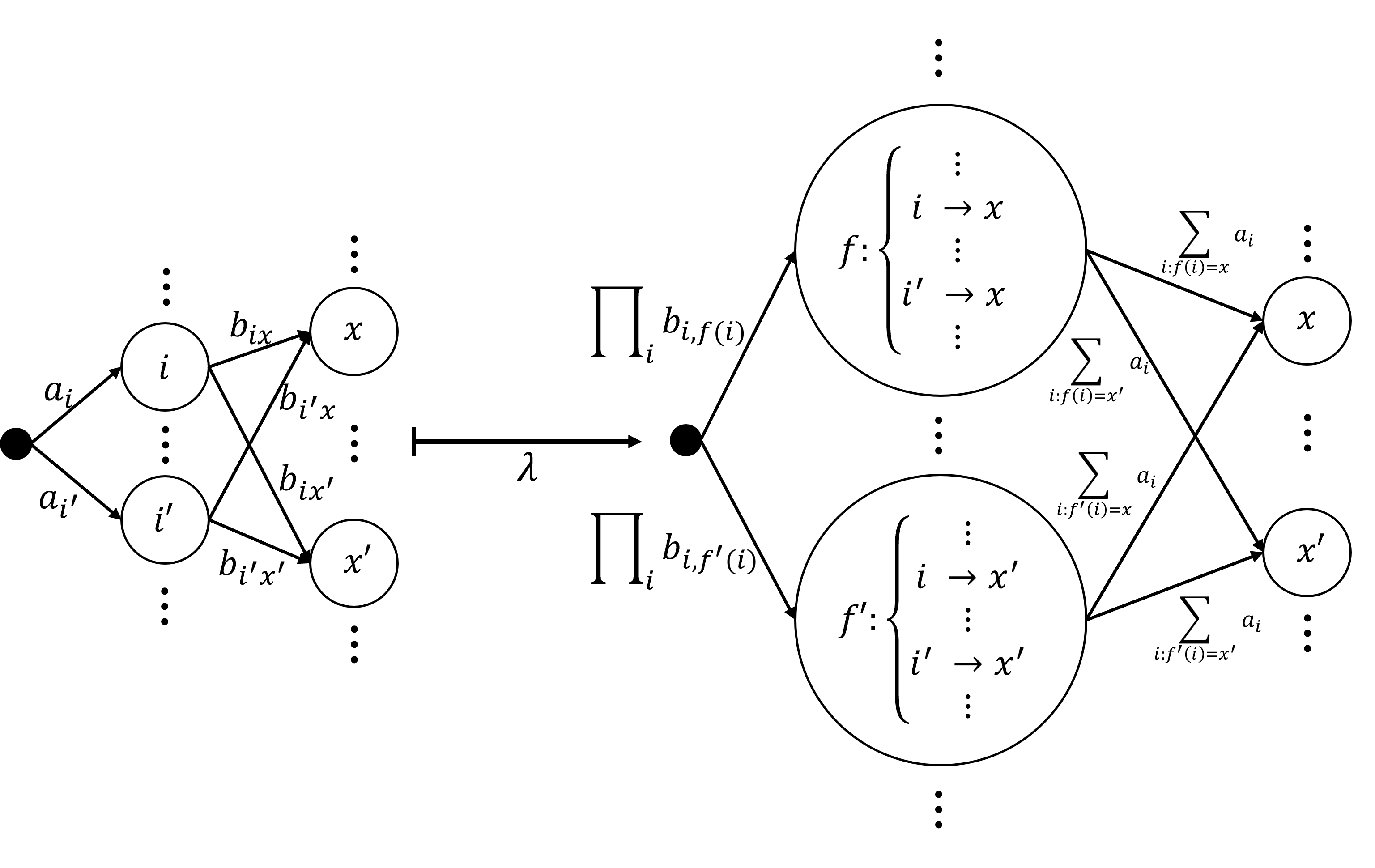}
  \caption{the MC2CM operator $\lambda$ }
  \label{fig:MC2CM}
\end{figure}

In this section we provide more intuition on the three operators. The graphical illustration for $\lambda$ and $\mu$ is provided in Appendix \ref{sec:UniFrameworkExample}. Below we present an analogy of coin tossing to explain the effect of the three operators.

\begin{example}[coin-tossing analogy]\label{ex:coinToss}
  Let ${\sf A}, {\sf B}$ be sets. We define two types of coin, the \emph{${\sf A}$-coin}, denoted as $C^{\alpha}$, produces an outcome $a\in {\sf A}$ upon tossing; and the \emph{${\sf B}$-coin}, denoted as $C^{\beta}$, produces an outcome $b\in {\sf B}$ upon tossing. Note that a $\ast$-coin generates a distribution over $\ast$, hence we abuse the notation and denote $C^{\alpha}\in\dist({\sf A})$ and $C^{\beta}\in\dist({\sf B})$. Now consider having an ${\sf A}$-coin and a collection of ${\sf B}$-coins, in the form of $(C^{\alpha},(C^{\beta}_{a})_{a\in {\sf A}})$, we construct a random process of tossing two coins one at a time as follows. The process first tosses the coin $C^{\alpha}$ and produces an outcome $a\in{\sf A}$, which naturally induces the second coin $C^{\beta}_{a}$; the coin $C^{\beta}_{a}$ is then tossed to decide a choice $b\in {\sf B}$.

  Let $X$ be a set and fix a function $f:{\sf A}\times {\sf B}\to X$, we denote $\dist_{\sf A}(\dist_{\sf B}(f,X))$ the set of distributions induced by $f$, that is for any distribution $d\in\dist_{\sf A}(\dist_{\sf B}(f,X))$, $d$ is in the form of $\sum_{a\in {\sf A}}p_{a}\biggket{\sum_{b\in {\sf B}}p_{a,b}\ket{f(a,b)}}$. For any such $d$, it induces a unique combination of coins $(C^{d,\alpha},(C^{d,\beta}_{a})_{a\in {\sf A}})$ so that the distribution $d$ is the distribution over the outcomes $f(a,b)$ induced by the process under the mapping $f$.
\end{example}

\myparagraph{Suppression $\mu$ and Dirac $\eta$}
The suppression operator $\mu$ is best described as ``suppression of two successive probabilistic branching into one'', as shown in \cref{fig:suppression}. To understand the suppression, consider the coin-tossing analogy in~\cref{ex:coinToss}. Instead of throwing the coin one by one, one can throw all the coins at once and decide both of the choices simultaneously. Intuitively, a certain pair of outcome $(a,b)\in{\sf A\times B}$ will be produced with probabilities $p_{a}p_{a,b}$, which is the same as the illustration in \cref{fig:suppression}. The mathematical essence of this ``throwing all coins at once'' is exactly captured by the suppression operator defined in \cref{def:mu}.

The Dirac operator $\eta$ turns an element $x$ into the (trivial) Dirac distribution at $x$, which simply corresponds to creating a coin that always gives the same outcome, or a coin with two heads practically, when thinking in the coin-tossing analogy (\cref{ex:coinToss}). Categorically speaking, these $\mu, \eta$ together equip $\dist$ with a structure of a \emph{monad}. See \cite{MacLane71,Awodey06,HasuoJS07b}.

\myparagraph{MC2CM $\lambda$}
The MC2CM operator $\lambda$ is much less known. Its type with annotations is $\lambda\colon\Dmass(\Dchan(X))\to\Dchan(\Dmass(X))$; hence the name (``mass-chance to chance-mass''). Its definition is best described as ``switching the order of two probabilistic branching'', as illustrated in~\cref{fig:MC2CM}. 

Again we consider the coin-tossing analogy in~\cref{ex:coinToss} to provide more intuition. The idea of ``switching'' in this example can be simply understood as swapping the order of tossing the coins, but this cannot be carried out directly since the choice of the second coin $C_{a}^{\beta}$ depends on the outcome of the first coin $C^{\alpha}$. A solution to this is as follows, we toss all the coins $(C_{a}^{\beta})_{a\in {\sf A}}$ at once and record all the outcomes for each $a\in {\sf A}$. Then we toss the coin $C^{\alpha}$ to obtain an outcome $a\in{\sf A}$, a pair of outcome $(a,b)$ and thus $x=f(a,b)\in X$ can be decided by checking the record for $(C_{a}^{\beta})_{a\in{\sf A}}$ we made in the first place. Intuitively, the probability of getting a specific table of records, denoted as $R:{\sf A}\to {\sf B}$, is simply the product of the probability of getting each individual record $R(a)$, that is $\prod_{a\in {\sf A}}p_{a,R(a)}$. This explains the right-hand side of~\cref{eq:MC2CM}, which all mappings $f\in X^{I}$ in the outer summation are enumerated with the product of $b_{i,f(i)}$ as the corresponding probability.

\section{Quantifier Elimination for the template-based synthesis algorithm on the \msct{} Semantics}
We briefly describe the approach for performing quantifier elimination on the quantified constraints obtained in \cref{subsec:msctAlgo}. The main tools we rely on are the Farkas's lemma~\cite{farkas1902theorie} and Handelman's theorem~\cite{handelman1988representing}
We first recall each of them below.

\begin{theorem}[Farkas's lemma]
  A system of $m$ linear inequalities $\Phi\vx\succeq 0$ entails a linear inequality $\phi(\vx)\geq 0$ if and only if there exists $\vy\in\mathbb{R}^{m}_{\geq 0}$ such that $\phi(\vx)=\vy^{T}\Phi\vx$, that is,
  $\forall\vx\in\mathbb{R}^{n}, \left(\Phi\vx\succeq 0\implies \phi(\vx)\geq 0\right)\iff\exists \vy\in\mathbb{R}^{m}_{\geq 0},\,\left(\phi(\vx)=\vy^{T}\Phi\vx\right)$.
\end{theorem}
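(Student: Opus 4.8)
The plan is to prove the two implications separately; the ``if'' direction is immediate, and the ``only if'' direction is the substance, handled by a separating-hyperplane argument for polyhedral cones. For the ``if'' direction, suppose $\phi(\vx)=\vy^{T}\Phi\vx$ for some $\vy\in\mathbb{R}^{m}_{\geq 0}$. Then for any $\vx$ with $\Phi\vx\succeq 0$ we have $\phi(\vx)=\sum_{i=1}^{m}y_{i}\,(\Phi\vx)_{i}$, a sum of products of nonnegative reals, hence $\phi(\vx)\geq 0$. This is a one-line calculation and needs nothing beyond $\vy\succeq\mathbf{0}$ and $\Phi\vx\succeq\mathbf{0}$.

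For the ``only if'' direction I would argue by contraposition. Let $K=\{\Phi^{T}\vy \mid \vy\in\mathbb{R}^{m}_{\geq 0}\}\subseteq\mathbb{R}^{n}$, identifying the linear functional $\phi$ with its coefficient vector, so that ``$\phi(\vx)=\vy^{T}\Phi\vx$ for some $\vy\succeq\mathbf{0}$'' reads ``$\phi\in K$''. Assume $\phi\notin K$. First, I would show that $K$ is a closed convex cone: closure under addition and nonnegative scaling is clear, and closedness of a finitely generated cone is the classical fact obtained, e.g., from a Carath\'eodory-type theorem for cones (every element is a nonnegative combination of at most $n$ of the generators $\Phi^{T}\ve_{i}$) together with a compactness argument, or by induction on $m$. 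Since $K$ is closed and convex and $\phi\notin K$, the separating-hyperplane theorem yields $\vx\in\mathbb{R}^{n}$ with $\langle\phi,\vx\rangle<0\le\langle k,\vx\rangle$ for every $k\in K$ (the bound on $K$ can be taken to be $0$ because $K$ is a cone containing $\mathbf{0}$). Applying this inequality to $k=\Phi^{T}\ve_{i}$ gives $(\Phi\vx)_{i}=\langle\Phi^{T}\ve_{i},\vx\rangle\ge 0$ for each $i$, i.e.\ $\Phi\vx\succeq\mathbf{0}$, while $\phi(\vx)=\langle\phi,\vx\rangle<0$. Thus $\vx$ is a counterexample to the entailment $\Phi\vx\succeq\mathbf{0}\implies\phi(\vx)\ge 0$, which is exactly the contrapositive of the claim.

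The main obstacle is the closedness of the finitely generated cone $K$: it is the one genuinely nontrivial ingredient, and without it the separation step could merely produce a hyperplane supporting $\overline{K}$ rather than one separating $\phi$. I would either quote this standard fact or, to stay self-contained, prove it by induction on the number of generators (remove one generator, invoke closedness in the smaller case, and handle the one-dimensional extension). Alternative routes that avoid the topological argument are (i) Fourier--Motzkin elimination applied to the system $\{\Phi^{T}\vy=\phi,\ \vy\succeq\mathbf{0}\}$, whose infeasibility is converted, variable by variable, into an explicit certificate $\vx$, and (ii) LP strong duality applied to $\min\{\,0 \mid \Phi^{T}\vy=\phi,\ \vy\succeq\mathbf{0}\,\}$ against its dual. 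Since this theorem is classical and is recalled here only as a tool for the quantifier-elimination step, citing~\cite{farkas1902theorie} and sketching the separation argument above is, in my view, the cleanest presentation.
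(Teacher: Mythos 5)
Your proposal is correct. Note first that the paper does not prove this statement at all: Farkas's lemma is recalled in the appendix purely as a classical tool for the quantifier-elimination step and is attributed to~\cite{farkas1902theorie}, so there is no in-paper argument to compare yours against. Your proof is the standard one and is sound: the ``if'' direction is the one-line observation that $\vy^{T}\Phi\vx=\sum_{i}y_{i}(\Phi\vx)_{i}\geq 0$ whenever $\vy\succeq\mathbf{0}$ and $\Phi\vx\succeq\mathbf{0}$, and the ``only if'' direction correctly reduces to strict separation of the coefficient vector of $\phi$ from the cone $K=\{\Phi^{T}\vy\mid\vy\succeq\mathbf{0}\}$, with the cone property forcing the separating threshold to $0$ and the evaluation at the generators $\Phi^{T}\ve_{i}$ producing the counterexample $\vx$ with $\Phi\vx\succeq\mathbf{0}$ and $\phi(\vx)<0$. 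You also correctly isolate the one genuinely nontrivial ingredient --- closedness of a finitely generated cone, without which one only gets separation from $\overline{K}$ --- and your proposed remedies (Carath\'eodory for cones plus compactness, induction on the number of generators, Fourier--Motzkin, or LP duality) are all standard and adequate. One minor remark worth keeping in mind: the statement as written is the \emph{homogeneous} form of Farkas's lemma ($\phi$ has no constant term), and your proof matches that exactly; the paper's later applications involve affine left-hand sides such as $\sum_{q}d(q)-1\geq 0$ and affine conclusions, which require the affine variant --- obtainable from the homogeneous one by the usual homogenization trick, but not literally what is stated or proved here.
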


\begin{theorem}[Handelman's theorem]
  Let $\Phi(\vx)\succeq 0$ be a system of $m$ linear inequalities over $\vx\in\mathbb{R}^{n}$, that is a conjunction of $m$ inequalities in the form of $\Phi_{1}(\vx)\geq 0, \Phi_{2}(\vx)\geq 0,...,\Phi_{m}(\vx)\geq 0$. We write $\Psi\in\Phi$ to denote $\Psi$ is the left-hand side of one of the linear inequalities in $\Phi(\vx)\succeq 0$. For any system of linear inequalities, the semigroup generated by $\Phi$ under the usual multiplication, denoted as ${\sf Prod}(\Phi)$, is defined by ${\sf Prod}(\Phi)=\{\prod_{k=1}^{K}\Psi_{k}(\vx)\mid K\in \{0\}\cup[m],\Psi_{k}(\vx)\in\Phi(\vx)\}$. Note that the case $K=0$ corresponds to $1$.

  \noindent
  Now further assume the system $\Phi(\vx)\succeq 0$ has a non-empty and compact feasible region, then it entails a {\emph polynomial} inequality $\phi(\vx) > 0$ if and only if $\phi$ can be expressed as a convex combination of elements in the semigroup ${\sf Prod}(\Phi)$. That is, $\forall\vx\in\mathbb{R}^{n}, \left(\Phi\vx\succeq \mathbf{0}\implies \phi(\vx) > 0\right)\iff\exists y_{1},...,y_{t}\geq 0$ and $\psi_{1},...,\psi_{t}\in {\sf Prod}(\Phi)$ such that $\phi(\vx)=\sum_{i=1}^{t}y_{i}\psi_{i}(\vx)$.
\end{theorem}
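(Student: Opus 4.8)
The direction $(\Leftarrow)$ is immediate and is the only one the paper actually needs: if $\phi=\sum_{i}y_{i}\psi_{i}$ with $y_{i}\ge 0$ and each $\psi_{i}$ a product of the affine forms $\Phi_{1},\dots,\Phi_{m}$, then wherever $\Phi(\vx)\succeq 0$ every factor, hence every $\psi_{i}(\vx)$, is non-negative, so $\phi(\vx)\ge 0$; one even gets the strict inequality exactly when the empty product $1\in{\sf Prod}(\Phi)$ occurs with positive coefficient. For the substantive direction $(\Rightarrow)$ the plan is to reduce it to \emph{P\'olya's theorem}: a homogeneous polynomial strictly positive on the standard simplex $\Delta_{k-1}=\{u\in\mathbb{R}^{k}_{\ge 0}:\sum_{j}u_{j}=1\}$ becomes, after multiplication by a large enough power of $u_{1}+\dots+u_{k}$, a polynomial with all monomial coefficients positive.

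First I would use compactness to turn the $\Phi_{i}$ into simplex coordinates. Let $S=\{\vx:\Phi_{i}(\vx)\ge 0\ \forall i\}$; after a translation assume $0\in\operatorname{int}(S)$ and, rescaling, $\Phi_{i}(\vx)=1-\langle a_{i},\vx\rangle$. Compactness of $S$ forces the $a_{i}$ to positively span $\mathbb{R}^{n}$, so there are $\mu_{i}>0$ with $\sum_{i}\mu_{i}a_{i}=0$; normalising $\sum_{i}\mu_{i}=1$ gives the crucial identity $\sum_{i}\mu_{i}\Phi_{i}\equiv 1$. Hence $\Psi:=(\mu_{1}\Phi_{1},\dots,\mu_{m}\Phi_{m})$ maps $\mathbb{R}^{n}$ into the hyperplane $\{\sum u_{j}=1\}$ and maps $S$ onto $\Psi(\mathbb{R}^{n})\cap\Delta_{m-1}$, with coordinate functions that are positive multiples of the \emph{original} $\Phi_{i}$. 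Since the $a_{i}$ span, $\Psi$ is injective, so $\phi=g\circ\Psi$ for some polynomial $g$ on $\mathbb{R}^{m}$. Then $g>0$ on the compact set $\Psi(S)$, and I would add $N\sum_{k}h_{k}^{2}$, where the affine forms $h_{k}$ cut out $\Psi(\mathbb{R}^{n})$ inside $\mathbb{R}^{m}$ (so $g$ is unchanged on the image of $\Psi$); a routine compactness argument shows that for $N$ large the modified $g$ is strictly positive on all of $\Delta_{m-1}$.

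Next I would homogenise the modified $g$ against $u_{1}+\dots+u_{m}$ into a form $G$ that coincides with it on $\{\sum u_{j}=1\}$, hence is positive on $\Delta_{m-1}$; P\'olya's theorem then gives $K$ with $(u_{1}+\dots+u_{m})^{K}G=\sum_{|\beta|=K+\deg G}c_{\beta}u^{\beta}$, all $c_{\beta}>0$. Substituting $u=\Psi(\vx)$ collapses the prefactor (because $\sum_{i}\mu_{i}\Phi_{i}\equiv 1$) and, since $\Psi(\vx)$ always lies in $\{\sum u_{j}=1\}$, the left side equals $g(\Psi(\vx))=\phi(\vx)$; the right side becomes $\sum_{\beta}c_{\beta}\bigl(\prod_{i}\mu_{i}^{\beta_{i}}\bigr)\prod_{i}\Phi_{i}(\vx)^{\beta_{i}}$, a non-negative combination of elements of ${\sf Prod}(\Phi)$ — the desired certificate.

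The main obstacle is P\'olya's theorem itself, which I would invoke as a black box; its proof (a Bernstein/degree-raising positivity argument) is where the real analytic work sits. The remaining care points are: (i) deriving the identity $\sum_{i}\mu_{i}\Phi_{i}\equiv 1$ from compactness and, crucially, keeping the simplex coordinates expressed through the given $\Phi_{i}$ rather than complementary forms $1-\Phi_{i}$, whose binomial expansion would reintroduce negative coefficients; (ii) the extension-of-positivity step from $\Psi(S)$ to the full simplex; and (iii) the fact that the construction needs products of the $\Phi_{i}$ of unbounded length, so ${\sf Prod}(\Phi)$ must be read with the number of factors ranging over all of $\mathbb{N}$. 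The degenerate cases ($S$ not full-dimensional, handled by passing to its affine hull; and the strict-versus-non-strict gap, handled as in the paper by only invoking $(\Leftarrow)$ for the non-strict form) are easy and I would dispose of them first.
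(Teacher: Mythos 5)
The paper does not prove this statement at all: it is recalled verbatim as a classical result and attributed to Handelman~\cite{handelman1988representing}, and the surrounding text only ever invokes the easy soundness direction (a nonnegative combination of products of the $\Phi_{i}$ is nonnegative on the feasible region). Your proposal therefore goes well beyond the paper by supplying an actual proof route, and the route you choose --- reduce to P\'olya's theorem by normalising the constraints to $\Phi_{i}=1-\langle a_{i},\vx\rangle$, extracting the identity $\sum_{i}\mu_{i}\Phi_{i}\equiv 1$ from compactness, mapping $S$ into the simplex via $\Psi=(\mu_{i}\Phi_{i})_{i}$, extending positivity off the image by adding $N\sum_{k}h_{k}^{2}$, homogenising, and pulling back the all-positive-coefficients representation --- is a standard and correct derivation of Handelman's theorem (different in spirit from Handelman's original order-ring argument, which is what the cited source actually does). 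The individual steps check out: positive spanning of the $a_{i}$ does yield strictly positive $\mu_{i}$ with $\sum_{i}\mu_{i}a_{i}=0$; injectivity of $\Psi$ and the spanning of affine functions by the $\mu_{i}\Phi_{i}$ give $\phi=g\circ\Psi$; and the compactness argument for choosing $N$ is routine. Your three care points are also the right ones to flag, and two of them expose genuine blemishes in the statement as the paper gives it: the ``only if'' direction is \emph{false} with ${\sf Prod}(\Phi)$ restricted to products of at most $m$ factors (the certificate degree is unbounded in general, which is exactly why your construction produces products of length $K+\deg G$), and the strict inequality on the left cannot follow from a nonnegative combination on the right without an extra nondegeneracy assumption. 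The paper silently works around both issues --- it truncates the degree in practice and explicitly retreats to the one-directional, non-strict reading --- so nothing downstream is affected, but your reading of the theorem is the correct one and the paper's literal biconditional is not.
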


We require only the Farkas's lemma for most of the constraints obtained in \cref{subsec:msctAlgo}, except for the constraint in \cref{eqt:constraint-subm-inductive}. For the latter, we have a polynomial expression on the right-hand side of the $\implies$ after expanding the scheduler with the template variables, hence we need to apply the Handelman's theorem. However, the Handelman's theorem cannot be applied directly since the polynomial inequality constructed there is non-strict. Nevertheless, if the polynomial $\phi(\vx)$ can be expressed as a convex combination of elements in ${\sf Prod}(\Phi)$, the entailment can still be implied even when the inequality is non-strict.



\myparagraph{Applying the lemma and theorem} The key idea of applying the lemma and theorem is to transform the problem of deciding the satisfiability of a $\forall$-quantified formula into the problem of searching a feasible solution over polynomial inequalities. In both the Farkas's lemma and the Handelman's theorem, the satisfiability can be ensured if the right-hand side can be expressed as a convex combination over some domain induced by the left-hand side. Therefore, we set up an extra set of template variables to search for the coefficients of the convex combination required that imply the satisfiability of the quantified formula.

For Farkas's lemma, a template variable is declared for each linear inequalities on the left-hand side of the implication, that is a template variable $y_{j}$ is declared for each $\Phi_{j}\in \Phi$. As hinted, these variables are intended to serve as the coefficient of the convex combination over the linear inequalities $\Phi(\vx)$ so that it is equivalent to the right-hand side $\phi(\vx)$. Therefore, for each $x_{i}$ of the variable $\vx$, we enforce an equality between the coefficients of each $x_{i}$ in $\phi(\vx)$ and $\sum_{j}y_{j}\Phi_{j}(\vx)$. Solving the set of equalities generated ensures the convex combination and thus the satisfiability of the quantified formula. Further, this translation can be done efficiently by noting that the left-hand side of the constraints obtained in \cref{subsec:msctAlgo} are all the same, namely $d\in\dist(Q)$. The translation can be done once only and reused for other constraints.

The translation is more complicated when applying the Handelman's theorem. Since we need to consider the semigroup generated by the product of the linear inequalities, which is countably infinite in general and thus impossible to compute explicitly. Practically, we put a bound on the degree of the polynomial generated in the semigroup and consider only the convex combination generated among them. Then the same procedure as in the Farkas's lemma can be applied here.

There are two small technical details involves in the translation appears in the constraint \cref{eqt:constraint-subm-inductive}. The left-hand side consists of the expression $d\in\dist(Q)\setminus H$, which is non-trivial that it can be expressed as a conjunction of linear inequalities. With some algebraic manipulation, one can show that the expression $d\in\dist(Q)\setminus H$ can be equivalently expressed as follows,
\begin{equation}\label{eqt:Antichain-Conjunction}\small
  \begin{array}{l}\textstyle
    \left[d\in {\cal D}(Q)\setminus H\implies \phi(d)\right] \iff\bigwedge_{{\bf q}\in Q^{n}}\left[(d\in {\cal D}(Q))\land \bigwedge_{i=1}^{n}(d({\bf q}_{i})<x_{i}({\bf q}_{i}))\implies\phi(d)\right].
  \end{array}
\end{equation}
The translation can therefore be carried out by first rewriting the constraint in \cref{eqt:constraint-subm-inductive} as a conjunction of $\abs{Q}^{n}$ constraints and apply the Handelman's theorem to each of them. While for the right-hand side, noting that we do not follow exactly the definition in \cref{def:gamma-subm}, since that would result in an exponential size of constraints. We show, later in Appendix \ref{sec:proof}, that when the submartingale is linear, then the inductive expression can be simplified to the form we show above.

\section{Omitted Definitions}\label{sec:omittedDef}

\myparagraph{Formal definition of the threshold reachability problem}

First, we define the set of configuration paths induced by a config MC. Intuitively, it is the set of all possible paths that the config MC can take.

\begin{definition}[configuration path]\label{def:configPath}
  In the setting of \cref{def:ConfMC}, we define
  \begin{math}
    \cpath_{\mathbb{X},\sigma}=
    \bigl\{d_{0} d_{1}\dotsc d_{n}\in \dist(Q)^{+}
    \,\big|\,
    \bigl(d_{0},\, d_{0}d_{1},\, \dotsc,\, d_{0}d_{1}\dotsc d_{n}\bigr) \text{ is a (finite) path in $\mathcal{M}_{\mathbb{X},\sigma}$}
    \bigr\}
  \end{math}, where  \emph{path} of a Markov chain is defined as usual (requiring non-zero transition probabilities).
  Elements of
  \begin{math}
    \cpath_{\mathbb{X},\sigma}
  \end{math}
  are called \emph{configuration paths} of $\mathcal{M}$ under $\mathbb{X}, \sigma$. The set $\cpath^{\omega}_{\mathbb{X},\sigma}$ of \emph{infinite configuration paths} is defined similarly.

  For a memoryless scheduler $\sigma\colon\dist(Q)\to \dist(\act)$, the above definition simplifies to the sets of (finite and infinite) paths in the config MC $\mathcal{M}_{\mathbb{X},\sigma}$ (\cref{def:ConfMC}), that is, $\cpath_{\mathbb{X},\sigma}=\{d_{0}d_{1}...d_{n}\in\dist(Q)^{+}\mid (d_{0},d_{1},...,d_{n}) \text{ is a finite path in $\mathcal{M}_{\mathbb{X},\sigma}$}\}$ and similarly for the set of infinite paths.
\end{definition}

It follows from basic measure theory that 1) $\cpath^{\omega}_{\mathbb{X},\sigma}$ comes with a natural $\sigma$-algebra ${\cal F}_{\mathbb{X},\sigma}$ generated by cylindric sets, and 2) there is a unique probability measure $\Pb_{\mathbb{X},\sigma}$ on $(\cpath^{\omega}_{\mathbb{X},\sigma},{\cal F}_{\mathbb{X},\sigma})$ induced by the config MC $\mathcal{M}_{\mathbb{X},\sigma}$. We obtain these by applying  results for general MCs \cite{Baier2008,AshD99} to the config MC $\mathcal{M}_{\mathbb{X},\sigma}$ (\cref{def:ConfMC}).

The formal definition of the threshold reachability problem can thus be defined as follows.

\begin{problem}[threshold reachability problem]\label{prob:GeneralThresholdFormal}
  Assume the setting of \cref{def:ConfMC}. Let $d_{0}\in \dist(Q)$ and  $H\subseteq \dist(Q)$, called an \emph{initial configuration} and a \emph{target set}, respectively. We let $\reach_{\mathbb{X},\sigma}(d_{0},H)$ denote the set of paths that start from $d_{0}$ and eventually reach $H$, it is formally defined as follows:
    \begin{displaymath}
      \reach_{\mathbb{X},\sigma}(d_{0},H)
      \;=\;
      \{d_{0}d_{1}d_{2}\dotsc\in \cpath^{\omega}_{\mathbb{X},\sigma}\mid \text{$d_{i}\in H$ for some $i$}\},
    \end{displaymath}
  The \emph{threshold reachability problem} asks for a scheduler $\sigma$ such that
  \begin{math}
    \Pb_{\mathbb{X},\sigma}\bigl(\reach_{\mathbb{X},\sigma}(d_{0},H)\bigr)\geq \xi
  \end{math}.
  \end{problem}

\myparagraph{General definition of the config MC} 
The following is the general case (allowing memoryful schedulers) of the memoryless definition in \cref{sec:instances}.

\begin{definition}[config MC under \csct{} semantics, the general case]\label{def:ConfMCCSCTGeneral}
  Let ${\cal M}=(Q,\act,\delta)$ be an MDP, and  $\sigma\colon \dist(Q)^{+}\to \dist(\act)$ be a scheduler. The config MC induced by ${\cal M}, \mathbb{X}^{\csct{}}, \sigma$, denoted by
  \begin{math}
    {\cal M}^{\csct{}}_{\sigma}=(\dist(Q)^{+},\delta_{\mathbb{X}^{\csct{}},\sigma})
  \end{math}, is given as follows.

  The config MC's state space is the set $\dist(Q)^{+}$ of sequences of configurations of the MDP $\mathcal{M}$. Its transition function $\delta_{\mathbb{X}^{\csct{}},\sigma}\colon\dist(Q)^{+}\to\dist(\dist(Q)^{+})$ is defined as follows: let $p=d_{0}d_{1}...d_{n}$ be a finite sequence of configurations, and $d_{n+1}$ be the next configuration, then
  \begin{displaymath}
    \delta_{\mathbb{X}^{\csct{}},\sigma}(p)(p\cdot d_{n+1})=\sum_{a\in\act}\sum_{f\in Q^{Q} \text{ s.t. } \dist f(d_{n})=d_{n+1}}\sigma(p)(a)\cdot\bigg(\prod_{q\in Q}\delta(q,a)\bigl(f(q)\bigr)\bigg).
  \end{displaymath}
\end{definition}

\begin{definition}[config MC under \msct{} semantics, the general case]\label{def:ConfMCMSCTGeneral}
  Assume the setting that is the same as in~\cref{def:ConfMCCSCTGeneral} except for $\mathbb{X}^{\msct{}}$ being the CM classifier instead. Let ${\cal M}^{\msct{}}_{\sigma}=(\dist(Q)^{+},\delta_{\mathbb{X}^{\msct{}},\sigma})$ denotes the config MC induced by ${\cal M}, \mathbb{X}^{\msct{}}, \sigma$. Its transition function $\delta_{\mathbb{X}^{\msct{}},\sigma}\colon\dist(Q)^{+}\to\dist(\dist(Q)^{+})$ is concretely given as follows: let $p=d_{0}d_{1}...d_{n}$ be a finite sequence of configurations, and $d_{n+1}$ be the next configuration, then
  \begin{displaymath}
    \delta_{\mathbb{X}^{\msct{}},\sigma}(p)(p\cdot d_{n+1})=\sum_{a\in\act}\sum_{f\in Q^{Q\times\act}\text{ s.t. } d_{n+1}=\dist f(d_{n}\otimes\sigma(p))}\prod_{(q,a)\in Q\times\act}\delta(q,a)(f(q,a)).
  \end{displaymath}
\end{definition}

\begin{definition}[config MC under \csmt{} semantics, the general case]\label{def:ConfMCCSMTGeneral}
  Assume the setting that is the same as in~\cref{def:ConfMCCSCTGeneral} except for $\mathbb{X}^{\csmt{}}$ being the CM classifier instead. Let ${\cal M}^{\csmt{}}_{\sigma}=(\dist(Q)^{+},\delta_{\mathbb{X}^{\csmt{}},\sigma})$ denotes the config MC induced by ${\cal M}, \mathbb{X}^{\csmt{}}, \sigma$. Its transition function $\delta_{\mathbb{X}^{\csmt{}},\sigma}\colon\dist(Q)^{+}\to\dist(\dist(Q)^{+})$ is concretely given as follows: let $p=d_{0}d_{1}...d_{n}$ be a finite sequence of configurations, and $d_{n+1}$ be the next configuration, then
  \begin{displaymath}
    \delta_{\mathbb{X}^{\csmt{}},\sigma}(p)(p\cdot d_{n+1})=\sum_{a\in\act\text{ s.t. }d_{n+1}=\mu_{Q}(\dist(\delta^{\wedge}(a))(d))}\sigma(p)(a).
  \end{displaymath}
\end{definition}

\begin{definition}[config MC under \msmt{} semantics, the general case]\label{def:ConfMCMSMTGeneral}
  Assume the setting that is the same as in~\cref{def:ConfMCCSCTGeneral} except for $\mathbb{X}^{\msmt{}}$ being the CM classifier instead. Let ${\cal M}^{\msmt{}}_{\sigma}=(\dist(Q)^{+},\delta_{\mathbb{X}^{\msmt{}},\sigma})$ denotes the config MC induced by ${\cal M}, \mathbb{X}^{\msmt{}}, \sigma$. Its transition function $\delta_{\mathbb{X}^{\msmt{}},\sigma}\colon\dist(Q)^{+}\to\dist(\dist(Q)^{+})$ is concretely given as follows: let $p=d_{0}d_{1}...d_{n}$ be a finite sequence of configurations, and $d_{n+1}$ be the next configuration, then
  \begin{displaymath}
    \delta_{\mathbb{X}^{\msmt{}},\sigma}(p)(p \cdot d_{n+1})=\begin{cases}
      1 & \quad d_{n+1}=\sum_{a\in\act}\sum_{q'\in Q}\sum_{q\in Q}\sigma(p)(a) d_{n}(q)\delta(q,a)(q')\ket{q'} \\
      0 & \quad \text{otherwise}
    \end{cases}.
  \end{displaymath}
\end{definition}



\myparagraph{Pseudocode for the antichain-based algorithm on the \csmt{} semantics}

\begin{algorithm}[H]\footnotesize
  \caption{Backward Projection Algorithm}\label{algo:anti}
  \DontPrintSemicolon
  \SetKwRepeat{Do}{do}{while}
  \KwIn{an MDP ${\cal M}=(Q,\act,\mat)$, an initial configuration $d_{0}\in\dist(Q)$, a target set antichain $\floor{H}\subset [0,1]^{\abs{Q}}$, parameters $K,L\in\mathbb{N}$}
  \KwOut{{\sf True} if $d_{0}$ is reachable to $H$}
  $S\leftarrow \{\}$\\
  $S'\leftarrow \floor{H}$\\
  \Do{$S\neq S'$}{
    $S\leftarrow S'$\\
    $S'\leftarrow \floor{H}$\\
    \For{$M\in\mat$}{
      \For{$\vx\in S$}{
        $\{\vy_{k}^{\ast}\mid k\in [K]\}\leftarrow$ Solve problem in \cref{eqt:anti-min} against $M$,$\vx$ with parameter $K,L$\label{line:iter}\\
        $S'\leftarrow \floor{S'\cup\{\vy_{k}^{\ast}\mid k\in [K]\}}$\label{line:floor}
      }
    }
    \lIf{$d_{0}\in S'$}{\Return{{\sf True}}}
  }
  \lIf{$d_{0}\in S'$}{\Return{{\sf True}}}
\end{algorithm}

\section{Omitted Proofs}\label{sec:proof}

This part of the appendix contains the proof for lemma and theorems for the paper.

\myparagraph{Proof of equivalence between \csct{} semantics and conventional semantics}
We claimed in \cref{subsec:KnownAlgoResult} that the \csct{} semantics reduces to the conventional semantics under the assumption that the initial configuration is Dirac. We formally prove it here. It suffices to show that $\delta_{\mathbb{X}^{\csct{}},\sigma}$ maps a Dirac configuration to a distribution of Dirac configuration, the rest follows by a simple induction on path. We show it for memoryless case and its generalization is trivial, let $d\in\dist(Q)$ be Dirac configuration with supports on $s\in Q$, then we have

\begin{align*}
  \delta_{\mathbb{X}^{\csct{}},\sigma}(d)&=\sum_{a\in\act}\sum_{f\in Q^{Q}}\sigma(d)(a)\cdot\bigg(\prod_{q\in Q}\delta(q,a)(f(q))\bigg)\biggket{\sum_{q\in Q}d(q)\ket{f(q)}}\\
  &=\sum_{a\in\act}\sum_{f\in Q^{Q}}\sigma(d)(a)\cdot\bigg(\prod_{q\in Q}\delta(q,a)(f(q))\bigg)\biggket{\ket{f(p)}}\\
  &=\sum_{a\in\act}\sigma(d)(a)\sum_{f\in Q^{Q}}\bigg(\prod_{q\in Q\setminus\{p\}}\delta(q,a)(f(q))\bigg)\cdot \delta(p,a)(f(p))\biggket{\ket{f(p)}}\\
  &=\sum_{a\in\act}\sigma(d)(a)\sum_{q\in Q}\delta(p,a)(q)\biggket{\ket{q}}\sum_{\substack{f\in Q^{Q}\\f(p)=q}}\bigg(\prod_{q\in Q\setminus\{p\}}\delta(q,a)(f(q))\bigg)\\
  &=\sum_{a\in\act}\sum_{q\in Q}\sigma(d)(a)\cdot \delta(p,a)(q)\biggket{\ket{q}}.
\end{align*}
It is not hard to see that $\sigma(d)(a)\cdot\delta(p,a)(q)\biggket{\ket{q}}$ is a Dirac configuration and therefore concluding our proof.

\myparagraph{Proof of global scheduler subsume local scheduler under \msmt{} semantics}
\begin{proof}
  Let ${\cal M}=(Q,\act,\delta)$ be an MDP and $\sigma\colon\dist(Q)\times Q\to \dist(\act)$ be a local scheduler. We can construct an MDP ${\cal M}'=(Q,\act^{Q},\delta')$ with $\delta'(q,f)=\delta(q,f(q))$ and a global scheduler $\sigma'\colon \dist(Q)\to\dist(\act^{Q})$ such that $\sigma'(d)(f)=\prod_{q\in Q}\sigma(d,q)(f(q))$. Then it is not hard to see that for any configurations $d\in\dist(Q)$, we have the transition in the config MC induced by ${\cal M}'$ and $\sigma'$ is
  \begin{align*}
    \delta_{\mathbb{X}^{\msmt{}},\sigma'}(d) &= \biggket{\sum_{f\in\act^{Q}}\sum_{q'\in Q}\sum_{q\in Q}\sigma'(d)(f) d(q)\delta'(q,f)(q')\ket{q'}}\\
    &= \biggket{\sum_{q'\in Q}\sum_{q\in Q}\sum_{f\in\act^{Q}}\left(\prod_{s\in Q}\sigma(d,s)(f(s))\right)d(q)\delta(q,f(q))(q')\ket{q'}}\\
    &= \biggket{\sum_{q'\in Q}\sum_{q\in Q}\sum_{a\in\act}\sum_{\substack{f\in\act^{Q},\\f(q)=a}}\bigg(\prod_{s\in Q}\sigma(d,s)(f(s))\bigg)d(q)\delta(q,f(q))(q')\ket{q'}}\\
    &= \biggket{\sum_{q'\in Q}\sum_{q\in Q}\sum_{a\in\act}d(q)\delta(q,a)(q')\sigma(d,s)(a)\sum_{\substack{f\in\act^{Q},\\f(q)=a}}\bigg(\prod_{s\in Q}\sigma(d,s)(f(s))\bigg)\ket{q'}}\\
    &= \biggket{\sum_{q'\in Q}\sum_{q\in Q}\sum_{a\in\act}d(q)\delta(q,a)(q')\sigma(d,s)(a)\ket{q'}}.
  \end{align*}%
  Since the config MC ${\cal M}'$ has the transition function as the MC induced by ${\cal M}$ and local scheduler $\sigma$ as defined in \cite{Akshay2023}. Hence, they must generate the same distribution of reachable configuration and thus concluding the proof.
\end{proof}

\myparagraph{Proof of \cref{thm:msct-sharpP-hard}}
\MSCTSharpP*{}
\begin{proof}
  We provide a reduction from the counting variant of the subset-sum problem which is known to be $\sharp$P-hard~\cite{book-papa}. It suffices for us to consider only set of positive integers, fix a set $S=\{a_{1},a_{2},...,a_{n}\in\mathbb{N}\}$ and a target sum $T$, we construct a simple MDP ${\cal M}=(Q=[n]\cup\{\top,\bot\},\act=\{0\},\delta)$, such that the transition acts as follows, $\forall i\in [n]$,
  $$\delta(i,0)(\top)=\delta(i,0)(\bot)=\frac{1}{2}$$
  and
  $$\delta(\bot,0)(\bot)=\delta(\top)(\top)=1.$$
  Finally, we fix the parameter $\xi=\frac{1}{2^{n}}$ and define the initial distribution $d_{0}$ target set $H$ as follows,
  $$d_{0}(n)=\frac{a_{n}}{\sum_{i=1}^{n}a_{i}}$$
  and
  $$H=\left\{d\in \dist(Q)\left| d(\top)=\frac{T}{\sum_{i=1}^{n}a_{i}},d(\bot)=1-d(\top)\right.\right\}.$$
  We start by noting that the system immediate stabilizes, i.e. reaches a configuration that remains unchanged upon transition, after one step. Hence, it suffices for us to consider just a single step of transition from $d_{0}$. Further there is only a single action $0$, hence there is only a trivial scheduler $\sigma$ can be defined, which satisfies $\sigma(d)=0$ for any $d\in\dist(Q)$. The reachability probability for the target set $H$ can be expressed as follows,
  $$\Pb_{\sigma}^{\msct{}}(\reach(d_{0}, H))=\sum_{d\in H}\delta_{\mathbb{X}^{\msct},\sigma}(d_{0})(d).$$
  From the transition function $\delta_{\mathbb{X}^{\msct{}}}$ defined in \cref{def:ConfigMCsInstances}, we can observe that each mapping $f\colon Q\times\act\to Q$ induces a transition in the configuration space. Further, since there is only one action $0$ and all the concentrations are on $[n]$, we can restrict our attention to the set of mappings $f:[n]\to\{\bot,\top\}$ as all other mappings are not possible (i.e. having a zero probability) from $d_{0}$. Hence, by expanding the definition of $\delta_{\mathbb{X}^{\msct{}},\sigma}$ and applying the above observation, we have
  $$\Pb_{\sigma}^{\msct{}}(\reach(d_{0}, H))=\sum_{d\in H}\sum_{\substack{f\in [n]^{\{\bot,\top\}}\colon \\ d=\dist f(d_{0})}}\prod_{(q,a)\in Q\times\act}\delta(q,a)(T(q,a)).$$
  Since we are limiting to the mapping $f\colon [n]\to \{\bot,\top\}$ that occurs with non-zero probability and all transitions from $[n]$ to $\{\bot,\top\}$ occur with probability $\frac{1}{2}$, the expression above can be simplified as follows,
  $$\Pb_{\sigma}^{\msct{}}(\reach(d_{0}, H))=\sum_{\substack{f\in [n]^{\{\bot,\top\}}\colon \\ \dist f(d_{0})\in H}}\frac{1}{2^{n}}.$$
  Lastly, note that each $f$ induces a subset over $[n]$, namely $\{q\in [n]\mid f(q)=\top\}$. By having $\dist f(d_{0})\in H$, it implies the follows,
  $$\sum_{k\in [n]:f(k)=\top}d_{0}(k)=\sum_{k\in [n]:f(k)=\top}\frac{a_{k}}{\sum_{i=1}^{n}a_{i}}=\frac{T}{\sum_{i=1}^{n}a_{i}}.$$
  Simplifying the equality gives,
  $$\sum_{k\in [n]:f(k)=\top}a_{k}=T.$$
  Therefore, for each $f\in [n]^{\{\bot,\top\}}$ such that $\dist f(d_{0})\in H$, it corresponds to a subset of $[n]$ that sums to $T$. Noting that the reverse holds trivially, for any subset $A\subset S$ that sum to $T$, one can define a map $f_{A}\colon [n]\to\{\bot,\top\}$ such that $f(k)=\top\iff a_{k}\in A$. Hence, we have
  $$\Pb_{\sigma}^{\msct{}}(\reach(d_{0}, H))=\frac{\abs{\{A\subseteq S\mid \sum_{a\in A} a = T\}}}{2^{n}}.$$
  Therefore, a binary search over the discrete domain of $\left\{\frac{i}{2^{n}}\mid i\in \mathbb{N}\cup\{0\}\right\}$ can be done to compute $\abs{\{A\subseteq S\mid \sum_{a\in A} a = T\}}$, which is the solution to subset-sum counting problem. The construction of the machine takes $O(n)$ time and the binary search add an extra factor of $O(n)$ to the overall complexity, which shows the reduction is polynomial overall, hence concluding our proof.
\end{proof}

\myparagraph{Proof of \cref{thm:msct-wScheduler-pos-hard}}
\MSCTPosHard*{}
\begin{proof}
  Here we give a brief outline of the proof. The goal is to construct a polynomial time reduction from the problem A stated in \cite{Akshay2015}. To simulate the run of a Markov chain under mass-interpretation, we need to rely on the mass-interpreted scheduler under the \msct{} semantics. Given a Markov Chain ${\cal M}=(Q,M)$ where $M$ denotes its transition matrices, since $Q$ is finite we are safe to assume that $Q$ is ordered by a label, i.e. $Q=\{q_{1},q_{2},...,q_{\abs{Q}}\}$ and the transition matrix $M$ is arranged in the same order.
  
  We construct an MDP $\widetilde{\cal M}=(\widetilde{Q}=Q\times\{0,1,2\}\cup\{\phi,\psi\}, \act=E\cup\{\bot\}, \delta)$, where $E=\{(p,q)\in Q\mid (M)_{p,q}>0\}$. The MDP consists of three copies of $Q$ and two auxiliary states $\phi,\psi$ as state space, every transition in the original Markov chain is now a unique action in the MDP $\widetilde{\cal M}$. The transition function $\delta$ is defined as follows, for any $p,q\in Q$,
  \begin{align*}
    (p,q)\in E\implies \delta((p,0),(p,q))(q,1) &=1,\\
    \delta((p,0),\bot)(\phi)&=1,\\
    e\in E\implies \delta((q,1),e)((q,1)) &= 1,\\
    \delta((q,1),\bot)((q,0)) &= 1,\\
    j\in[\abs{Q}]\land e\in E\implies \delta((q_{j},2),e)((q_{j+1},2)) &= 1,\\
    e\in E\implies \delta((q_{\abs{Q}},2),e)(\psi) &= 1,\\
    j\in[\abs{Q}]\implies \delta((q_{j},2),\bot)(\phi) &= 1,\\
    a\in\act=E\cup\{\bot\}\implies \delta(\phi,a)(\phi) &=1,\\
    a\in\act=E\cup\{\bot\}\implies \delta(\psi,a)((q_{1},2)) &=1.
  \end{align*} 
  
  The intuition of such construction is to simulate the transition of the Markov chain state-by-state, a transition $(p,q)$ in the Markov chain is simulated by applying the corresponding action $(p,q)\in E$ which maps the state from $(p,0)$ to $(q,1)$ in the MDP $\widetilde{\cal M}$. The set of state $\{(q_{j},2)\}_{j}$ is designed to keep track of the state simulated. Therefore, we can construct a scheduler $\sigma$ such that it choose the corresponding sets of actions based on the transition matrix $M$ and the current state to be simulated. Under suitable choice of initial configuration, the above construction of the MDP $\widetilde{\cal M}$ and scheduler $\sigma$ can simulate the run of the Markov chain ${\cal M}$ under mass-interpretation. For any initial configuration $d_{0}\in\dist(Q)$, we construct the initial configuration $d_{0}'\in\dist(\widetilde{Q})$ as follows, for any $q\in Q$,
  $d_{0}'((q,0))=\frac{d_{0}(q)}{2}, d_{0}'((q_{1},2))=0.5$.
  
  Finally, by inquiring the reachability probability with target set containing all distributions that the mass on $Q\times \{0\}$ satisfies the condition stated in problem A and threshold $\xi=1$, we can solve decide problem A and hence it follows from result in \cite{Akshay2015} that the inequalities version of the problem is Positivity-hard thus concluding our proof. Note that the constructed scheduler $\sigma$ is memoryless, hence the hardness results holds even when restricted to memoryless scheduler.
\end{proof}

\myparagraph{Proof of right-hand side of \cref{eqt:constraint-subm-inductive}}

This proof is to show that when the submartingale $R$ is linear, then the inequality in \cref{eqt:gamma-subm} can be equivalently expressed as the right-hand side of \cref{eqt:constraint-subm-inductive}. Concretely, we aim to show the following:
$$\sum_{d'\in\dist(Q)}R(d')\cdot\delta_{\mathbb{X}^{\msct{}},\sigma}(d)(d') = \sum_{a\in\act}\sigma(d)(a)R(M_{a}^{T}(d)).$$
\begin{proof}
  \begin{align*}
    &\sum_{d'\in\dist(Q)}R(d')\cdot\delta_{\mathbb{X}^{\msct{}},\sigma}(d)(d') \\
    & = \sum_{d'\in\dist(Q)}R(d')\sum_{\substack{f\in Q^{Q\times\act}\\d'=\dist f(d\otimes\sigma(d))}}\prod_{(p,a)\in Q\times\act}\delta(p,a)(f(p,a)) \\
    & =\sum_{f\in Q^{Q\times\act}}\left(r_{0}+\sum_{q\in Q}r_{q}\dist f(d\otimes\sigma(d))(q)\right)\prod_{(p,a)\in Q\times \act}\delta(p,a)(f(p,a))\\
    & =r_{0}\sum_{f\in Q^{Q\times\act}}\prod_{(p,a)\in Q\times \act}\delta(p,a)(f(p,a))\\
    & \quad+\sum_{f\in Q^{Q\times\act}}\left(\prod_{(p,a)\in Q\times \act}\delta(p,a)(f(p,a))\right)\sum_{q\in Q}r_{q}\dist f(d\otimes\sigma(d))(q)
  \end{align*}
  Note that the former summation sum to $1$, hence simplifying it and substituting the pushforward $\dist f$ gives the following,
  \begin{align*}
     & r_{0} + \sum_{f\in Q^{Q\times\act}}\sum_{q\in Q}r_{q}\left(\prod_{(p,a)\in Q\times \act}\delta(p,a)(f(p,a))\right)\sum_{(s,b)\in Q\times\act, f(s,b)=q}\sigma(d)(b)d(s) \\
     & =r_{0} + \sum_{q\in Q}r_{q}\sum_{f\in Q^{Q\times\act}}\sum_{(s,b)\in Q\times\act, f(s,b)=q}\sigma(d)(b)\left(\prod_{(p,a)\in Q\times \act}\delta(p,a)(f(p,a))\right)d(s)      \\
     & =r_{0} + \sum_{q\in Q}r_{q}\sum_{(s,b)\in Q\times\act}\sigma(d)(b)d(s)\delta(s,b,q)\sum_{\substack{f\in Q^{Q\times\act},\\f(s,b)=q}}\left(\prod_{(p,a)\neq (s,b)}\delta(p,a)(f(p,a))\right)\\
     & =r_{0} + \sum_{q\in Q}r_{q}\sum_{(s,b)\in Q\times\act}\sigma(d)(b)d(s)\delta(s,b,q)\sum_{f\in Q^{(Q\times\act)\setminus\{(s,b)\}}}\left(\prod_{(p,a)\in (Q\times\act)\setminus\{(s,b)\}}\delta(p,a)(f(p,a))\right)\\
     & =r_{0} + \sum_{q\in Q}r_{q}\sum_{(s,b)\in Q\times\act}\sigma(d)(b)\delta(s,b,q) d(s)                                                                         \\
     & =r_{0}+ \sum_{b\in\act}\sigma(d)(b)\sum_{q\in Q}r_{q}\sum_{s\in S}\delta(s,b,q) d(s) = \sum_{a\in\act}\sigma(d)(a)R(M_{a}^{T}(d))
  \end{align*}\end{proof}

\myparagraph{Proof of \cref{eqt:Antichain-Conjunction}}

Here we show that the left-hand side of \cref{eqt:Antichain-Conjunction} can be expressed as combination in linear inequalities. Concretely, we aim to show the following:
$$\small\begin{array}{l}\textstyle
  \left[d\in {\cal D}(Q)\setminus H\implies \phi(d)\right] \iff\bigwedge_{{\bf q}\in Q^{n}}\left[(d\in {\cal D}(Q))\land \bigwedge_{i=1}^{n}(d({\bf q}_{i})<x_{i}({\bf q}_{i}))\implies\phi(d)\right].
\end{array}$$
\begin{proof}
  \begin{align*}
    d\in {\cal D}(Q)\setminus H                         & \iff d \in {\cal D}(Q) \land \bigwedge_{i=1}^{n}d\notin \uparrow \{\vx_{i}\}                                                                \\
                                                        & \iff \bigwedge_{i=1}^{n}(d \in {\cal D}(Q))\land\bigvee_{q\in Q}d(q)<\vx_{i}(q)                                                             \\
                                                        & \iff \bigwedge_{i=1}^{n}\bigvee_{q\in Q}(d \in {\cal D}(Q))\land (d(q)<\vx_{i}(q))                                                          \\
                                                        & \iff \bigvee_{{\bf q}\in Q^{n}} (d \in {\cal D}(Q))\land \bigwedge_{i=1}^{n}(d({\bf q}_{i})<\vx_{i}({\bf q}_{i}))                           \\
    \therefore d\in {\cal D}(Q)\setminus H\implies \phi(d) & \iff \bigvee_{{\bf q}\in Q^{n}} \left[(d \in {\cal D}(Q))\land \bigwedge_{i=1}^{n}(d({\bf q}_{i})<\vx_{i}({\bf q}_{i}))\right]\implies \phi(d) \\
                                                        & \iff \phi(d)\lor\bigwedge_{{\bf q}\in Q^{n}}\neg\left((d \in {\cal D}(Q))\land \bigwedge_{i=1}^{n}(d({\bf q}_{i})<\vx_{i}({\bf q}_{i}))\right) \\
                                                        & \iff \bigwedge_{{\bf q}\in Q^{n}}\left[(d\in {\cal D}(Q))\land \bigwedge_{i=1}^{n}(d({\bf q}_{i})<x_{i}({\bf q}_{i}))\implies\phi(d)\right]
  \end{align*}
\end{proof}

\myparagraph{Proof of \cref{lem:csmt_pure_suffice}}
\CSMTpureSuffice*{}
\begin{proof}
  We prove by induction on the length of input to the scheduler. 

  The base case will be at input of length $1$. 
  Fix any configuration $d\in\dist(Q)$, in the case that $d\in H$, it is trivial that any schedulers have an equal probability of $1$ to reach $H$ from $d$. Now assume that $d\in\dist(Q)\setminus H$, let ${\sf Supp}(\sigma(d))\subseteq {\sf Act}$ denotes the support of $\sigma(d)$, which is the set of action that the scheduler has a non-zero probability of taking it. Consider that
  \begin{align*}
    \Pb_{\mathbb{X}^{\csmt{}},\sigma}(\reach(d,H)) & = \sum_{a\in {\sf Supp}(\sigma(d))}\Pb_{\mathbb{X}^{\csmt{}},\sigma}(\reach(d',H) | d \xrightarrow{a} d')\sigma(d)(a) \\
                                                   & \leq \max_{a\in {\sf Supp}(\sigma(d_0))}\Pb_{\mathbb{X}^{\csmt{}},\sigma}(\reach(d',H) | d \xrightarrow{a} d'),
  \end{align*}
  where $d\xrightarrow{a} d'$ denotes the event of the first transition take place is the action $a$ which transform $d$ to $d'$. Since the above holds for any $d$, hence we construct an alternative scheduler $\sigma'$ as follows, for any configuration $d\in\dist(Q)$,
  \begin{eqnarray*}
    \sigma'(d)&=&\argmax_{a\in {\sf Supp}(\sigma(d))}\Pb_{\mathbb{X}^{\csmt{}},\sigma}(\reach(d',H) | d \xrightarrow{a} d'),\\
    \sigma'(\ast)&=&\sigma(\ast).
  \end{eqnarray*}
  Then it naturally follows that
  \begin{align*}
    \Pb_{\mathbb{X}^{\csmt{}},\sigma'}(\reach(d,H)) & = \max_{a\in {\sf Supp}(\sigma(d_0))}\Pb_{\mathbb{X}^{\csmt{}},\sigma}(\reach(d',H)\mid d\xrightarrow{a} d') \\
                                                    & \geq \Pb_{\mathbb{X}^{\csmt{}},\sigma}(\reach(d,H)).
  \end{align*}
  The scheduler $\sigma'$ constructed is pure on the first step and achieve a larger reachability probability. The same construction can be applied inductively to obtain a pure scheduler and a larger reachability probability is always achieved.
\end{proof}

\myparagraph{Proof of \cref{thm:csmt-reach-undec}}
\CSMTundec*{}
\begin{proof}
  We construct a reduction from the emptiness problem of probabilistic finite automaton, which is formally stated as follows.
  \begin{problem}[Emptiness Problem of Probabilistic Finite Automaton]\label{prob:PFAEmpt}
  A probabilistic finite automaton (PFA) ${\cal A}=(\Sigma, Q, d_{0},\alpha, \{M_{a}\}_{a\in\Sigma})$ consists of a finite set of alphabets $\Sigma$, a finite set of states $Q$, an initial configuration $q_{0}\in\dist(Q)$, an accepting vector $\alpha\in\{0,1\}^{\abs{Q}}$ and a set of transition matrices $\{M_{a}\}_{a\in\Sigma}\subseteq \stoc(\abs{Q})$, which is a set of stochastic matrices labeled by the set of alphabets $\Sigma$.

  The emptiness problem of PFA asks given a PFA ${\cal A}$ and a cut-point $\xi\in [0,1]$, decide if there exists a word $a_1 a_2 ...a_n \in \Sigma^{*}$ such that
  $$d_{0}^{T} M_{a_1} M_{a_{2}}... M_{a_n} \alpha > \xi.$$
  \end{problem}

  Observe that the mathematical structure of a PFA is almost identical to that of an MDP, we can naturally construct a corresponding MDP ${\cal M}_{\cal A}$ from a PFA ${\cal A}$ by treating the set of alphabets $\Sigma$ as the set of actions. A target set $H=\{d\in {\cal D}(Q)\mid \alpha^{T}d > \xi\}$ can be constructed so that if $H$ is reachable deterministically, then it implies there is a sequence of actions $\bar{a}=a_{1}a_{2}...a_{n}$ and a sequence of configurations $d_{0} d_{1} ... d_{n}$ such that for any $j\in [n]$,
  $$\delta_{\mathbb{X}^{\csmt{}},\sigma_{d_{0}\bar{a}}}(d_{j-1})(d_{j})=1\text{ and }d_{n}\in H.$$
  From \cref{def:ConfigMCsInstances}, it is not hard to see that
  $$\delta_{\mathbb{X}^{\csmt{}},\sigma_{d_{0}\bar{a}}}(d_{j-1})(d_{j})=1\iff d_{j}=d_{j-1}^{T}M_{a_{j}}.$$
  Hence, the reachability can be rephrased as follows,
  $$(d_{0}^{T} M_{a_1} M_{a_{2}}... M_{a_n})^{T}\in H\iff d^{T} M_{a_1} M_{a_{2}}... M_{a_n}\alpha>\xi,$$
  which implies that $a_{1} a_{2} ... a_{n}$ is an accepting word in the original PFA ${\cal A}$. Therefore, one can clearly see that by solving \ref{prob:csmt-reach-deter} with the MDP ${\cal M}_{\cal A}$ is equivalent to finding a counter example for the emptiness problem of the PFA ${\cal A}$. This concludes the proof.
\end{proof}





\myparagraph{Proof of \cref{lem:pullback-upC}}
\PullbackUpCset*{}
\begin{proof}
  The proof is straightforward. Since $H$ is upward-closed, thus for any $\vx\in H$ and $\vy\in [0,1]^{n}$, $$\vy\succeq \vx\implies \vy\in H.$$ 
  But then since $M$ is a stochastic matrix, meaning that all of its elements are non-negative. Hence, for any $\vx,vy\in [0,1]^{n}$, we have $\vy\succeq \vx\implies M^{T}\vy\succeq M^{T}\vx$. Combing the two observations gives the following: for any $\vx\in M^{\sharp} H$,
  $$\vy\succeq \vx\implies M^{T}\vy\in H.$$
  which shows that the set $M^{\sharp} H$ is upward-closed as well.
\end{proof}

\myparagraph{Proof of \cref{thm:ReachConfiglfp}}
\ReachConfLfp*{}
\begin{proof}
We start by stating the following standard result in lattice theory. When we use it later, the parameters will be as follows:  $X$ is the configuration space, and $f$ maps a configuration to the set of successor configurations taken over all schedulers (which we can assume to be pure, see \cref{lem:csmt_pure_suffice}).

\begin{proposition}\label{prop:latticeReachable}
  Let $X$ be a set, $f\colon X\to 2^X$ be a function (modeling forward and nondeterministic \emph{dynamics}), and $H\subset X$ be a target set.
  Consider the (complete) subset lattice $2^{X}$ ordered by inclusion $\subset$, and the operator
  \begin{math}
    f^{*}\colon 2^{X} \to 2^{X},
    f^{*}Q = T\cup\{x\mid Q\cap f(x)\neq\emptyset \}.
  \end{math}
  Intuitively, $x\in f^{*}(Q)$ means that $x\in T$ or $x$ is one-step reachable to $Q$.
  Then we have the following.
  \begin{enumerate}
    \item The operator $f^{*}$ is monotone. Moreover, it preserves supremums (i.e.\ unions): $f^{*}(\bigcup_{i}Q_{i})=\bigcup_{i}f^{*}(Q_i)$.
    \item The repeated application of $f^{*}$ to the minimum $\bot=\emptyset$ gives rise to a chain
          \begin{math}
            \bot
            \;\subset\;
            f^{*}(\bot)
            \;\subset\;
            (f^{*})^{2}(\bot)
            \;\subset\;
            \cdots.
          \end{math}
    \item The last chain
          stabilizes at $\omega$, meaning that
          \begin{math}
            f^{*}\bigl((f^{*})^{\omega}(\bot)\bigr)
            = (f^{*})^{\omega}(\bot),
          \end{math}, where
          \begin{math}
            (f^{*})^{\omega}(\bot)
            =
            \textstyle\bigcup_{n\in\mathbb{N}}(f^{*})^{n}(\bot).
          \end{math}
          Moreover, $(f^{*})^{\omega}(\bot)$ is the least fixed point
          $\mu f^{*}$
          of $f^{*}$.
    \item The least fixed point $\mu f^{*}$ coincides with the reachable sets, that is,
          \begin{math}
            \mu f^{*}
            =
            \{x\mid \text{$x$ has an $f$-path  to $T$}\}.
          \end{math}
          Here an $f$-path is defined to be a sequence $x_{0}x_{1}\dotsc x_{n}$ where $x_{i+1}\in f(x_{i})$ for each $i\in [0,n-1]$.

  \end{enumerate}

\end{proposition}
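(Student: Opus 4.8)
The plan is to treat parts 1--3 as the Kleene fixpoint theorem specialized to the complete lattice $(2^{X},\subseteq)$, and to obtain part 4 by unrolling the iterates of $f^{*}$. For part 1, monotonicity is immediate: if $Q\subseteq Q'$ and $x\in f^{*}(Q)$, then either $x\in H$ or $Q\cap f(x)\neq\emptyset$, and in both cases $x\in f^{*}(Q')$ (using $Q\cap f(x)\subseteq Q'\cap f(x)$ in the second). Preservation of unions follows because the predicate ``$Q\cap f(x)\neq\emptyset$'' distributes over unions, so $f^{*}\bigl(\bigcup_{i}Q_{i}\bigr)=\bigcup_{i}f^{*}(Q_{i})$ for every nonempty family $(Q_{i})_{i}$; this is all that is needed below, since the chain we iterate along is nonempty and directed. (For the empty family the displayed equality holds only up to the constant term $H$, which is harmless here.)

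For part 2, the chain $\bot\subseteq f^{*}(\bot)\subseteq (f^{*})^{2}(\bot)\subseteq\cdots$ is established by induction on $n$: the base case $\bot=\emptyset\subseteq f^{*}(\bot)$ holds since $\emptyset$ is the least element of $2^{X}$, and the inductive step applies the monotone map $f^{*}$ to $(f^{*})^{n}(\bot)\subseteq (f^{*})^{n+1}(\bot)$. For part 3, set $U=\bigcup_{n\in\mathbb{N}}(f^{*})^{n}(\bot)$. By continuity along this chain (part 1), $f^{*}(U)=\bigcup_{n}f^{*}\bigl((f^{*})^{n}(\bot)\bigr)=\bigcup_{n\ge 1}(f^{*})^{n}(\bot)=U$, the last step using $(f^{*})^{0}(\bot)=\emptyset\subseteq (f^{*})^{1}(\bot)$; hence $U=(f^{*})^{\omega}(\bot)$ is a fixed point. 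If $V$ is any fixed point, then $\bot\subseteq V$ and, by monotonicity and induction, $(f^{*})^{n}(\bot)\subseteq (f^{*})^{n}(V)=V$ for all $n$, so $U\subseteq V$; therefore $U=\mu f^{*}$.

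For part 4, I would prove by induction on $n\ge 1$ that $(f^{*})^{n}(\bot)$ is exactly the set of $x\in X$ from which some $f$-path with at most $n-1$ edges reaches $H$. The base case $f^{*}(\emptyset)=H$ is the set of $x$ reaching $H$ with zero edges. For the step, $x\in (f^{*})^{n+1}(\bot)=f^{*}\bigl((f^{*})^{n}(\bot)\bigr)$ iff $x\in H$ or some $x'\in f(x)$ lies in $(f^{*})^{n}(\bot)$; by the induction hypothesis this says precisely that $x$ reaches $H$ by an $f$-path with at most $n$ edges. Taking the union over $n$ and invoking part 3 gives $\mu f^{*}=U=\{x\mid x\text{ has an }f\text{-path to }H\}$.

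There is no real obstacle here --- the argument is textbook Kleene iteration together with a routine unrolling. The only points requiring a little care are the mild caveat about the empty union in the ``preserves suprema'' clause of part 1 (harmless, as only continuity along the nonempty directed chain of part 2 is used), and keeping the off-by-one bookkeeping straight in part 4 between the number of applications of $f^{*}$ and the number of edges in the witnessing $f$-path.
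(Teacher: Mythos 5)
Your proof is correct and follows essentially the same route as the paper's: a direct check of monotonicity and union-preservation for part 1, induction along the chain for part 2, the Kleene fixed-point argument for part 3, and an unrolling induction identifying $(f^{*})^{n}(\bot)$ with the bounded-length reachability sets for part 4. Your extra care about the empty-family case of union-preservation and the edge-counting bookkeeping is a welcome refinement but does not change the argument.
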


  Item 1 is shown easily, following the definition of $f^{\sharp}$. Item 2 is shown by induction: (base case) the first inclusion holds since $\bot$ is the minimum, and (step case) once we show $(f^{*})^{i-1}(\bot)\subset (f^{*})^{i}(\bot)$ for $i$,  $(f^{*})^{i}(\bot)\subset (f^{*})^{i+1}(\bot)$ for $i+1$ follows immediately by the monotonicity of $f^{*}$.

  Item 3 is a consequence of the Kleene fixed point theorem (see e.g.~\cite{AbramskyJ94}). Concretely, it is easy to see that $(f^{*})^{\omega}(\bot)$ is a fixed point, using its definition by a union and that $f^{*}$ is union-preserving. To show that it is the \emph{least} fixed point, let $y$ be any fixed point ($y=f^{*}(y)$). One can  show that $(f^{*})^{n}(\bot)\subseteq y$ for any $n\in\mathbb{N}$; this is by induction on $n$. Then $\mu f^{*}$, being the supremum of the left-hand side, is also below $y$.

  For Item 4, it suffices to show the following (by the above observations): for each $x\in X$ and $n\in\mathbb{N}$, we have
  \begin{displaymath}
    (f^{*})^{n}(\bot)
    \;=\;
    \{x\mid \text{$x$ has an $f$-path of length $\le n$ to $T$}\}.
  \end{displaymath}
  This is easily proved by induction on $n$.

  Now we proceed to prove the \cref{thm:ReachConfiglfp}. Let $S^{i}$ denote the value of $S$ after the $i$-th iteration. It suffices to show that $\uparrow S^{i}\subseteq B^{i}(\bot)$. The base case $i=0$ is trivial as $B^{0}(\bot)=\bot=\uparrow S^{0}$, we now focus on the inductive case.

  \noindent
  Inductively, assume $\uparrow S^{i}\subseteq B^{i}(\bot)$ for some $i\in\mathbb{N}$, then for any $\vx\in \uparrow S^{i+1}$, by construction we have either $\vx\in H$, or that there exists an action $a\in\act$ and two vectors $\vx'\in S^{i},\vy^{\ast}\in[0,1]^{\abs{Q}}$ such that $\vx\succeq\vy^{\ast}$ and $M_{a}\vy^{\ast}\in \uparrow S^{i}$. Combining two gives $M_{a}\vx\in\uparrow S^{i}\subseteq B^{i}(\bot)$. In the former case we have $\vx\in H\subseteq B^{i+1}(\bot)$, while for the latter case $M_{a}\vx\in B^{i}(\bot)\implies \vx\in M_{a}\sharp B^{i}(\bot)\implies \vx\in B^{i+1}(\bot)$. In both case we have $\vx\in B^{i+1}(\bot)$ which finishes the induction. The rest follows from \cref{prop:latticeReachable}.
\end{proof}

\myparagraph{Proof of \cref{lem:anti-approx}}
\ApproxLowerSet*{}
\begin{proof}
  We start with the former statement, we aim to show for any $k\in [K]$, $\vy_{k}^{\ast}\in  \floor{\{\vy\in[0,1]^{\abs{Q}}\mid M\vy\succeq\vx \}}.$
On the contrary, assume there exists some $k\in [K]$ and $\vy_{k}'\neq \vy_{k}^{\ast}\in [0,1]^{n}$ such that $\vy_{k}'\preceq \vy^{\ast}_{k}$. Then it is not hard to check that $\vy_{k}'$ is also a feasible solution to the $k$-th minimization problem and $\mathbf{1}^{T}\vy_{k}'<\mathbf{1}^{T}\vy_{k}^{\ast}$. This contradicts with $\vy^{\ast}_{k}$ being the minimum solution. The latter statement is easy: since the set $\{\vy\in[0,1]^{\abs{Q}}\mid M\vy\succeq\vx \}$ is finitely-generated, it admits a finite antichain representation, and since there is a non-zero chance for picking the right $\mathbf{q}$ such that the linear program has a solution, hence the probability of finding all the antichain elements can be as close to $1$ as possible by fixing $K$ to be the number of antichain elements and let $L$ to be as large as possible.
\end{proof}

\section{Experiment Setup and Results}\label{sec:ExperAppendix}
Both algorithms were implemented in Python 3.12, utilizing {\texttt PySMT} 0.9 \cite{pysmt2015} and {\texttt PolyQEnt} 0.8 \cite{polyqent} for constraint collection and quantifier elimination in the template-based algorithm for the \msct{} semantics, as well as {\texttt Pyomo} 6.8 \cite{pyomo} for solving the optimization problem in the antichain-based algorithm for the \csmt{} semantics. Additionally, {\texttt z3} \cite{10.1007/978-3-540-78800-3_24} and {\texttt glpk} \cite{glpk} have been used as backend solvers. Both experiments were conducted on a high-performance server (AMD EPYC 7702P processor with 128 GB of RAM).

\myparagraph{Models} Each algorithm is tested against 3 to 4 models and each of them are tested for 10 trials to obtain the average runtime. The models for testing include the running example from \cite{Akshay2023}, the pharmacokinetics system from \cite{DBLP:conf/qest/ChadhaKVAK11}, and our motivating \emph{casino} and \emph{Exam} example from \cref{sec:intro}, we refer them as {\sf Toy}, {\sf Pharmacokinetics}, {\sf Casino} and {\sf Exam} respectively. For the model {\sf Toy} and {\sf Pharmacokinetics}, the input is chosen as the same as in \cite{Akshay2023} and \cite{DBLP:conf/qest/ChadhaKVAK11} respectively; while for the model {\sf Casino} and {\sf Exam}, the initial configuration and target set are fixed and the transition probabilities are randomly generated.

\myparagraph{Parameters} For the antichain-based algorithm, since the algorithm is not guarantee to terminate, we set a maximum bound of 100 loop to limit the testing time. Also recall that the algorithm involves a randomization process which takes two parameters $K$ and $L$. We set $K=3$ for the {\sf Toy} and {\sf Pharmacokinetics} models, and $K=5$ for the {\sf Casino} model. The parameter $L$ is fixed to $1$ for all models. For the template-based algorithm, we set $\gamma$ for the submartingale to be $1-10^{-5}$. When applying the Handelman's theorem for the quantifier elimination, only polynomials up to degree 4 are considered.

\begin{table}[tbp]\centering
  \caption{Experiment results, for the antichain-based algorithm on \csmt{} semantics (left), and for the template-based algorithm on \msct{} semantics (right). For the antichain algorithm we included the number of runs that successfully show the target is reachable. The average runtime is in seconds.}
  \scalebox{.8}{\begin{minipage}{.6\textwidth}
      \begin{tabular}{ccc}
                       & \msct{} semantics &                \\\cmidrule{2-2}
        Models         & AvgTime(s)        & $\#$ Success \\\midrule
        {\sf Toy}      & 0.4767            & 10             \\
        {\sf Pharmaco} & 0.3066            & 10             \\
        {\sf Exam-2-2} & 5.648             & 7              \\
        {\sf Exam-5-2} & 3.479             & 10             \\\bottomrule
      \end{tabular}
    \end{minipage}
  }
  \hfill
  \scalebox{.8}{\begin{minipage}{.6\textwidth}
      \begin{tabular}{cc}
                         & \csmt{} semantics\\\cmidrule{2-2}
        Models           & AvgTime(s)\\\midrule
        {\sf Toy}        & 50.44\\
        {\sf Casino-5-2} & 1.278\\
        {\sf Casino-2-3} & 1.360\\\bottomrule
      \end{tabular}
    \end{minipage}
  }
  \label{table:msct-exper}
  \label{table:csmt-exper}
\end{table}

\myparagraph{Results} We summarize our results in \cref{table:csmt-exper,table:msct-exper}. Since our algorithm is currently unoptimized and input-sensitive, runtime is not our primary focus. For the antichain-based algorithm, the {\sf Toy} and {\sf Pharmacokinetics} models serve as simple validation cases, with the latter featuring a more complex transition structure; while the {\sf exam} model is a more challenging test case. Our algorithm efficiently solves both. 

For the template-based algorithm, our algorithm requires significant time to solve the {\sf Toy} model, which aligns with observations in \cite{Akshay2023}, where additional hints were necessary for the solver to achieve reasonable performance. Conversely, our algorithm performs exceptionally well on the two {\sf Casino} models. A potential explanation for this discrepancy is that the winning probabilities for each machine are sampled uniformly, randomly, and independently. As a result, a constant uniform scheduler is likely sufficient to reach the desired target set efficiently.
\end{document}